\documentclass[12pt]{article}
\usepackage[english]{babel}
\usepackage[left=2cm,right=2cm,top=2cm,bottom=2cm]{geometry}
\usepackage{siunitx}
\RequirePackage[OT1]{fontenc}
\RequirePackage{amsmath,amssymb,latexsym,subfigure} %amsthm,
\usepackage{epsfig,graphics}
\numberwithin{equation}{section}
\newcommand{\cqd}{\hfill\rule{2mm}{2mm}}
\newcommand{\car}{\mbox{\rm{1\hspace{-0.10 cm }I}}}

\newtheorem{teor}{Theorem}[section]

\newtheorem{rem}{Remark}[section]
\newtheorem{lema}{Lemma}[section]

\title{A Robbins-Monro algorithm for non-parametric estimation of  NAR process with Markov-Switching: asymptotic normality}

\author{\textbf{Lisandro J. Ferm\'in$^{1}$, Ricardo R\'{\i}os$^{2}$, Luis-\'Angel Rodr\'iguez$^{3}$} \\
     \small $^{1}$ {Aix-Marseille University, CNRS, Aix-Marseille School of Economics, France}\\
    \small $^{2}$ {Escuela de Matem\'aticas, Facultad de Ciencias, Universidad Central de 
                    Venezuela, Caracas, Venezuela}\\
    \small $^{3}$ {Departamento de Matem\'aticas, FACYT, Universidad de Carabobo, Valencia,
            Venezuela}}
\begin{document}

\maketitle

\begin{abstract} This paper is the second part of our study on the non-parametric estimation of MS-NAR processes started with \cite{Rico-Lisandro-Luis}. We consider the Nadaraya-Watson type regression function estimator for non-linear autoregressive Markov switching processes.
In this context the regression function estimator is interpreted as a solution of a local weighted 
%least-square problem which does not admit a closed-form  solution in the case of hidden Markov switching. 
We have introduced, in the first work, a restoration-estimation Robbins-Monro algorithm to approximate the estimator, and we  proved identifiability of model and the consistency of the non-parametric estimator. In this work, we obtain the central limit theorem for the non-parametric estimator, whether the Markov chain is observed or not. Finally, we present a detailed simulation study illustrating the performances of our estimation procedure.

% Please include a maximum of seven keywords
%\keywords{keyword 1, \emph{keyword 2}, keyword 3, keyword 4, keyword 5, keyword 6, keyword 7}
\noindent{\bf Keywords.}{ Non-parametric kernel estimation, Autoregressive process, Markov switching, Robbins-Monro approximation, Asymptotic normality.}
\end{abstract}

%%%%%%%%%%%%%%%%%%%%%%%%%%%%%%%%%%%%%%%%%%%%%%%%%%%%%%%%%%%%%%%%
%%%%%%%%%%%%%%%%%%%%%%%%%%%%%%%%%%%%%%%%%%%%%%%%%%%%%% Section 1

\section{Introduction}\label{sect.1}

\subsection{Overview of the literature}

In this paper, we are interested in nonparametric estimation for Markov Switching Non-linear Autoregressive (MS-NAR) processes. In this framework, the observation $Y_k$ at each time step is given by a noisy measurement of an unknown function of the previous observation $Y_{k-1}$. The unknown function is selected from a dictionary of functions identified by a label $X_k$ such that $\{X_k\}_{k\geq1}$ is a homogeneous Markov chain with state space $\{1,\ldots,m\}$.

Markov switching autoregressive processes can be viewed as a combination of hidden Markov models (HMMs) and threshold regression models. These switching autoregressive processes, introduced into the econometric context by Goldfeld and Quandt in \cite{Goldfeld}, have become quite popular in the literature. For instance, they have been employed by Hamilton \cite{Hamilton} in the analysis of the Gross Domestic Product (GDP) of the USA during both contraction and expansion regimes.

Parametric estimation theory in MS-NAR is well-developed; in particular, consistency and normality were proven in \cite{douc}. See \cite{cappe-moulines-ryden} for a recent state of the art in HMMs; this book presents efficient algorithms that allow for the computation of the likelihood function and the construction of practical inference methods. A new approach to demonstrate the consistency of the maximum likelihood estimator for MS-NAR was obtained in \cite{Marcano-Lisandro-Luis}, based on a uniform exponential memory loss property for the prediction filter (approximated by a filter with finite memory) and the $\alpha$-mixing property of the MS-NAR process.

In order to avoid misspecification that could arise from the use of parametric methods, research on nonparametric estimation methods for Markov models and HMMs has received significant attention. Identifiability of nonparametric mixtures of two multivariate distributions was investigated by Hall and Zhou \cite{Hall-Zhou} and extended by Allman \textit{et al.} \cite{Allman}. They showed generic identifiability of various latent-state models, including hidden Markov models with finite-valued observations. Their work relied on a seminal result in \cite{Kruskal} regarding the identification of factors in three-way tables. 

Nonparametric identification and maximum likelihood estimation for finite-state hidden Markov models were studied in \cite{Alexandrovich}. See also Gassiat and Rousseau \cite{gassiat4} for translation hidden Markov models with finite state spaces. They show that all model parameters (including infinite-dimensional parameters) are identifiable provided that the matrix defining the joint distribution of two consecutive latent variables is non-singular and the translation parameters are distinct. It is shown in \cite{gassiat5} that these fully nonparametric translation models are identifiable with respect to both the distribution of the latent variables and the distribution of the noise, generally under a light tail assumption on the latent variables. Identifiability in the context of nonparametric estimation of MS-NAR models is proven in \cite{Rico-Lisandro-Luis}. More recently, Balsells \textit{et al.} \cite{Balsells} established the identifiability of the latent variables and non-linear mappings in Switching Dynamical Systems up to affine transformations, by leveraging identifiability analysis techniques from identifiable deep latent variable models.

The use of semiparametric or non-parametric models to estimate mixtures of independent or Markov regime regressions has garnered recent interest, especially in econometric applications. For instance, \cite{Nademi} used a semiparametric Markov switching AR-ARCH model to forecast OPEC, WTI, and Brent crude oil prices. They assumed that the regression function factors into a non-linear component dependent on a parameter and a non-parametric adjustment. Their estimation was implemented via adaptations of the EM algorithm, supported by an exhaustive simulation study. A
semiparametric estimation for the MS-NAR model was studied in \cite{luis1}, where the authors
considered a conditional least-square approach for the parameter estimation and a kernel density estimator for the innovation density probability.

Similarly, in oceanography and meteorology, related models such as nonparametric State-space models (SSMs) have been applied to data assimilation (see \cite{Chau} and references therein). These authors proposed combining two techniques: the Stochastic Expectation–Maximization (SEM) algorithm and Sequential Monte Carlo (SMC) approaches, for non-parametric estimation in state–space models, accompanied by a simulation study. However, neither of these works addresses the theoretical properties of their respective algorithms. Therefore, we are interested in presenting a study that complements our work in \cite{Rico-Lisandro-Luis}.

\subsection{Contributions}

We consider  non-parametric regression estimators. That is, for $i=1,\ldots,m$, we define  a Nadaraya-Watson
type kernel estimator, given by
\begin{equation}\label{NWEstim}
\hat{r}_{i,n}(y)=\frac{\sum_{k=0}^{n-1}Y_{k+1}K\left(\frac{y-Y_k}{h}\right)\car_{i}(X_{k+1})}
{\sum_{k=0}^{n-1}K\left(\frac{y-Y_k}{h}\right)\car_{i}(X_{k+1})}.
\end{equation}
This Nadaraya-Watson type estimator was introduced for HMM models in \cite{Harel-Puri}. In our previous work, \cite{Rico-Lisandro-Luis},  we have studied the uniform consistency, assuming that a realization of the complete data $\{Y_0,(Y_k,X_k)\}_{k=1,\ldots, n}$ is known; i.e. 
\begin{equation*}
\sup_{y\in\mathrm{C}}|\hat{r}_{i,n}(y)-r_i(y)|\to 0, \, \, \mbox{a.s.} \,\,(\mbox{when } n\to\infty),
\end{equation*}
where  $\mathrm{C}$ is a compact subsets of $\mathbb{R}$.

Furthermore, when $X$ is a hidden Markov chain, the regression functions estimators $\hat{r}_{i,n}(y)$, for each $y$ and $i=1: m$, 
are interpreted as solutions $\theta=(\theta_1,\ldots,\theta_m)$ of the local weighted least-squares 
\begin{eqnarray}\label{potencialU}
U(y, Y_{0:n}, X_{1:n},\theta) = \frac{1}{nh}\sum_{k=0}^{n-1}\sum_{i=1}^{m}K\left(\frac{y-Y_k}{h}\right)\car_i(X_{k+1})(Y_{k+1}-\theta_i)^2,
\end{eqnarray}
where the weights are specified by the kernel $K$, so that the observations $Y_k$ near $y$ 
have the largest influence on the estimate of the regression function at $y$. That is,
\begin{equation*}
\hat{\theta}(y)=\underset{\theta\in\Theta\subset \mathbb{R}^m}{\mbox{argmin}}\; U(y, Y_{0:n}, X_{1:n},\theta).
\end{equation*} 
The solution of this problem must be approximated because it does not admit a closed-form.
We have proposed a recursive algorithm for the estimation of the regression
functions $r_{i}$ with a Monte-Carlo step which restores the missing data $X_{1:n}$
by $X_{1:n}^{t}$, and a Robbins-Monro procedure which allows us to estimate the unknown value of $\theta$.
This approximation minimizes the potential $ U $ using the gradient algorithm, for each fixed $y$ ,
$$
\theta^{t}=\theta^{t-1}-\gamma_t\nabla_{\theta} U\left(y, Y_{0:n}, X_{1:n}^t,\theta^{t-1}\right),
$$
where $\{\gamma_t\}$ is any sequence of real positive numbers decreasing to $0$, and $\nabla_{\theta} U$, the gradient of $U$ with respect to the vector $\theta\in \mathbb{R}^m$. The consistency of the estimator obtained by our Robbins-Monro algorithm is proved in 
 \cite{Rico-Lisandro-Luis}.

We establish in the present work the asymptotic normality of the Naradaya-Watson kernel estimator in the case of complete data, 
as well as the asymptotic normality of the estimator related to our Robbins-Monro algorithm.

The paper is organized as follows. In Section \ref{sect.2}, we pres\-ent the preliminary results. We give the general conditions on the model ensuring
the existence of a probability density distribution and the model stability. Moreover, sufficient conditions are given to ascertain the strong mixing dependence and the model identifiability. 
In Section \ref{sect.3}, we prove the asymptotic normality for the non-parametric estimators in the case of complete data as well 
as in the case of partially observed data. 
We recall the uniform consistency of the Naradaya-Watson kernels estimator in the case of complete data, and the consistency of estimator related to our Robbins-Monro algorithm. 
Section \ref{sect.4} contains some numerical experiments on simulated data illustrating the 
performance of our non-parametric estimation procedure. The proofs are deferred to the Appendix \ref{appA}.

%%%%%%%%%%%%%%%%%%%%%%%%%%%%%%%%%%%%%%%%%%%%%%%%%%%%%%%%%%%%%%%%
%%%%%%%%%%%%%%%%%%%%%%%%%%%%%%%%%%%%%%%%%%%%%%%%%%%%%% Section 2

\section{Preliminary}\label{sect.2}

We consider a particular type of switching non-linear autoregressive models with Markov regime, called Markov Switching Non-linear Autoregressive (MS-NAR)
processes and which is defined for $k\geq 1$ by
\begin{equation}\label{modelo}
    Y_k=r_{X_k}(Y_{k-1})+e_k
\end{equation}
where $\{e_k\}_{k\geq1}$ is a sequence of i.i.d. random variables,  the sequence $\{X_k\}_{k\geq1}$
is an homogeneous Markov chain with state space $\{1,\ldots,m\}$, and $r_1(y),\ldots, r_m(y)$ are the regression functions, assumed
to be unknown.

We denote by $A$ the probability transition matrix of the Markov chain $X=\{X_k\}_{k\geq1}$, i.e. $A=(a_{ij})_{i,j=1:m}$, with $a_{ij}=\mathbb{P}(X_k=j|X_{k-1}=i)$. 
We assume that the variable $Y_0$, the Markov
chain $X$ and the sequence $\{e_k\}_{k\geq1}$ are mutually independent.

This model is a generalization of the switching linear autoregressive model with Markov regime, also known as MS-AR model. When the regression functions $r_i$ are linear the MS-NAR process is simply an MS-AR model. 

We are obliged to recall some features on the MS-NAR model. 

%%%%%%%%%%%%%%%%%%%%%%%%%%%%%%%%%%%%%%%%%%%%%%%%%%%%%% Section 2.1 

\subsection{Stability and existence of moments}

This section includes known
results given by \cite{Yao}. Our aim is to summarize the sufficient conditions 
which ensure the existence and uniqueness of a strictly stationary ergodic solution for the model, as well as the existence for the respective stationary distribution of  moments of order $s\geq 1$. 

\begin{enumerate}
\item[\bf E1] The Markov chain $\{X_k\}_{k\geq 1}$ is positive recurrent with probability transition matrix $A=(a_{ij})_{i,j=1:m}$. Hence, it has an invariant distribution $\mu=(\mu_1,\ldots,\mu_m)$.

\item[\bf E2] The functions $r_i$, for $i=1,...,m$, are continuous. 

\item[\bf E3] There exist positive constants $\rho_i, b_i$, $i=1,...,m$,  such that $|r_i(y)|\leq \rho_i|y|+b_i$, for $y\in\mathbb{R}$.

\item[\bf E4] $\sum_{i=1}^m \mu_i \log \rho_{i}<0$. 

\item[\bf E5] $\mathbb{E}(|e_1|^s)<\infty$, for some $s\geq 1$.

\item[\bf E6] The sequence $\{e_k\}_{k\geq1}$  of random variables has a common density probability function $\Phi(e)$ 
with respect to the Lebesgue measure.

\item[\bf E7] The density probability function $\Phi(e)$  is everywhere positive on $\mathbb{R}$.
\end{enumerate}

Conditions  E1-E7 ensure the existence and uniqueness of a strictly stationary ergodic solution for the model, see results given by Yao and Attali (1999 \cite{Yao}. In addition they also consider hypothesis for the existence of moments; for instance, if the spectral radius of $Q_s=\left( \rho_j^s \, a_{ij}\right)_{i,j=1\ldots m}$ is strictly 
less than 1, with $s$ the same as in condition E5, then $\mathbb{E}(|Y_k|^s)<\infty$. They argue the necessity and sufficiency of all the hypotheses considered in their results. 

%%%%%%%%%%%%%%%%%%%%%%%%%%%%%%%%%%%%%%%%%%%%%%%%%%%%%%  Section 2.2 

\subsection{Probability density}

We present a technical result which states the existence of conditional densities of the MS-NAR model and gives a factorization of this density probability. This result is relevant in the frame of kernel estimation.

Let us first introduce some notations: $V_{1:n}$ stands for the random vector $(V_1,\ldots,V_n)$, and $v_{1:n}=(v_1,\ldots,v_n)$ is
a realization of the respective random vector, and $p(V_{1:n}=v_{1:n})$ denotes the density distribution of random vector $V_{1:n}$ evaluated at $v_{1:n}$. 

%The symbol $\car_B(x)$ denotes the indicator function of set $B$, which assigns the value $1$ if $x\in B$ and  $0$ otherwise.

We consider the following assumption:

\begin{enumerate}
\item[\bf D1] The random variable $Y_0$ has a density function with respect to Lebesgue measure, denoted by $p(Y_0=y_0)$.
\end{enumerate}

In \cite{Rico-Lisandro-Luis} we prove that the random vector $(Y_{0:n},X_{1:n})$ admits the joint probability density 
\begin{eqnarray}
\label{exisdensidad_i}
\lefteqn{p(Y_{0:n}=y_{0:n},X_{1:n}x_{1:n})}\\
\nonumber 
&=&\Phi(y_n-r_{x_n}(y_{n-1}))a_{x_{n-1}x_n}\; \cdots\; \Phi(y_2-r_{x_2}(y_{1}))a_{x_{1}x_2}
\Phi(y_1-r_{x_1}(y_{0}))\mu_{x_1}p(Y_0=y_0),
\end{eqnarray}
with respect to the product measure $\lambda\otimes\mu_c$, where $\lambda$ and $\mu_c$ denote Lebesgue and counting measures, respectively. Moreover, if $\Phi$ is a bounded density, then the joint density of $(Y_k,Y_{k'})$
satisfies 
\begin{equation}\label{exisdensidad_ii}
p(Y_k=y_k,Y_{k'}=y_{k'})\leq \|\Phi\|_{\infty}^2\, \mbox{ for }\,  k,k'\geq 1,\, \mbox{and}\,
p(Y_0=y_0,Y_{k'}=y_{k'})\leq \|\Phi\|_{\infty}\,\mbox{ for }\, k'\geq 1.
\end{equation}

%%%%%%%%%%%%%%% Lemma 2.1 
%\begin{lema}\label{exisdensidad}[Ferm\'in, R\'ios and Rodr\'iguez (2017)]
%Under conditions D1 and E6,
%\begin{itemize}
%\item[i)] The random vector $(Y_{0:n},X_{1:n})$ admits the joint probability density 
%\begin{eqnarray*}
%p(Y_{0:n}=y_{0:n},X_{1:n}=x_{1:n})&= &\Phi(y_n-r_{x_n}(y_{n-1}))a_{x_{n-1}x_n}\; \cdots\; \Phi(y_2-r_{x_2}(y_{1}))a_{x_{1}x_2}
%\Phi(y_1-r_{x_1}(y_{0}))\mu_{x_1}p(Y_0=y_0),
%\end{eqnarray*}
%with respect to the product measure $\lambda\otimes\mu_c$, where $\lambda$ and $\mu_c$
%denote Lebesgue and counting measures, respectively. 
%\item[ii)] If $\Phi$ is a bounded density, then the joint density of $(Y_k,Y_{k'})$
%satisfies 
%\begin{equation*}
%p(Y_k=y_k,Y_{k'}=y_{k'})\leq \|\Phi\|_{\infty}^2\ \quad \mbox{ for } \quad k,k'\geq 1, \quad \mbox{and} \quad
%p(Y_0=y_0,Y_{k'}=y_{k'})\leq \|\Phi\|_{\infty}\ \quad \mbox{ for } \quad k'\geq 1.
%\end{equation*}
%\end{itemize}
%\end{lema}

%%%%%%%%%%%%%%%%%%%%%%%%%%%%%%%%%%%%%%%%%%%%%%%%%%%%%% Section 2.3 

\subsection{Strong mixing}\label{strongmixing}

A strictly stationary stochastic process $Y =\{Y_k\}_{k\in\mathbb{Z}}$ is strong mixing, if 
\begin{equation*}
\alpha_n=\sup\{|\mathbb{P}(A\cap B)-\mathbb{P}(A)\mathbb{P}(B)|: A\in \mathcal{M}_{-\infty}^0, B\in \mathcal{M}_n^{\infty}\} \to 0,
\end{equation*}
as $n\to \infty$, where $\mathcal{M}_a^{b}$, with $a,b \in \overline{\mathbb{Z}} $, is the $\sigma$-algebra gen\-e\-rated by $\{Y_k\}_{k= a:b}$. The values $\alpha_n$ are called strong mixing coefficients. For properties under mixing assumptions see \cite{dmixing}.

 In \cite{Rico-Lisandro-Luis} we prove that the MS-NAR model under conditions E1-E7  is $\alpha$-mixing and their coefficients $\alpha_n(Y)$ decrease geometrically.

%%%%%%%%%%%%%%% Proposition 2.2
%\begin{prop} \label{ModeloesMixing}[Ferm\'in, R\'ios and Rodr\'iguez (2017)]
%The MS-NAR model under conditions E1-E7  is $\alpha$-mixing and their coefficients $\alpha_n(Y)$ decrease geometrically.
%\end{prop}

%%%%%%%%%%%%%%%%%%%%%%%%%%%%%%%%%%%%%%%%%%%%%%%%%%%%%%  Section 2.4 

\subsection{Identifiability}

We assume the following conditions:

\begin{enumerate}
\item[\bf I1] The probability transition matrix $A=(a_{ij})_{i,j=1:m}$ has full rank.
\item[\bf I2] The functions $r_1,\!\ldots,r_m$ are different a.s.; i.e. if $i\neq j$ then $r_i(y')\neq r_j(y')$ for almost all $y'$.
\item[\bf I3] The density function $\Phi$ is such that $\{\Phi(y-r_i(y'))\}_{i=1:m}$ are linearly independents; i.e. 
$\sum_{i=1}^m\!\alpha_i \Phi(y-r_i(y'))=0$ $\forall y,y'$ if and only if  $\alpha_1=\ldots=\alpha_m=0$.
\item[\bf I4] The density function $\Phi$ is such that 
$\Phi(y-\tilde{r}_{\tilde{k}}(y'))= \Phi(y- r_k(y'))$ $\forall y$
if and only if $\tilde{r}_{\tilde{k}}(y')=r_k(y')$.
\end{enumerate}

Under conditions I1-I4 and  assuming that the number of states  $m$ is known,  in \cite{Rico-Lisandro-Luis} we prove that $A$ and $r$ are identifiable up to label swapping of the hidden states.

%%%%%%%%%%%%%%% Proposition 2.3 
%\begin{prop}\label{identifiability}[Ferm\'in, R\'ios and Rodr\'iguez (2017)] Assume that the number of states  $m$ is known, under conditions I1-I4, $A$ and $r$ are identifiable up to label swapping of the hidden states.
%\end{prop}

%%%%%%%%%%%%%% Remark 2.2
\begin{rem}\label{remark.2.2}
Condition I2 implies the identifiability of regression functions $r_i$'s for almost all $y'$. Nevertheless, the continuity given by condition E2 ensures the identifiability for all $y'$.

In the case where the innovation $e$ is a Gaussian white noise, conditions I1 and I2 are sufficient to obtain the identifiability.
\end{rem}

%%%%%%%%%%%%%%%%%%%%%%%%%%%%%%%%%%%%%%%%%%%%%%%%%%%%%%%%%%%%%%%%
%%%%%%%%%%%%%%%%%%%%%%%%%%%%%%%%%%%%%%%%%%%%%%%%%%%%%% Section 3

\section{Main results}\label{sect.3}

In this section we establish the asymptotic normality of the Naradaya-Watson kernel estimator in the case of complete data, as well as the asymptotic normality of estimator related to our Robbins-Monro algorithm.

%%%%%%%%%%%%%%%%%%%%%%%%%%%%%%%%%%%%%%%%%%%%%%%%%%%%%% Section 3.1 
\subsection{Assumption of the estimators}

First we assume that a realization of the complete data  $Y_{0:n}, X_{1:n}$ 
is available. We give a Central Limit Theorem
for the non-parametric estimator $\hat{r}_{i,n}$ of the regression function $r_i$, in each regime $i$. 

For a stationary MS-NAR model, $r(y)=\mathbb{E}(Y_1|Y_0=y)$ is the quantity of interest in the regression function estimation. It can be rewritten as
\begin{equation*}
r(y)=\sum_{i=1}^m \mathbb{E}(Y_1|Y_0=y,X_1=i)\mathbb{P}(X_1=i).
\end{equation*}
Hence, it is sufficient to estimate each regression function 
\begin{equation*}\label{regfunction}
r_i(y)=\mathbb{E}(Y_1|Y_0=y,X_1=i), \quad i=1,\ldots,m \; \mbox{ and } \; y\in\mathbb{R}.
\end{equation*}
Let us introduce
\begin{eqnarray*}
g_i(y)&:=&r_i(y)f_i(y),\\
f_i(y)&:=&\mu_ip(Y_0=y).
\end{eqnarray*}
The Nadaraya-Watson kernel estimator of $r_i$ is  
\begin{equation*}
\hat{r}_{i,n}(y)=\left\{\begin{array}{cc}
{\hat{g}_{i,n}(y)}/{\hat{f}_{i,n}(y)} & \mbox{if $\hat{f}_{i,n}(y)\not=0$,}\\
0& \mbox{otherwise}
\end{array}\right.
\end{equation*}
with
\begin{eqnarray*}
\hat{g}_{i,n}(y)&:=&\frac{1}{nh}\sum_{k=0}^{n-1}Y_{k+1}K_h\left({y-Y_k}\right)\car_{i}(X_{k+1}),\\
\hat{f}_{i,n}(y)&:=&\frac{1}{nh}\sum_{k=0}^{n-1}K_h\left({y-Y_k}\right)\car_{i}(X_{k+1}),
\end{eqnarray*}
and $K_h(y)= K(y/h)$.

Let us take a kernel $K:\mathbb{R}\to\mathbb{R}$, positive, symmetric, with compact support such that $\int K(t)dt=1$. 
We assume that the kernel $K$ as well as  the density $\Phi$ are bounded, i.e  
\begin{enumerate}
\item[\bf B1] $\|K\|_{\infty}<\infty$.
\item[\bf B2] $\|\Phi\|_{\infty}<\infty$.
\end{enumerate}

Under condition B1, the kernel $K$ is of order 2, i.e. $\int tK(t)dt=0$  and $0<\left|\int t^{2}K(t)dt\right|<\infty$. 

Let $\mathrm{C}$ be a compact subset of $\mathbb{R}$. We assume the following regularity conditions: 
 
\begin{enumerate}
\item[\bf R1] There exist finite constants $c, \beta>0$, such that
$\forall y,y'\in \mathrm{C},\ |K(y)-K(y')|
<c|y-y'|^\beta$.
\item[\bf R2] The density of $Y_0$, $\Phi$, and $r_i$ have continuous second derivatives in the interior of $\mathrm{C}$.
\item[\bf R3] For all $k\in\mathbb{N}$, the functions
$r_{i,k}(t,s)=\mathbb{E}(|Y_1Y_{k+1}|\, |Y_0=t,Y_k=s,X_1=i,X_{k+1}=i)$
are continuous and uniformly bounded with respect to $k$. 
\end{enumerate}

Finally, we impose one of the two following moment conditions:
\begin{enumerate}     
\item[\bf M1] $\mathbb{E}(\exp(|Y_0|))<\infty$ and $\mathbb{E}(\exp(|e_1|))<\infty$.
\item[\bf M2] $\mathbb{E}(|Y_0|^s)<\infty$ and $\mathbb{E}(|e_1|^s)<\infty$,  for some $s>2$.
\end{enumerate}

%%%%%%%%%%%%%%% Remark 2.3
\begin{rem}\label{remark.2.3} 
Condition M1 and M2 are assumed to hold. The former in order to obtain the a.s. uniform convergence over compact sets and the latter for the a.s. pointwise convergence.
\end{rem}

Let  $\{h_n\}_{n\geq 1}$ be a sequence of real numbers satisfying the following condition  
\begin{enumerate}
\item[\bf S1] For all $n\geq0$, $h_n>0$, $\lim_{n\to\infty}h_n=0$ and $\lim_{n\to\infty}nh_n=\infty$. 
\end{enumerate}

In \cite{Rico-Lisandro-Luis}, we have established the uniform consistency over compact sets of the Nadaraya-Watson kernel estimator $\hat{r}_{i,n}$ defined in \eqref{NWEstim}. Urder conditions E1-E4, E6-E7, D1, B1-B2, S1, and R1-R3 on a compact set $C$, we have: 
\begin{enumerate}
\item[i)] If $nh_n/\log n \to \infty$ and condition M2 holds, then $|\hat{r}_{i,n}(y)-r_i(y)|\to 0$ a.s. for all $y\in C$.
\item[ii)] If $nh_n/\log n \to \infty$ and condition M1 holds, then $\sup_{y\in\mathrm{C}}|\hat{r}_{i,n}(y)-r_i(y)|\to 0$ a.s.
\end{enumerate}

Here, we define 
\begin{equation*}
{g_2}_{i}(y)=f_i(y)\mathbb{E}(Y_1^2|Y_0=y,X_1=i).
\end{equation*}
We denote by $A^{(k)}_{ij}$ the $(i,j)$-th entry
of the $k$-th power of the matrix $A$. We define $B_{n,h}\approx B_h$ so that $\lim_{h\to 0}\lim_{n\to \infty} B_{n,h}$ $= \lim_{h\to 0}B_h$, i.e. for large enough $n$ and small enough  $h$, $B_{n,h}$ is approximately equal to $B_h$. Analogously, we define $B_{n,h} \preceq B_h$ to mean that $\lim_{h\to 0} \lim_{n\to \infty} B_{n,h} \leq \lim_{h\to 0} B_h$. In particular we write
$B_{n,h} \preceq B$ to mean that $B$ is a bound for the sequence $B_{n,h}$, for large enough $n$ and small enough $h$ . 

\subsection{Asymptotic normality: fully observed data case}

%%%%%%%%%%%%%%% Theorem 3.1
\begin{teor}
\label{normalidadTO}
Assume that the model MS-NAR satisfies the conditions E1-E4, E6-E7, D1, B1-B2, M2, S1, and R2-R3 on a compact set $C$. Then, for each fixed $y\in C$, 
\begin{equation*}
\sqrt{nh_n}(\hat{r}_{i,n}(y)-r_i(y))\to\mathcal{N}\left(0,\frac{\sigma_i^2\|K\|_2^2}{f_i(y)}\right),
\end{equation*} 
whenever $nh_n\to \infty$.
\end{teor}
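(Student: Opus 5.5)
The plan is the standard decomposition of the Nadaraya--Watson estimator combined with a central limit theorem for strongly mixing triangular arrays. Write
\begin{equation*}
\hat{r}_{i,n}(y)-r_i(y)=\frac{\hat{g}_{i,n}(y)-r_i(y)\hat{f}_{i,n}(y)}{\hat{f}_{i,n}(y)}
=\frac{1}{\hat{f}_{i,n}(y)}\,\frac{1}{nh_n}\sum_{k=0}^{n-1}\bigl(Y_{k+1}-r_i(Y_k)\bigr)K_{h_n}(y-Y_k)\car_i(X_{k+1})
+\frac{1}{\hat{f}_{i,n}(y)}\,\frac{1}{nh_n}\sum_{k=0}^{n-1}\bigl(r_i(Y_k)-r_i(y)\bigr)K_{h_n}(y-Y_k)\car_i(X_{k+1}).
\end{equation*}
The first term is the variance term. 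On $\{X_{k+1}=i\}$ we have $Y_{k+1}-r_i(Y_k)=e_{k+1}$. The second term is the bias term. By the consistency result recalled above (pointwise, under M2), $\hat{f}_{i,n}(y)\to f_i(y)>0$ a.s.; E7 and D1 give positivity of the stationary density. So by Slutsky it suffices to show that $\sqrt{nh_n}$ times the numerator of the first term converges to $\mathcal{N}(0,\sigma_i^2\|K\|_2^2 f_i(y))$, with $\sigma_i^2=\mathbb{E}(e_1^2)$, and that $\sqrt{nh_n}$ times the bias numerator tends to $0$ in probability. Dividing by $f_i(y)^2$ then gives the stated variance $\sigma_i^2\|K\|_2^2/f_i(y)$.

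For the bias, a Taylor expansion using R2 (continuous second derivatives of $r_i$ and of the density) together with the symmetry of $K$ gives expectation $O(h_n^2)$. A variance bound, via the covariance inequality for mixing variables and the bounded joint densities \eqref{exisdensidad_ii}, gives $o(1)$ after multiplying by $nh_n$. The bias itself is negligible only if $\sqrt{nh_n}\,h_n^2\to0$. I would add this undersmoothing condition $nh_n^5\to0$ explicitly, or interpret the statement as centered at $\mathbb{E}\hat r$, since S1 alone does not give it.

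For the variance term, set $Z_{n,k}=(nh_n)^{-1/2}e_{k+1}K_{h_n}(y-Y_k)\car_i(X_{k+1})$. The innovation $e_{k+1}$ is independent of $(Y_k,X_{k+1})$ and centered, so $\{Z_{n,k}\}_k$ is a martingale difference array with respect to $\mathcal{F}_{k+1}=\sigma(Y_{0:k+1},X_{1:k+2})$. This lets me use the martingale CLT instead of Bernstein blocks. I need two things:
\begin{enumerate}
\item[(a)] $\sum_k \mathbb{E}(Z_{n,k}^2\mid\mathcal{F}_k)=\sigma_i^2(nh_n)^{-1}\sum_k K_{h_n}^2(y-Y_k)\car_i(X_{k+1})\to\sigma_i^2\|K\|_2^2 f_i(y)$ in probability;
\item[(b)] a Lindeberg condition.
\end{enumerate}
For (a), the mean is $h_n^{-1}\mathbb{E}\bigl(K^2((y-Y_0)/h_n)\bigr)\mu_i\to\|K\|_2^2 f_i(y)$, using continuity of the density and $\mathbb{P}(X_1=i\mid Y_0)=\mu_i$ by the independence structure at stationarity. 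Its variance vanishes by the same mixing covariance argument used for $\hat f_{i,n}$. Alternatively, I would invoke the ergodic-type law of large numbers already underlying the consistency proof, applied with $K^2$ in place of $K$.

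For (b), I use the Lyapunov form with $s>2$ from M2:
\begin{equation*}
\sum_k\mathbb{E}|Z_{n,k}|^s\le (nh_n)^{-s/2}\, n\,\mathbb{E}|e_1|^s\,\|K\|_\infty^{s-2}\,\mathbb{E}K_{h_n}^2(y-Y_0)=O\bigl((nh_n)^{1-s/2}\bigr)\to0.
\end{equation*}

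The main obstacle is the conditional variance convergence (a), together with the bias negligibility. The former requires controlling the covariances $\mathrm{Cov}\bigl(K_{h}^2(y-Y_0)\car_i(X_1),K_{h}^2(y-Y_k)\car_i(X_{k+1})\bigr)$. I would split the sum at a lag $d_n\to\infty$ with $d_nh_n\to0$. Small lags are bounded by $h^2\|\Phi\|_\infty^2$ using \eqref{exisdensidad_ii}. Large lags are bounded via Davydov's inequality with the geometric $\alpha$-mixing coefficients of the pair $(Y,X)$; this needs the mixing of the joint chain, which I expect follows from the argument in \cite{Rico-Lisandro-Luis}. R3 plays the analogous role if one prefers to bound the second-moment cross terms directly without the martingale structure.
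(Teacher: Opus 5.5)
Your argument is correct, and it takes a genuinely different route from the paper's.

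The paper proves a joint CLT for $\sqrt{nh_n}\,(\hat f_{i,n}-f_i,\hat g_{i,n}-g_i)$ via Cram\'er--Wold. It proceeds in these steps:
\begin{enumerate}
\item It truncates $Y_{k+1}$ at a level $M_n$ and applies Rio's CLT for strongly mixing sequences to the truncated part, using the covariance bounds of Lemma~\ref{asympt_cov}.
\item It shows the truncation remainder is $L^2$-negligible by Lemma~\ref{asympt_cov2}, under $h_n=n^{-d}$, $M_n=M_0n^{\gamma}$ and $(1+d)+(2-s)\gamma<0$.
\item It removes the $O(h_n^2)$ bias by taking $d>1/5$.
\item It linearizes $\hat g_{i,n}/\hat f_{i,n}$ with $(a,b)=(-g_i/f_i^2,1/f_i)$, so that $\sigma_i^2$ appears as $g_{2,i}/f_i-r_i^2$.
\end{enumerate}

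You instead center at $r_i(Y_k)$. This turns the stochastic part into the martingale difference array $e_{k+1}K_{h_n}(y-Y_k)\car_i(X_{k+1})$. As a result, the cross-covariances of the leading term vanish identically, and the Lyapunov condition needs only $\mathbb{E}|e_1|^s<\infty$. Mixing then enters only through laws of large numbers for bounded summands: the conditional variance, $\hat f_{i,n}$, and the bias term.

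What your route buys:
\begin{itemize}
\item It dispenses with the truncation and with Lemmas~\ref{asympt_cov} and~\ref{asympt_cov2}.
\item It does not need R3, the moment condition on $Y_0$, or the polynomial form of $h_n$.
\item It needs no mixing CLT for the $n$-dependent summands $T_{k,n}$.
\item It identifies $\sigma_i^2=\mathrm{Var}(e_1)$ directly.
\end{itemize}
The price is that it relies on the additive independent-innovation structure. The paper's route also delivers the joint CLT for $(\hat f_{i,n},\hat g_{i,n})$, and it would carry over to mixing regressions where $Y_{k+1}-r_i(Y_k)$ is not a martingale difference.

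Your undersmoothing remark is right. The paper's proof also needs $nh_n^5\to0$ (through $d>1/5$), even though the statement lists only $nh_n\to\infty$.

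Two small points:
\begin{itemize}
\item For Slutsky you only need $\hat f_{i,n}(y)\to f_i(y)$ in probability. Your own variance bound gives this under $nh_n\to\infty$, so you need not cite the a.s.\ result, which requires $nh_n/\log n\to\infty$.
\item Like the paper, you rely on the stationarity convention $f_i=\mu_i\,p(Y_0=\cdot)$. You also need geometric mixing of the pair $(Y_k,X_{k+1})$, not just of $Y$.
\end{itemize}
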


The proof is reported to appendix A. 

We present the following technical lemmas that allows to apply the truncated device in order to obtain the CLT for the Nadaraya-Watson kernel estimator. 

%%%%%%%%%%%%%%% Lemma 3.1 
\begin{lema}\label{asympt_cov} [Ferm\'in, R\'ios and Rodr\'iguez (2017)] Assume that the model MS-NAR satisfies the conditions E1-E7, D1, B1-B2, S1, and R2-R3  on a compact set $C$. Let $\{M_n\}_{n\geq 1}$ be a non-decreasing sequence of positive numbers tending to infinity. Let 
\begin{equation*}
T_{k,n}= aK_h(y-Y_k)\car_i(X_{k+1})+bY_{k+1}\car_{\{|Y_{k+1}|\leq M_n\}}K_h(y-Y_k)\car_i(X_{k+1}).
\end{equation*}
Then, the following statements hold, for all $y\in C$:
\begin{itemize}
\item[i)]  $\mathrm{Var}(T_{0,n})\!\approx\! h(\!a^2f_i(y)\!+\!2abg_i(y)\!+\!b^2g_{2,i}(y))\|K\|_2^2+o(h^2)$.
\item[ii)] $\mathrm{Cov}(T_{0,n},T_{k,n})\preceq h^2\left(a^2+2ab(|r_i(y)|+\mathbb{E}(|e_1|))\right) +h^2b^2 r_{i,k}(y,y))A^{(k)}_{i,i}\mu_i\|\Phi\|_{\infty} +o(h^3)$.
\item[iii)] $\mathrm{Cov}(T_{0,n},T_{k,n})\leq (a^2 + 2ab M_n+b^2M_n^2)4\|K\|_{\infty}^2 \alpha_{k}$, for any $n>0$.
\end{itemize}
\end{lema}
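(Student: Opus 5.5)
The plan is to compute the three quantities directly from the factorization \eqref{exisdensidad_i}, the density bound \eqref{exisdensidad_ii}, and the geometric $\alpha$-mixing of the process. Throughout, $y\in C$ is fixed and we write $h=h_n$.

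\textbf{Step 1: the variance (i).} Expand
\[
T_{0,n}^2=\bigl(a^2+2abY_1\car_{\{|Y_1|\le M_n\}}+b^2Y_1^2\car_{\{|Y_1|\le M_n\}}\bigr)K_h(y-Y_0)^2\car_i(X_1).
\]
Condition on $(Y_0,X_1)$. The joint density of $(Y_0,X_1=i)$ is $\mu_i\,p(Y_0=t)=f_i(t)$, and given $Y_0=t$, $X_1=i$ we have $Y_1=r_i(t)+e_1$. Hence
\[
\mathbb{E}\bigl(T_{0,n}^2\bigr)=\int K^2\!\left(\tfrac{y-t}{h}\right)\Bigl[a^2f_i(t)+2ab\,g_i^{M_n}(t)+b^2g_{2,i}^{M_n}(t)\Bigr]dt,
\]
where $g_i^{M_n}$ and $g_{2,i}^{M_n}$ are the truncated analogues of $g_i$ and $g_{2,i}$. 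Substitute $t=y-hu$ and use the continuity from R2. Since $M_n\uparrow\infty$, monotone convergence replaces the truncated functions by $g_i$ and $g_{2,i}$, which gives the leading term $h\|K\|_2^2(\cdots)$. The squared mean $(\mathbb{E}T_{0,n})^2$ is $O(h^2)$ by the same substitution, and it is absorbed into the remainder.

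\textbf{Step 2: the covariance bound (ii).} Bound the covariance by
\[
|\mathrm{Cov}(T_{0,n},T_{k,n})|\le \mathbb{E}|T_{0,n}T_{k,n}|+|\mathbb{E}T_{0,n}|\,|\mathbb{E}T_{k,n}|.
\]
The second term is $O(h^2)$. For the first term, condition on $(Y_0,Y_k,X_1=i,X_{k+1}=i)$. The mixed terms and the $b^2$ term then involve $\mathbb{E}(|Y_1|\mid\cdot)\le |r_i(Y_0)|+\mathbb{E}|e_1|$ and $r_{i,k}(Y_0,Y_k)$ respectively. Next, bound the joint density of $(Y_0,Y_k)$ on the event $X_1=i$, $X_{k+1}=i$. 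Using the Markov property of $X$ together with \eqref{exisdensidad_ii}, this density is dominated by $\|\Phi\|_\infty\,\mu_iA^{(k)}_{i,i}$. Changing variables in both kernel arguments gives a factor $h^2$. Continuity of $r_i$ and of $r_{i,k}$ (R2–R3) lets us evaluate at $(y,y)$, with an error of order $o(h^3)$ after expanding to second order.

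\textbf{Step 3: the mixing bound (iii).} The variables $T_{0,n}$ and $T_{k,n}$ are bounded in absolute value by $(|a|+|b|M_n)\|K\|_\infty$. The variable $T_{k,n}$ is measurable with respect to the $\sigma$-field of the joint chain $(Y,X)$ from time $k$ onward, while $T_{0,n}$ depends on times $0$ and $1$. The standard covariance inequality for bounded variables then gives $|\mathrm{Cov}|\le 4\|T_{0,n}\|_\infty\|T_{k,n}\|_\infty\,\alpha_k$, which yields (iii). Here $\alpha_k$ should be interpreted as the mixing coefficient of the pair process $(Y_k,X_k)$, which is also geometric by the result recalled in Section \ref{strongmixing}. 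The lag shift by one step only changes constants.

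\textbf{Main obstacle.} The hardest step will be Step 2, in particular obtaining the bound $\mu_iA^{(k)}_{i,i}\|\Phi\|_\infty$ on the joint density of $(Y_0,Y_k)$ restricted to the regimes. I would first try to integrate \eqref{exisdensidad_i} over the intermediate variables $y_{1:k-1}$ and $x_{2:k}$. Bounding the last factor $\Phi(y_k-r_{x_k}(y_{k-1}))$ by $\|\Phi\|_\infty$ should leave a probability in $y_{k-1}$ that integrates to one. The transition factors then produce $A^{(k)}$.
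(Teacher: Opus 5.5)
Your overall route is the paper's. Part (i) is by direct expansion and localisation. Part (ii) is by conditioning on $(Y_0,Y_k,X_1,X_{k+1})$, extracting $\mu_iA^{(k)}_{i,i}$, bounding the density by $\|\Phi\|_\infty$ and rescaling both kernel arguments. This is the template written out in the proof of Lemma~\ref{asympt_cov2}(iii), which the authors say mirrors Lemma~\ref{asympt_cov}. Part (iii) is by the covariance inequality for bounded $\alpha$-mixing variables.

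Steps 1 and 3 are fine at the paper's level of rigour. Your remark in Step 3 that $\alpha_k$ must be the coefficient of the pair process (with lag $k-1$) is correct and is glossed over in the paper. Note, though, that Section~\ref{strongmixing} states geometric mixing only for $Y$, so you need it for the joint chain.

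Step 2 does not deliver (ii) as written. First, the triangle inequality $|\mathrm{Cov}|\le\mathbb{E}|T_{0,n}T_{k,n}|+|\mathbb{E}T_{0,n}||\mathbb{E}T_{k,n}|$ adds the term $(\mathbb{E}T_{0,n})^2=h^2\bigl(af_i(y)+bg_i(y)\bigr)^2+o(h^2)$. This is of the same order $h^2$ as the bound you are proving, it is not part of that bound, and calling it $O(h^2)$ does not make it negligible. Your two-sided bound is what the truncation argument downstream actually needs, but it is not the inequality stated in (ii). Since (ii) bounds the signed covariance from above, use stationarity instead: $\mathbb{E}T_{k,n}=\mathbb{E}T_{0,n}$, so $\mathrm{Cov}(T_{0,n},T_{k,n})=\mathbb{E}(T_{0,n}T_{k,n})-(\mathbb{E}T_{0,n})^2\le\mathbb{E}(T_{0,n}T_{k,n})$. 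This is what the paper does in the proof of Lemma~\ref{asympt_cov2}(iii). The same squared mean is why Step 1 really yields a remainder $O(h^2)$ rather than $o(h^2)$ in (i); that is harmless, since only the order-$h$ term is used.

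Second, the claim $\mathbb{E}(|Y_1|\mid Y_0,Y_k,X_1=i,X_{k+1}=i)\le|r_i(Y_0)|+\mathbb{E}|e_1|$ is false. $Y_k$ depends on $e_1$, so conditioning on $Y_k$ distorts the law of $e_1$. For $k=1$ the conditional expectation is $|Y_1|$ itself, which $K_h(y-Y_1)$ pins near $|y|$. The bound is legitimate only for the $Y_{k+1}$ factor, conditioning on $\sigma(Y_{0:k},X_{1:k+1})$, of which $e_{k+1}$ is independent.

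For the $Y_1$ factor with $k\ge 2$, reverse the order of conditioning:
\begin{itemize}
\item first bound $\mathbb{E}\bigl[K_h(y-Y_k)\car_i(X_{k+1})\mid Y_{0:k-1},X_{1:k}\bigr]\le h\|\Phi\|_\infty a_{X_k,i}$;
\item then use that $(Y_0,e_1)$ is independent of $X$ to get $\mathbb{E}\bigl(\car_i(X_1)a_{X_k,i}\bigr)=\mu_iA^{(k)}_{i,i}$;
\item only then bound $\mathbb{E}\bigl(|r_i(Y_0)+e_1|K_h(y-Y_0)\bigr)$.
\end{itemize}
For $k=1$ the $Y_1$ cross term is governed by $|y|$ rather than $|r_i(y)|+\mathbb{E}|e_1|$. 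It is still $O(h^2)$ because $C$ is compact.

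Two smaller points. The integration of \eqref{exisdensidad_i} you sketch actually gives $\|\Phi\|_\infty\mu_iA^{(k)}_{i,i}\,p(Y_0=t)$ as the bound on the regime-restricted density of $(Y_0,Y_k)$. The factor $p(Y_0=t)$ survives and localises to $p(Y_0=y)$, which is at most $\|\Phi\|_\infty$ under stationarity. Also, with only continuity of $r_{i,k}$ (R3), the localisation error is $o(h^2)$, not $o(h^3)$; ``expanding to second order'' needs smoothness you are not given. These two discrepancies mirror the looseness of \eqref{exisdensidad_ii} and of the statement itself, and they do not affect how (ii) is used. The first two issues, however, are errors in your argument and need the fixes above.
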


Lemma~\ref{asympt_cov} allows treating in a unified way the asymptotic behavior of the variances and covariances of $\hat{f}_{i,n}$ and a truncated version of $\hat{g}_{i,n}$. In contrast, the asymptotic behavior of the variances and covariances of the remaining part of the truncation of $\hat{g}_{i,n}$ is treated in Lemma \ref{asympt_cov2}.

%%%%%%%%%%%%%%% Lemma 3.2 
\begin{lema}\label{asympt_cov2} Assume that the model MS-NAR satisfies the conditions E1, E3, E6, D1, B1-B2, S1, M2 and R3  on a compact set $C$. Let $\{M_n\}_{n\geq 1}$ be a non-decreasing sequence of positive numbers tending to infinity. Let 
$$
R_{k,n}=b Y_{k+1}\car_{\{|Y_{k+1}|> M_n\}}K_h\left(y-Y_k\right)\car_i(X_{k+1}).
$$
Then, the following statements hold, for all $y\in C$:
\begin{itemize}
\item[i)]  $\mathrm{Var}(R_{0,n})\leq b^{2}\mathbb{E}(|Y_1|^{s})\|K \|_{\infty}^2M_n^{2-s}$, for any $n>0$.
\item[ii)] $\mathrm{Cov}(R_{0,n},R_{k,n})\leq b^{2}\mathbb{E}(|Y_1|^{s})\|K \|_{\infty}^2M_n^{2-s}$, for any $n>0$.
\item[iii)] $\mathrm{Cov}(R_{0,n},R_{k,n})\preceq  h^2b^2r_{i,k}(y,y)\|\Phi\|_{\infty} + o(h^3)$.
\end{itemize}
\end{lema}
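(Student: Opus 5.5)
The three bounds are elementary, and I would prove them in the order stated. For (i) I start from $\mathrm{Var}(R_{0,n})\le \mathbb{E}(R_{0,n}^2)$ and bound the kernel and the indicator $\car_i(X_1)$ by $\|K\|_\infty^2$ and $1$. That leaves
\[
\mathrm{Var}(R_{0,n})\le b^2\|K\|_\infty^2\,\mathbb{E}\big(Y_1^2\car_{\{|Y_1|>M_n\}}\big).
\]
On the event $\{|Y_1|>M_n\}$ we have $Y_1^2=|Y_1|^s|Y_1|^{2-s}\le |Y_1|^sM_n^{2-s}$, because $s>2$. This is a Markov-type tail bound, and it gives (i). The moment $\mathbb{E}(|Y_1|^s)$ is finite: under M2 together with E1 and E3, the recursion $|Y_1|\le \rho_{X_1}|Y_0|+b_{X_1}+|e_1|$ and Minkowski's inequality show it is bounded by a constant involving $\mathbb{E}|Y_0|^s$ and $\mathbb{E}|e_1|^s$. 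Stationarity is not even needed here.

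For (ii) I apply Cauchy--Schwarz, $|\mathrm{Cov}(R_{0,n},R_{k,n})|\le \mathrm{Var}(R_{0,n})^{1/2}\mathrm{Var}(R_{k,n})^{1/2}$. The argument of (i) applies verbatim to $R_{k,n}$ with $Y_{k+1}$ in place of $Y_1$. I use the uniform moment bound $\mathbb{E}|Y_{k+1}|^s\le \mathbb{E}|Y_1|^s$, which holds under stationarity; if stationarity is not assumed, I replace $\mathbb{E}|Y_1|^s$ by $\sup_k\mathbb{E}|Y_k|^s$, which is finite under E4-type contraction. The product of the two square roots gives the same constant as in (i).

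For (iii) I bound the covariance by $\mathbb{E}|R_{0,n}R_{k,n}|+\mathbb{E}|R_{0,n}|\,\mathbb{E}|R_{k,n}|$. For the first term I drop the truncation indicators, since they are at most $1$. I then condition on $(Y_0,Y_k,X_1,X_{k+1})$ and use R3 to write
\[
\mathbb{E}|R_{0,n}R_{k,n}|\le b^2\int\!\!\int K_h(y-t)K_h(y-s)\,r_{i,k}(t,s)\,p(Y_0=t,Y_k=s,X_1=i,X_{k+1}=i)\,dt\,ds.
\]
The joint density is at most the joint density of $(Y_0,Y_k)$, which is bounded by $\|\Phi\|_\infty$ by \eqref{exisdensidad_ii}. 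The substitutions $t=y-hu$ and $s=y-hv$ produce a factor $h^2(\int K)^2=h^2$. Continuity of $r_{i,k}$ at $(y,y)$ then gives $h^2b^2r_{i,k}(y,y)\|\Phi\|_\infty+o(h^2)$.

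The term $\mathbb{E}|R_{0,n}|\,\mathbb{E}|R_{k,n}|$ is $O(h^2)$ as well, but it carries tail factors $\mathbb{E}(|Y|\car_{\{|Y|>M_n\}}\mid\cdot)$ that vanish as $M_n\to\infty$. Under the $\preceq$ convention, where $n\to\infty$ is taken first, this term disappears.

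The main obstacle is the remainder order. The stated bound is $o(h^3)$, whereas the Taylor argument naturally yields $o(h^2)$. I would try to match the statement using R2 and the symmetry of $K$ (so that $\int uK(u)\,du=0$), which kills the first-order terms in $h$. Failing that, I would note that only $o(h^2)$ is needed for the CLT summation, where the covariances are divided by $h$.
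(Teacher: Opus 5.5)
Your argument is essentially the paper's: a H\"older/Markov tail bound for (i), Cauchy--Schwarz for (ii), and for (iii) the cross moment, conditioning through $r_{i,k}$, the density bound \eqref{exisdensidad_ii} and the substitution $t=y-hu$, $s=y-hv$. Two of your steps are in fact cleaner than the paper's, and there is one simplification you can use. Your inequality $p(Y_0=t,Y_k=s,X_1=i,X_{k+1}=i)\le p(Y_0=t,Y_k=s)$ is safer than the paper's factorization through $A^{(k)}_{i,i}\mu_i$. The product-of-means term needs no limit argument: stationarity gives $\mathbb{E}R_{0,n}\,\mathbb{E}R_{k,n}=(\mathbb{E}R_{0,n})^2\ge 0$, hence $\mathrm{Cov}(R_{0,n},R_{k,n})\le\mathbb{E}(R_{0,n}R_{k,n})$ directly.

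Your caveat on the remainder is right, and the R2/symmetry idea does not close it by itself. The paper asserts $o(h^3)$ without argument. R3 gives only continuity of $r_{i,k}$, hence $o(h^2)$, and R2 does not control $r_{i,k}$, so $o(h^3)$ would require differentiability of $r_{i,k}$ at $(y,y)$. For Step 3 of Theorem~\ref{normalidadTO}, which needs a bound uniform in $k\le u_n$, a pointwise $o(h^2)$ is not enough. Instead, use the uniform boundedness in R3 (say $r_{i,k}\le\bar r$): it gives $\mathrm{Cov}(R_{0,n},R_{k,n})\le h^2b^2\|\Phi\|_\infty\bar r$ directly, with no remainder term at all.
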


The proof is reported to appendix A. 

%%%%%%%%%%%%%%%%%%%%%%%%%%%%%%%%%%%%%%%%%%%%%%%%%%%%%% Section 3.2 

\subsection{Asymptotic normality: partially observed data case}

In this section we present our Robbins-Monro type algorithm for the non-parametric estimation of the MS-NAR process in the partially observed data case. We compute the rate of convergence of our Robbins-Monro algorithm assuming that the almost surely convergence of the sequence $\{{\theta}^t(y)\}_{t\geq 0}$ was obtained for each fixed $y$. Then, we obtain the asymptotic normality for $\bar{\theta}^t(y)$.

The Nadaraya-Watson estimator $\hat{r}_n=(\hat{r}_{1,n},\ldots, \hat{r}_{m,n})$ can be interpreted, for each fixed $y$, as the solution of a locally weighted least-squares problem; in our case this has to do with finding the minimum of the potential $U$, defined by \eqref{potencialU},
with respect to $\theta=(\theta_1,\ldots,\theta_m)$ in a convex open set $\Theta$ of $\mathbb{R}^{m}$. 
Thus, the regression estimator $\hat{r}_n$ is given by 
\begin{equation*}
\hat{r}_n(y)=\underset{\theta\in\Theta \subset \mathbb{R}^m}{\mbox{argmin}}\;\;
U(y,Y_{0:n},X_{1:n},\theta).
\end{equation*}

In the partially observed data case, that is, when we do not observe $\{X_k\}_{k\geq 1}$, we cannot obtain an explicit expression for the solution $\hat{r}_{n}(y)$. Thus, we must consider  
a recursive algorithm for the approximation this solution. Our approach approximates the estimator $\hat{r}_n(y)$ by a
stochastic recursive algorithm similar to that of Robbins-Monro, \cite{cappe-moulines-ryden,duflo,YaoRe}.
This involves two steps: first, a Monte-Carlo step which restores the
missing data $X_{1:n}$, and second, a Robbins-Monro approximation in order to minimize the potential $U$.

At this point we introduce some further notation. For each $1\leq i\leq m$, $n_i(X_{1:n})=\sum_{k=1}^n\car_{i}(X_k)$ is the number of visits of the Markov chain $X$ 
to state $i$ in the first $n$ steps, and $n_{ij}(X_{1:n})=\sum_{k=1}^{n-1}\car_{i,j}(X_{k-1},X_k)$ is the number of transitions from $i$ 
to $j$ in the first $n$ steps. $\psi^{t}=(\theta^t,A^t)$ is a vector containing the estimated functions $\theta^t=(\theta_1^t,\ldots, \theta_m^t)$ and the 
estimated probability transition matrix $A^t$, in the $t$-th iteration of the Robbins-Monro algorithm.

The restoration-estimation Robbins-Monro algorithm consists  of the following steps, for each fixed $y$:
\begin{description}
\item[\textbf{Step 0.}] Pick an arbitrary initial realization  $X_{1:n}^{0}=X_1^{0},\ldots,X_n^{0}$. Compute 
the estimated regression functions $\hat{r}_n^{0}(y)=(\hat{r}_{1,n}^{0}(y),\ldots,\hat{r}_{m,n}^{0}(y))$
from equation \eqref{NWEstim} in terms of the observed data $Y_{0:n}$ and the initial realization $X_{1:n}^{0}$.
We compute the estimated transition probability matrix 
$A^{0}=(a^{0}_{ij})_{i,j=1:m}$, by  $a_{ij}^{0}=n_{ij}(X_{1:n}^{0})/n_i(X_{1:n}^{0})$ for $i,j=1,\ldots,m$
and the initial probability measure $\lambda^0=(\lambda^{0}_{1:m})$ by
$\lambda^{0}_i=n_i(X_{1:n}^{0})/n$ for $i=1,\ldots,m$. Then, define $\theta^{0}=\hat{r}^{0}_n(y)$.
\end{description}
For $t\geq1$,
\begin{description}
\item[\textbf{Step R.}] Restore the corresponding unobserved 
data by drawing a sample $X_{1:n}^{t}$ from the joint conditional distribution $p(X_{1:n}|Y_{0:n},\psi^{t-1})$.

\item[\textbf{Step E.}]  Update the estimation $\psi^{t}=(\theta^t,A^t)$ by
$$\theta^{t}=\theta^{t-1}-\gamma_t\nabla_\theta U\left(y,Y_{0:n},X_{1:n}^{t},\theta^{t-1}\right),$$
with $\nabla_\theta U\left(y,Y_{0:n},X_{1:n}^{t},\theta^{t-1}\right)=\nabla_\theta U\left(y,Y_{0:n},X_{1:n}^{t},\theta\right)
\bigm|_{\theta=\theta^{t-1}}$, the transition probability matrix  $A^{t}=(a^{t}_{ij})_{i,j=1:m}$ by
$a^{t}_{ij}=n_{ij}(X_{1:n}^{t})/n_i(X_{1:n}^{t})$ for $i,j=1,\ldots,m$,
and the probability measure $\lambda^t=(\lambda^{t}_i)_{i=1:m}$ by $\lambda^{t}_i=n_i(X_{1:n}^{t})/n$ for $i=1,\ldots,m$.

\item[\textbf{Step A.}] Reduce the asymptotic variance of the algorithm by
using the averages $\bar{\theta}^{t}=\frac{1}{t}\sum_{k=1}^t\theta^k$ instead of $\theta^{t}$,
which can be recursively computed taking $\bar{\theta}^{0}=\theta^0$, and 
$$\bar{\theta}^{t}=\bar{\theta}^{t-1}+\frac{1}{t}\left(\theta^{t}-\bar{\theta}^{t-1}\right).$$
\end{description}

%%%%%%%%%%%%%%% Theorem 3.2 
\begin{teor}
\label{tasadeconvergencia}
Assume that the closure of $\{\theta^t\}_{t\geq 0}$ is a compact subset of $\Theta$, $\theta^*=\lim_{t\to\infty}\theta^t$ a.s, and $\{\gamma_t\}$ is a positive sequence such that $\gamma_t\to 0$, $\frac{\gamma_t - \gamma_{t+1}}{\gamma_t}=o(\gamma_t)$, when $t\to \infty$. Then, for each fixed $y$,
\begin{equation*}
{\sqrt{t}}(\bar{\theta}^{t}(y)-\theta^*(y))\to\mathcal{N}\left(0,\Sigma^*_n(y)\right), \quad \mbox{ as } \quad  t\to \infty,
\end{equation*}
where $\Sigma^*_n(y)=\nabla_\theta^2 u(y,Y_{0:n},\theta^*)^{-1}\Gamma \nabla_\theta^2 u(y,Y_{0:n},\theta^*)^{-1}$.
\end{teor}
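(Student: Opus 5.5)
This is a Polyak--Ruppert averaging result for a stochastic approximation driven by a Markov chain, with the data $Y_{0:n}$ held fixed. The plan is to recast the algorithm in that framework and then invoke the averaging CLT of Polyak--Juditsky (in the Markovian-noise form found in Duflo's monograph).

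Fix $y$ and $Y_{0:n}$. Write $u(y,Y_{0:n},\theta)$ for the mean field
\[
u(y,Y_{0:n},\theta)=\mathbb{E}\bigl[U(y,Y_{0:n},X_{1:n},\theta)\,\big|\,Y_{0:n}\bigr],
\]
where the expectation is over the restored chain. The recursion then reads
\[
\theta^t=\theta^{t-1}-\gamma_t\nabla_\theta u(y,Y_{0:n},\theta^{t-1})-\gamma_t\varepsilon_t,
\qquad \varepsilon_t=\nabla_\theta U(y,Y_{0:n},X^t_{1:n},\theta^{t-1})-\nabla_\theta u(y,Y_{0:n},\theta^{t-1}).
\]
Since $U$ is quadratic in $\theta$, we have
\[
\nabla_{\theta_i}U=-\frac{2}{nh}\sum_k K_h(y-Y_k)\car_i(X_{k+1})(Y_{k+1}-\theta_i).
\]
Its Hessian is therefore the diagonal matrix with entries $\frac{2}{nh}\sum_k K_h(y-Y_k)\car_i(X_{k+1})$. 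This is affine in $\theta$, and taking expectations gives $H=\nabla^2_\theta u(y,Y_{0:n},\theta^*)$, a diagonal matrix which I take to be positive definite (almost surely each state is visited near $y$ with positive conditional probability). The limit $\theta^*$ is a zero of $\nabla_\theta u$, since $\theta^t\to\theta^*$ a.s.

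I would split the noise as $\varepsilon_t=\xi_t+\zeta_t$, where $\xi_t$ is a martingale increment with respect to $\mathcal F_t=\sigma(X^{1:t}_{1:n})$ and $\zeta_t$ is the remainder coming from the dependence of the sampling law $p(\cdot\mid Y_{0:n},\psi^{t-1})$ on the current iterate. Because the state space $\{1,\dots,m\}^n$ is finite and all transition probabilities are positive, the restoration kernel is uniformly ergodic on the compact closure of the iterates. This allows a Poisson-equation decomposition $\zeta_t=(\nu_t-\nu_{t+1})+\text{small}$. The condition $(\gamma_t-\gamma_{t+1})/\gamma_t=o(\gamma_t)$ is exactly what makes the telescoped terms negligible after averaging. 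The conditional covariance $\mathbb{E}[\xi_t\xi_t^\top\mid\mathcal F_{t-1}]$ converges a.s. to $\Gamma$, the asymptotic covariance of $\nabla_\theta U(y,Y_{0:n},X,\theta^*)$ under the stationary restoration law. This follows from continuity in $\theta$ and $\theta^t\to\theta^*$. The Lindeberg condition is trivial because $\nabla_\theta U$ is bounded once $Y_{0:n}$ is fixed and $\theta$ ranges over a compact set.

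Next I linearize. Since $\nabla_\theta u$ is affine in $\theta$ for fixed restoration law, and the law depends smoothly on $\psi$, we get
\[
\theta^t-\theta^*=(I-\gamma_tH)(\theta^{t-1}-\theta^*)-\gamma_t\varepsilon_t+\gamma_t O(|\theta^{t-1}-\theta^*|^2).
\]
Summing $H(\theta^{t-1}-\theta^*)=(\theta^{t-1}-\theta^t)/\gamma_t-\varepsilon_t+\dots$ and dividing by $\sqrt t$ gives
\[
\sqrt t\,H(\bar\theta^t-\theta^*)=-\frac{1}{\sqrt t}\sum_{k\le t}\xi_k+\text{remainders}.
\]
The martingale CLT yields $\mathcal N(0,\Gamma)$ for the leading term, and multiplying by $H^{-1}$ gives $\Sigma^*_n$. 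The remainders are of three kinds:
\begin{itemize}
\item the Abel-summation term $\frac{1}{\sqrt t}\sum_k(\theta^{k-1}-\theta^k)/\gamma_k$, controlled using the step-size condition;
\item the Poisson telescoping term;
\item the quadratic term $\frac{1}{\sqrt t}\sum_k|\theta^k-\theta^*|^2$.
\end{itemize}

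The main obstacle is the quadratic remainder, together with the Abel term. Both require a rate $\mathbb{E}|\theta^t-\theta^*|^2=O(\gamma_t)$, and this is not part of the hypotheses, which only assume a.s. convergence. My first attempt would be a standard Lyapunov argument with $V(\theta)=|\theta-\theta^*|^2$, exploiting strong convexity (the positive diagonal $H$) and boundedness of the noise on the compact set, to get $|\theta^t-\theta^*|^2=O(\gamma_t\log t)$ a.s. Then $\sum_k\gamma_k\log k/\sqrt t\to0$ provided $\gamma_t$ decays faster than $t^{-1/2}$ up to logs. The stated step condition permits such $\gamma_t$, and I would impose or verify it if needed.
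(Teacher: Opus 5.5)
There is a genuine gap, and it sits where you locate your ``main obstacle''.

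Your remainder $\gamma_tO(|\theta^{t-1}-\theta^*|^2)$ makes you need $\mathbb{E}|\theta^t-\theta^*|^2=O(\gamma_t)$, which you only announce. Even granting $|\theta^t-\theta^*|^2=O(\gamma_t\log t)$ a.s., the requirement $t^{-1/2}\sum_{k\le t}\gamma_k\log k\to0$ forces $\alpha>1/2$ for $\gamma_t=t^{-\alpha}$. The hypothesis $(\gamma_t-\gamma_{t+1})/\gamma_t=o(\gamma_t)$ allows every $\alpha\in(0,1)$, so imposing this proves a weaker theorem.

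More seriously, the premise behind that remainder is wrong. Write $u_\psi$ for the potential built with $p_{k,i}(\psi)=p(X_{k+1}=i\mid Y_{0:n},\psi)$. A smooth dependence of the restoration law on $\psi^{t-1}$ then produces the drift error
\[
\nabla_{\theta_i}u_{\psi^{t-1}}(\theta)-\nabla_{\theta_i}u_{\psi^{*}}(\theta)=-\frac{2}{nh}\sum_{k=0}^{n-1}K_h(y-Y_k)(Y_{k+1}-\theta_i)\bigl(p_{k,i}(\psi^{t-1})-p_{k,i}(\psi^{*})\bigr),
\]
which is of \emph{first} order in $\psi^{t-1}-\psi^*$, not second. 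Summed and divided by $\sqrt t$, it is generically of the same order as the martingale term, so in general it alters the limit.

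Similarly, in a genuine Markov-noise treatment, the Poisson decomposition of your $\zeta_t$ produces an additional martingale increment, not merely a telescoping term plus something small. The limiting covariance then becomes the long-run covariance of the restored chain, which includes autocovariances across iterations. That is not the one-step $\Gamma$ of Lemma~\ref{cond_martingale}, which is built from the within-sample covariances $\chi_{i,j}(k,k')$. Carried through correctly, your framework would not land on $\Sigma^*_n=H^{-1}\Gamma H^{-1}$.

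The paper has no remainder to control. It takes $u$ to be a fixed quadratic with $\nabla_\theta u(\theta^*)=0$. Hence $\nabla_\theta u(\theta^{t-1})=H\Delta_{t-1}$ exactly, and $\Delta_t=(I-\gamma_tH)\Delta_{t-1}+\gamma_t\varsigma_t$ with $\varsigma_t$ an $\mathcal{F}_t$-martingale difference by construction. It then uses the Polyak--Juditsky linear-case representation
\[
\sqrt t\,\bar\Delta_t=\frac{1}{\sqrt t\,\gamma_0}\nu_t\Delta_0+\frac{1}{\sqrt t}\sum_{k=1}^{t}H^{-1}\varsigma_k+\frac{1}{\sqrt t}\sum_{k=1}^{t}\omega_k^t\varsigma_k .
\]
The three terms are handled as follows:
\begin{itemize}
\item The first term vanishes because $\|\nu_t\|\le c$.
\item The third term vanishes because $\sup_t\mathbb{E}(\|\varsigma_t\|^2\mid\mathcal{F}_{t-1})<\infty$, together with $\|\omega_k^t\|\le c$ and $t^{-1}\sum_k\|\omega_k^t\|\to0$.
\item The middle term is handled by the martingale CLT with the two conditions of Lemma~\ref{cond_martingale}.
\end{itemize}
The stated step-size condition is exactly what those lemmas need, and no rate for $\theta^t$ enters. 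In that exact linear setting even your Abel-summation route would work, since $\mathbb{E}\|\Delta_t\|^2=O(\gamma_t)$ follows directly from the recursion and the bounded conditional variance.

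Your worry about the $\psi^{t-1}$-dependence of the sampling law is legitimate; the paper sidesteps it by treating $u$ and $H$ as fixed. But your proposal neither resolves it nor yields the stated covariance.

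Your Lindeberg argument is correct and simpler than the paper's $L^2$--$L^4$ interpolation: $\nabla_\theta U$ is bounded because $X_{1:n}$ ranges over the finite set $\{1,\ldots,m\}^n$ and $\theta$ ranges over a compact set.
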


The proof is reported to Appendix \ref{appA}.

%%%%%%%%%%%%%%% Remark 2.4
\begin{rem}\label{invmeasure}
The invariant probability distribution of the  ergodic Markov chain $\{X_{1:n}^{t}\}_{t\in\mathbb{N}}$ 
is $p(X_{1:n}=x_{1:n}|Y_{0:n},\psi^*)$, with $\psi^*=\lim_{t\to \infty} \psi^t$.
\end{rem}

Let,
$
\mathbb{E}_{\psi'}\left(U(y,Y_{0:n},X_{1:n}^{t},\theta)|\mathcal{F}_{t-1} \right)=u(y,Y_{0:n},\theta)
$
where  $\mathbb{E}_{\psi'}(\cdot)=\mathbb{E}(\cdot |Y_{0:n}, {\psi'})$ and $\psi'=(\theta',A') \in \mathcal{F}_{t-1}$ the $\sigma$-algebra generated by $\{X_{1:n}^s\}_{s=0:(t-1)}$. This conditional expectation is in fact the expectation with respect to the conditional distribution function $p(X^t_{1:n}|Y_{0:n},\psi')$. Then,  for each $\theta\in\Theta$
\begin{equation*}\label{constrastelimite}
u(y,Y_{0:n},\theta) =\frac{1}{nh}\sum_{k=0}^{n-1}\sum_{i=1}^{m}K_h(y-Y_k)p(X_{k+1}=i|Y_{0:n},\psi')(Y_{k+1}-\theta_i)^2, \nonumber
\end{equation*}
and
$
\mathbb{E}_{\psi'}\left(\nabla_\theta U(y,Y_{0:n},X_{1:n}^{t},\theta)|\mathcal{F}_{t-1}\right)=\nabla_\theta u(y,Y_{0:n},\theta).
$

This implies that the Restoration-Estimation  algorithm is a stochastic gradient algorithm
that minimizes $u(y,Y_{0:n},\theta)$ and can be written as
\begin{equation}
\label{GradienteEst}
\theta^{t}=\theta^{t-1}+\gamma_t\left(-\nabla_\theta u(y,Y_{0:n},\theta^{t-1})+\varsigma_t\right),
\end{equation}
where
\begin{equation*}
\varsigma_t = -\nabla_\theta U\left(y,Y_{0:n},X_{1:n}^{t},\theta^{t-1}\right)+\nabla_\theta u(y,Y_{0:n},\theta^{t-1}). 
\end{equation*}
Thus, the stochastic gradient algorithm is obtained by perturbation of the following gradient system
\begin{equation*}
\dot{\theta}=-\nabla_\theta u(y,Y_{0:n},\theta).
\end{equation*}

The following convergence result is obtained for each fixed $y$. 

In  \cite{Rico-Lisandro-Luis}, assuming condition B1, that the closure of the set $\{\bar{\theta}^{t}\}$ is a compact subset of
$\Theta$, and that $\{\gamma_t\}$ is a positive sequence such that
$\sum_t \gamma_{t}=\infty$, $\sum_t\gamma_{t}^2<\infty$, we prove that the sequence  $\{\bar{\theta}^{t}\}$ 
satisfies 
$\lim_{t\to\infty}\nabla_{\theta}u(y,Y_{0:n},\bar{\theta}^t)=0$.
Furthermore, 
$\lim_{t\to\infty}{\bar{\theta}}^t=\theta^*$ and $\nabla_{\theta}u(y,Y_{0:n},\theta^*)=0$, a.s.

%%%%%%%%%%%%%%% Remark 2.5
\begin{rem}\label{ThetaAster}
The $i$-th component $\theta_i^{*}$ of the critical point $\theta^{*}\in \Theta\subset \mathbb{R}^m$ of the gradient of $u$, is given by
\begin{equation*}
\theta_i^{*}(y,Y_{0:n})=\frac{\sum_{k=0}^{n-1}Y_{k+1}K_h(y-Y_k)p(X_{k+1}=i|Y_{0:n},\psi^{*})}
{\sum_{k=0}^{n-1}K_h(y-Y_k)p(X_{k+1}=i|Y_{0:n},\psi^{*})}
= \frac{\mathbb{E}\left[\hat{g}_{i,n}(y)|Y_{0:n},\psi^{*}\right]}{\mathbb{E}\left[\hat{f}_{i,n}(y)|Y_{0:n},\psi^{*}\right]},
\end{equation*}
where 
$\psi^{*}=(\theta^{*}, A^{*})$, $\theta^{*}=\lim_{t\to\infty} \bar{\theta}^{t}$,    
and $A^{*}=\lim_{t \to \infty} A^t$ is the probability transition of the limit Markov chain $X$
and $\lambda^*=(\lambda^{*}_i)_{i=1:m}$ with $\lambda^{*}=\lim_{t \to \infty} \lambda^t$.

Since for all $t\geq 0$ we have $\lambda^t A^t=\lambda^t$, we deduce that $\lambda^{*} A^{*}= \lambda^{*}$. Now, taking $n\to \infty$, it is easy to verify that $A^{*}(Y_{0:n}) \to A$ and $\lambda^*(Y_{0:n}) \to \mu$, where $A$ is the probability transition matrix and $\mu$ an invariant measure of the Markov chain $X$.

From the uniform consistency over compact sets of the estimator $\hat{r}_{i,n}$, we have that if 
$nh_n/\log n \to \infty$ then, for all $y\in C$, $
\hat{g}_{i,n}(y) \to g_i(y)$, $\hat{f}_{i,n}(y) \to f_i(y)$ a.s. as $n\to \infty$.
This implies $\mathbb{E}[\hat{g}_{i,n}(y)|Y_{0:n}, \psi^{*}] \to g_i(y)$, and $\mathbb{E}[\hat{f}_{i,n}(y)|Y_{0:n}, \psi^{*}] \to f_i(y)$ as $n\to \infty$, thus we have $ \theta_i^{*}(y,Y_{0:n})\to r_i(y)$, {\it a.s.} as
$n\to \infty$.  As a consequence we obtain
\begin{equation*}
\lim_{n\to\infty}\lim_{t\to\infty}\theta_i^{t}(y,Y_{0:n})=r_i(y),\ a.s.
\end{equation*}
\end{rem}

In the following we will no explicit the dependence on the variables $y$ and $Y_{0:n}$; for instance, we will use  $\theta$ and $u(\theta)$ in order to denote $\theta(y)$ and $u(y,Y_{0:n},\theta)$.

We begin this task, assuming that $\theta^*=\lim_{t\to\infty}{\theta}^t$ is a stable stationary point,
that is $\theta^*\in \Theta$ satisfies $\nabla_\theta u(\theta^*)=0$ and
$\nabla_\theta^2 u(\theta^*)>0$. As $u(\theta)$ is a quadratic function with respect to $\theta$, then
\begin{equation*}
\nabla_\theta u(\theta^{t-1})= \nabla_\theta u(\theta^{*})+ \nabla_\theta^2 u(\theta^{*})(\theta^{t-1} - \theta^{*}).
\end{equation*}

Let $\varDelta_t=\theta^{t} - \theta^*$, since $\nabla_\theta u(\theta^*)=0$, from
equation \eqref{GradienteEst} we can write for $t\geq 1$
\begin{equation*}
\varDelta_t=\varDelta_{t-1}+\gamma_t\left(-\nabla_\theta u(\theta^{t-1})+\varsigma_t\right)
=\varDelta_{t-1}-\gamma_t\nabla_\theta^2 u(\theta^*)\varDelta_{t-1}+\gamma_t\varsigma_t
=\left(I-\gamma_t\nabla_\theta^2 u(\theta^*)\right)\varDelta_{t-1}+\gamma_t\varsigma_t.
\end{equation*}

Here, $\varDelta_t$ obeys a linear difference equation driven by the $(\mathcal{F}_{t})$-martingale difference sequence $\{\varsigma_t\}$, with
$\mathcal{F}_t$ the $\sigma$-algebra generated by $\{X_{1:n}^s\}_{s=0:t}$.
From the proof of the consistence result given in \cite[Theorem 3.1]{Rico-Lisandro-Luis}, we can verify that $\varsigma_t$ satisfies 
\begin{equation*}
\mathbb{E}(\|\varsigma_{t}\|^2|\mathcal{F}_{t-1})\leq \| \Psi(\theta^{t-1})\|^2,
\end{equation*}
where $\Psi(\theta)=(\Psi_1(\theta),\ldots,\Psi_m(\theta))$ and
\begin{equation*}
\Psi_i(\theta)= \frac{1}{nh}\sum_{k=0}^{n-1}(Y_{k+1}-\theta_i)K_h(y-Y_k).
\end{equation*}
By compactness $\|\Psi(\theta)\|^2$ is finite in any compact subset of $\Theta$. Thus, if the closure of $\{\theta^t\}_{t\geq0}$ is a compact subset of $\Theta$, we obtain
\begin{eqnarray}\label{M2Marting}
\sup_t \mathbb{E}(\|\varsigma_{t}\|^2|\mathcal{F}_{t-1}) < \infty.
\end{eqnarray}

%%%%%%%%%%%%%%% Remark 3.1
\begin{rem}\label{covBernoulli}
Since $\{X_{1:n}^{t}\}_{t\geq 1}$ 
is an ergodic Mar\-kov chain  with 
invariant distribution probability $p(X_{1:n}=x_{1:n}|Y_{0:n},\psi^*)$ then for any bounded measurable function $f:\{1,\ldots,m\}^n\to \mathbb{R}$, $\mathbb{E}(f(X_{1:n}^{t}))\to \mathbb{E}(f(X_{1:n})), \quad \mbox{ when } \quad t\to\infty$.
In particular,
\begin{eqnarray*}
\mathrm{Cov}(\car_i(X_{k+1}^{t}),\car_j(X_{k'+1}^{t})|\mathcal{F}_{t-1})&\to&
\chi_{i,j}(k,k'):=\mathrm{Cov}(\car_i(X_{k+1}),\car_j(X_{k'+1})),
\end{eqnarray*}
in probability when $t\to\infty$, where 
\begin{equation}\label{chi}
\chi_{i,j}(k,k')=
\left\{
\begin{array}{lll}
\mu_{i,k}^*(A^{*^{(k'-k)}}_{i,j}-\mu_{j,k'}^*) & \mbox{ if } & k\leq k',\\
\mu_{j,k'}^*(A^{*^{(k-k')}}_{j,i}-\mu_{i,k}^*) & \mbox{ if } & k'< k,
\end{array}
\right.
\end{equation}
with $\mu_{i,k}^*=p(X_{k+1}=i|Y_{0:n},\psi^*)$ the marginal probability measure of the invariant measure $p(X_{1:n}=x_{1:n}|Y_{0:n},\psi^*)$, and $A^{*^{(k'-k)}}_{i,j}$ is the $(i,j)$-th entry of the $(k'-k)$-th power of the transition matrix $A^*$ defined in Remark \ref{invmeasure}.
\end{rem}

The following lemma show that the martingale difference sequence $\{\varsigma_t\}$ satisfy the conditions for the Martingale Central Limit Theorem.

%%%%%%%%%%%%%%% Lemma 3.3 
\begin{lema}\label{cond_martingale}
Assume that the closure of $\{\theta^t\}$ is a compact subset of $\Theta$ and $\theta^*=\lim_{t\to\infty}\theta^t$ a.s. then,

\begin{enumerate}     
\item[i)] $\lim_{c\to\infty}\limsup_{t\to\infty}\mathbb{E}\left(\|\varsigma_t\|^2
\car\left(\|\varsigma_t\|>c\right)|\mathcal{F}_{t-1}\right)= \; 0$,
in probability.
\item[ii)] $\lim_{t\to\infty}\mathbb{E}\left(\varsigma_t\varsigma_t^{T}|\mathcal{F}_{t-1}
\right)= \Gamma$ in probability, where $\Gamma$ is the covariance matrix defined by
\begin{equation*}  
\Gamma_{i,j} =\frac{4}{n^2h^2}\sum_{k,k'=0}^{n-1}(Y_{k+1}-\theta_i^{*})(Y_{k'+1}-\theta_j^{*})K_h(y-Y_k)K_h(y-Y_{k'})\chi_{i,j}(k,k'),
\end{equation*}
where $\chi_{i,j}(k,k')$ is given by \eqref{chi}.
\end{enumerate}
\end{lema}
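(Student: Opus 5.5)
We first write the noise explicitly. Since $U$ is quadratic in $\theta$, its gradient has components
\begin{equation*}
\partial_{\theta_i}U(y,Y_{0:n},X^t_{1:n},\theta)=-\frac{2}{nh}\sum_{k=0}^{n-1}K_h(y-Y_k)\car_i(X^t_{k+1})(Y_{k+1}-\theta_i).
\end{equation*}
The conditional mean $\partial_{\theta_i}u$ is the same expression with $\car_i(X^t_{k+1})$ replaced by $\mu^{t-1}_{i,k}:=p(X_{k+1}=i|Y_{0:n},\psi^{t-1})$. Hence
\begin{equation*}
\varsigma_{t,i}=\frac{2}{nh}\sum_{k=0}^{n-1}K_h(y-Y_k)(Y_{k+1}-\theta^{t-1}_i)\bigl(\car_i(X^t_{k+1})-\mu^{t-1}_{i,k}\bigr).
\end{equation*}
For fixed $n$ and $Y_{0:n}$ this is a finite sum. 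The coefficients $c_{k,i}(\theta)=\frac{2}{nh}K_h(y-Y_k)(Y_{k+1}-\theta_i)$ are continuous in $\theta$. The random factors are centred Bernoulli variables bounded by $1$.

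\textbf{Part (i).} Because the closure of $\{\theta^t\}$ is a compact $\mathcal{K}\subset\Theta$ and $K$ is bounded (B1), we get
\begin{equation*}
\|\varsigma_t\|\le C:=\frac{2\|K\|_\infty}{nh}\sup_{\theta\in\mathcal{K}}\sum_{i}\sum_{k}|Y_{k+1}-\theta_i|<\infty
\end{equation*}
uniformly in $t$, given $Y_{0:n}$. Therefore $\car(\|\varsigma_t\|>c)=0$ for every $c>C$. The conditional expectation in (i) then vanishes identically for $c>C$, and (i) follows, even almost surely.

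\textbf{Part (ii).} We expand
\begin{equation*}
\mathbb{E}(\varsigma_{t,i}\varsigma_{t,j}|\mathcal{F}_{t-1})=\sum_{k,k'}c_{k,i}(\theta^{t-1})c_{k',j}(\theta^{t-1})\,\mathrm{Cov}\bigl(\car_i(X^t_{k+1}),\car_j(X^t_{k'+1})|\mathcal{F}_{t-1}\bigr).
\end{equation*}
This identity holds because the conditional means of the Bernoulli factors are exactly $\mu^{t-1}_{i,k}$, so the cross terms are conditional covariances. The sum is finite, $n^2$ terms, so we may pass to the limit term by term. Continuity together with $\theta^{t-1}\to\theta^*$ a.s. gives $c_{k,i}(\theta^{t-1})\to c_{k,i}(\theta^*)$. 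Remark~\ref{covBernoulli} gives the conditional covariance $\to\chi_{i,j}(k,k')$ in probability. Products and sums of finitely many sequences converging in probability converge in probability. The limit is exactly $\Gamma_{i,j}$, since $c_{k,i}(\theta^*)c_{k',j}(\theta^*)=\frac{4}{n^2h^2}(Y_{k+1}-\theta^*_i)(Y_{k'+1}-\theta^*_j)K_h(y-Y_k)K_h(y-Y_{k'})$.

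The main obstacle is justifying the convergence of the conditional covariances. This requires that the law of $X^t_{1:n}$ given $\mathcal{F}_{t-1}$, namely $p(\cdot|Y_{0:n},\psi^{t-1})$, converge to $p(\cdot|Y_{0:n},\psi^*)$. We would argue this via continuity of the smoothing distribution (a finite sum of products of $\Phi(Y_k-\theta_{x_k})$ and $a_{x_{k-1}x_k}$) in $\psi$, together with $\psi^{t-1}\to\psi^*$. This continuity needs $\Phi$ continuous and the denominator bounded away from $0$, which holds by E7. We would then compute the pairwise marginals and identify them with the formula \eqref{chi} through the Markov structure under $A^*$, as asserted in Remark~\ref{covBernoulli}.
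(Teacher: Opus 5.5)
Your proof is correct. Part (ii) is the same as the paper's argument: write $\mathbb{E}(\varsigma_t\varsigma_t^T|\mathcal{F}_{t-1})$ as a finite sum of the coefficients $c_{k,i}(\theta^{t-1})c_{k',j}(\theta^{t-1})$ times conditional covariances of the indicators. Then pass to the limit using $\theta^{t-1}\to\theta^*$ and Remark~\ref{covBernoulli}.

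Part (i) takes a genuinely different and more elementary route. The paper verifies the conditional Lindeberg condition through a conditional moment of order $s>2$, in four steps:
\begin{enumerate}
\item it bounds $\mathbb{E}(\|\varsigma_t\|^2\car(\|\varsigma_t\|>c)|\mathcal{F}_{t-1})$ by $c^{2-s}\mathbb{E}(\|\varsigma_t\|^s|\mathcal{F}_{t-1})$;
\item it controls the conditional second moment by $\|\Psi(\theta^{t-1})\|^2$;
\item it asserts an analogous fourth-moment bound and interpolates between $L^2$ and $L^4$;
\item it invokes compactness.
\end{enumerate}
You instead observe that, with $n$ and $Y_{0:n}$ fixed, $\|\varsigma_t\|$ is bounded surely by an $\mathcal{F}_{t-1}$-measurable quantity, namely a continuous function of $\theta^{t-1}$. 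Compactness makes this bound uniform in $t$, so the truncated conditional second moment vanishes identically once $c$ exceeds it.

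Your version has several advantages. It is shorter and fully rigorous, since it does not rely on the fourth-moment bound that the paper only sketches. It delivers the conclusion almost surely rather than in probability. It also gives \eqref{M2Marting} for free. The paper's moment-interpolation template is the standard one for stochastic approximation with unbounded noise, but here the noise is a finite combination of centred Bernoulli variables, so it is unnecessary.

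One caution about your last paragraph, which goes beyond what the paper does. Under $p(\cdot|Y_{0:n},\psi^*)$ the regime sequence is Markov, but with $Y$-dependent, time-inhomogeneous (forward--backward) transitions, not with transition matrix $A^*$. So the continuity argument you describe yields the limit $p(X_{k+1}=i,X_{k'+1}=j|Y_{0:n},\psi^*)-\mu^*_{i,k}\mu^*_{j,k'}$. This agrees with \eqref{chi} when $k=k'$, but not in general when $k\neq k'$.

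Similarly, the restoration law involves $\Phi(Y_k-r_{x_k}(Y_{k-1}))$ evaluated with the current regression estimates at the data points. It is not $\Phi(Y_k-\theta_{x_k})$ with $\theta$ taken at the single point $y$.

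Both you and the paper take the identification with \eqref{chi} from Remark~\ref{covBernoulli}, so this is not a gap relative to the paper's proof. Still, the verification you sketch for that step would not go through as stated.
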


The proof of this lemma is reported to Appendix \ref{appA}. 

We now establish the CLT for the estimated mean error of algorithm,
\begin{equation*}
\bar{\theta}^{t}-\theta^*= \frac{1}{t}\sum_{k=1}^{t}\varDelta_k.
\end{equation*}

As pointed out by \cite{cappe-moulines-ryden}, using the asymptotic 
law given in Theorem \ref{tasadeconvergencia} has the drawback to require in practice the
estimate of $\Gamma$. Good choices of $\gamma_t$ - of the form $\gamma_t=\gamma t^{-\alpha}$ for $1/2<\alpha<1$ - 
allow to  achieve the rate $ t^{-1/2}$ for $\varDelta_t$. 
In this case, $\alpha$ should be as small as possible in order to increase the rate
of the numerical algorithm used to calculate $\bar{\theta}_t$.

%%%%%%%%%%%%%%%%%%%%%%%%%%%%%%%%%%%%%%%%%%%%%%%%%%%%%% Section 4 

\section{Numerical studies}\label{sect.4}
In this section we illustrate the performance of the algorithm developed, in the previous section, by applying it to a simulated data. The first subsection deals with the asymptotic normality obtained in the fully observed data case (Theorem \ref{normalidadTO}), whereas the second looks at the asymptotic normality obtained in the partially observed data case (Theorem \ref{tasadeconvergencia}).

We consider an MS-NAR model with $m=3$. The autoregressive functions are 
\begin{eqnarray*}
r_1(y)&=&0.7y+2e^{(-25y^2)},\\
r_2(y)&=&\frac{2}{1+e^{10y}}-1,\\
r_3(y)&=& -2\cos(y)-1.
\end{eqnarray*}

We take $\Phi$, the density probability function of the white noise $\{e_k\}$, to be a Gaussian density with zero mean and variance $\sigma^2=0.25$. The transition probability  matrix of Mar\-kov chain $\{X_k\}$ is given by
\begin{equation*}
A=\left(\begin{array}{ccc}
 0.9500 & 0.0250 & 0.0250\\
0.0250  & 0.9500 & 0.0250\\
0.0250  & 0.0250 & 0.9500
\end{array}\right).
\end{equation*}

We simulate a sample path of $\{Y_0,(Y_k,X_k)\}_{k=1:n}$ of length $n=3000$. The time series data $Y_{0:n}$ is shown in Figure \ref{fig1}.
\begin{figure}[h]
\centering
\includegraphics[width=0.7\textwidth]
{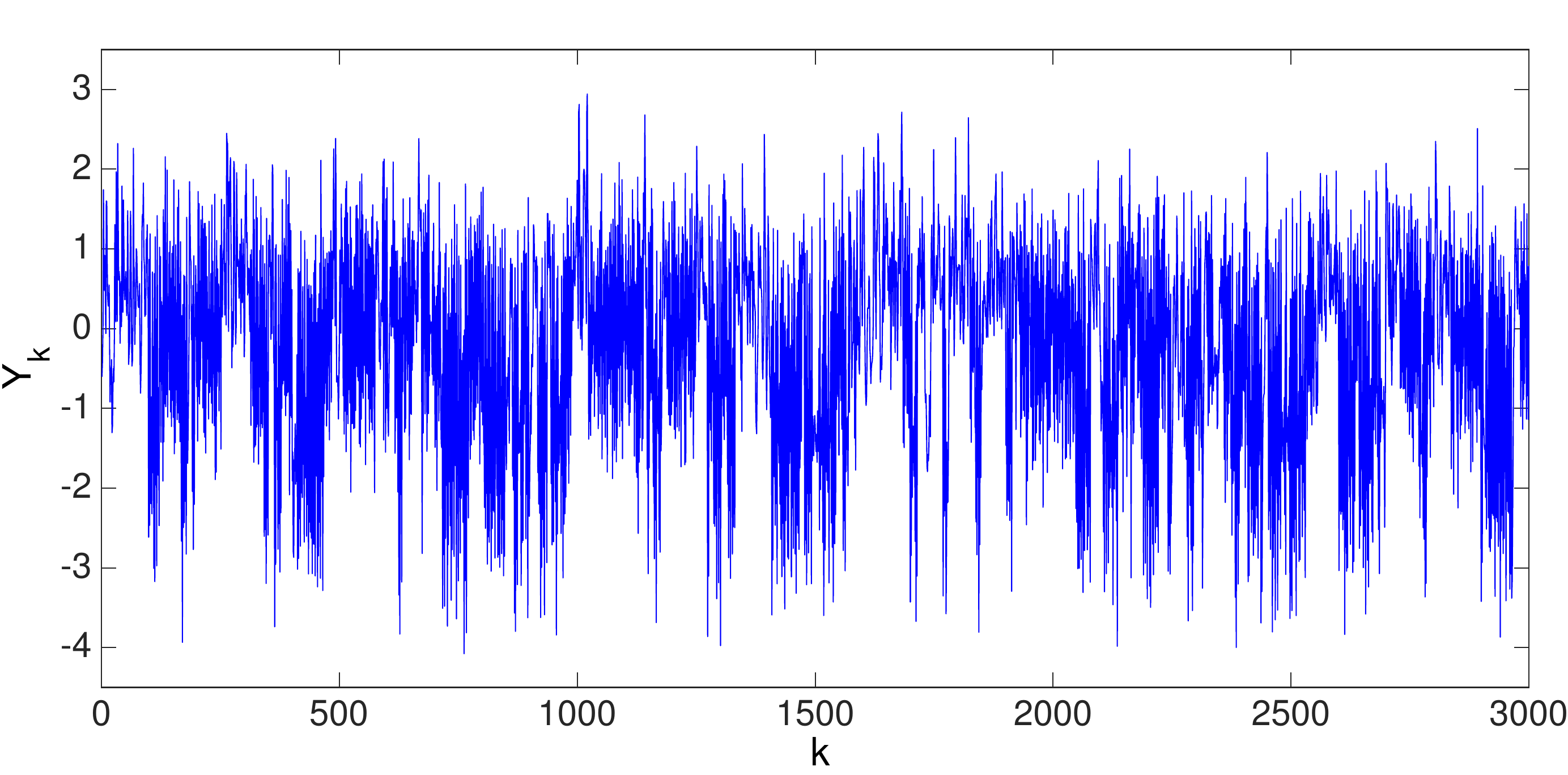}
\caption{\small{Simulated sample path $Y_{0:n}$ from an MS-NAR model with $m = 3$ states and $n = 3000$.}
}\label{fig1}
\end{figure}

%%%%%%%%%%%%%%%%%%%%%%%%%%%%%%%%%%%%%%%%%%%%%%%%%%%%%% Section 4.1 

\subsection{Fully observed data case}

Here, we assume that the complete data $\{Y_{0:n},X_{1:n}\}$ is available. Figure \ref{fig2} shows the scatter plot of $Y_k$ versus $Y_{k-1}$ with real state given by $X_k$. 
\begin{figure}[h]
\centering
\includegraphics[width=0.7\textwidth]{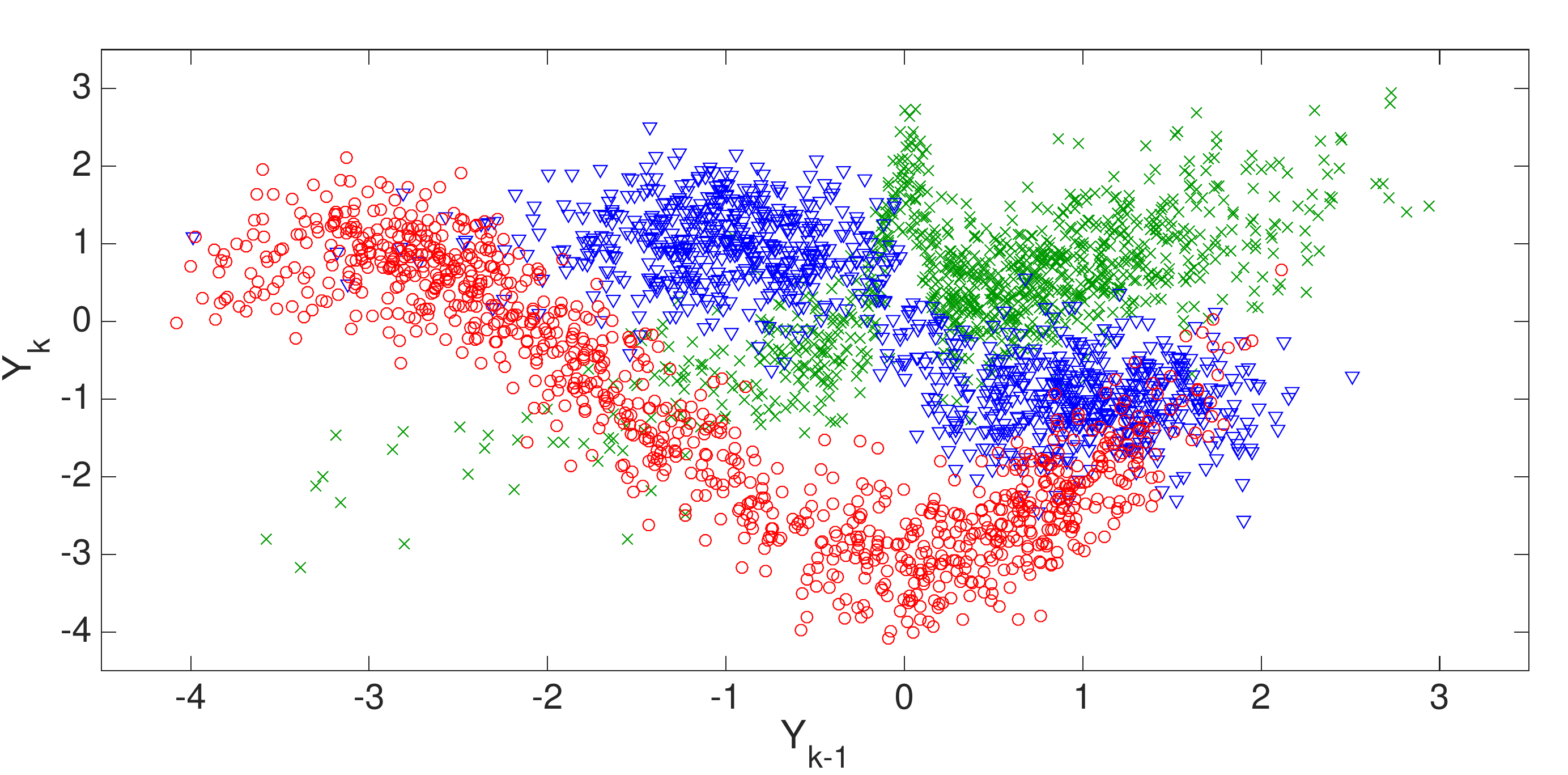}
\caption{\small{Scatterplot $Y_{k}$ versus $Y_{k-1}$ with real state classification. The points are labeled with respect to the real state of $X_k$: cross for $X_k=1$,  triangles for $X_k=2$, and circles for $X_k=3$.}
}\label{fig2}
\end{figure}

We consider the Nadaraya-Watson kernel estimators $\hat{r}_{i,n}$ defined in \eqref{NWEstim}, we use the ``Triweight'' kernel function. We take the bandwidth parameter $h_{n,i}=k_0\hat{\sigma_i}(\log(n)/n)^{1/5}$ for the state $i$, where $k_0=0.55$ and $\hat{\sigma}_i^2$ is the variance  estimate of $Y_k$ given $X_k=i$. In this simulation, we have $\hat{\sigma}_1^2=0.2630$, $\hat{\sigma}_2^2=0.2441$, and $\hat{\sigma}_3^2=0.2386$.

In Figure \ref{fig3}, we show how good the estimation is when dealing with samples of finite size which is the case in practice. We show the Nadaraya-Watson kernel estimators $\hat{r}_{i,n}$ along with the real functions $r_i$. 
\begin{figure}[h]
\centering
\includegraphics[width=0.7\textwidth]{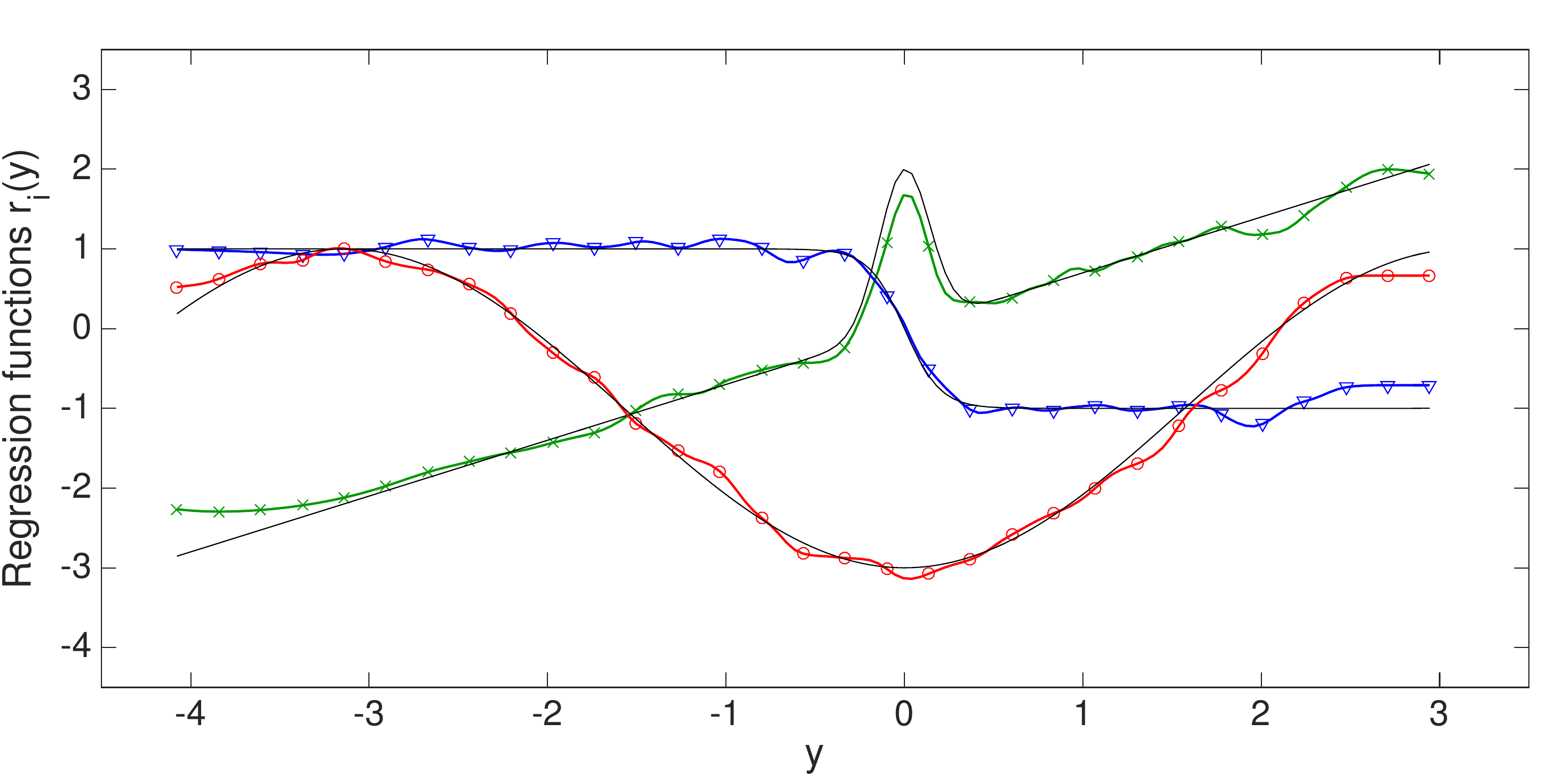}
\caption{\small{Non-parametric kernel regression function estimates in the fully observed data case. The real functions $r_i$ are shown with solid lines and the Nadaraya-Watson kernel estimators with: line with cross for $\hat{r}_{1,n}$, line with triangles for $\hat{r}_{2,n}$, line with circles for $\hat{r}_{3,n}$.}
}\label{fig3}
\end{figure}

The residual variance function, given in Theorem \ref{normalidadTO}, is estimated in the following way: the term $G_{2i}(y)/f_i(y)$ by $\hat{\sigma}^2_i$, the density function $f_i(y)=\mu_i p(Y_0=y)$ is estimate considering the non-parametric kernel density estimation $\hat{p}_0(Y_0=y)$ of the data $Y_{0:n}$ shown in Figure \ref{fig5}, and $\hat{\mu}_i=\frac{1}{n}\sum_{k=1}^n\car_{i}(X_k)$. For this simulation we have obtained $\hat{\mu}_1=0.3321$, $\hat{\mu}_2=0.3511$, and  $\hat{\mu}_3=0.3168$.  
\begin{figure}[h]
\centering
\includegraphics[width=0.7\textwidth]{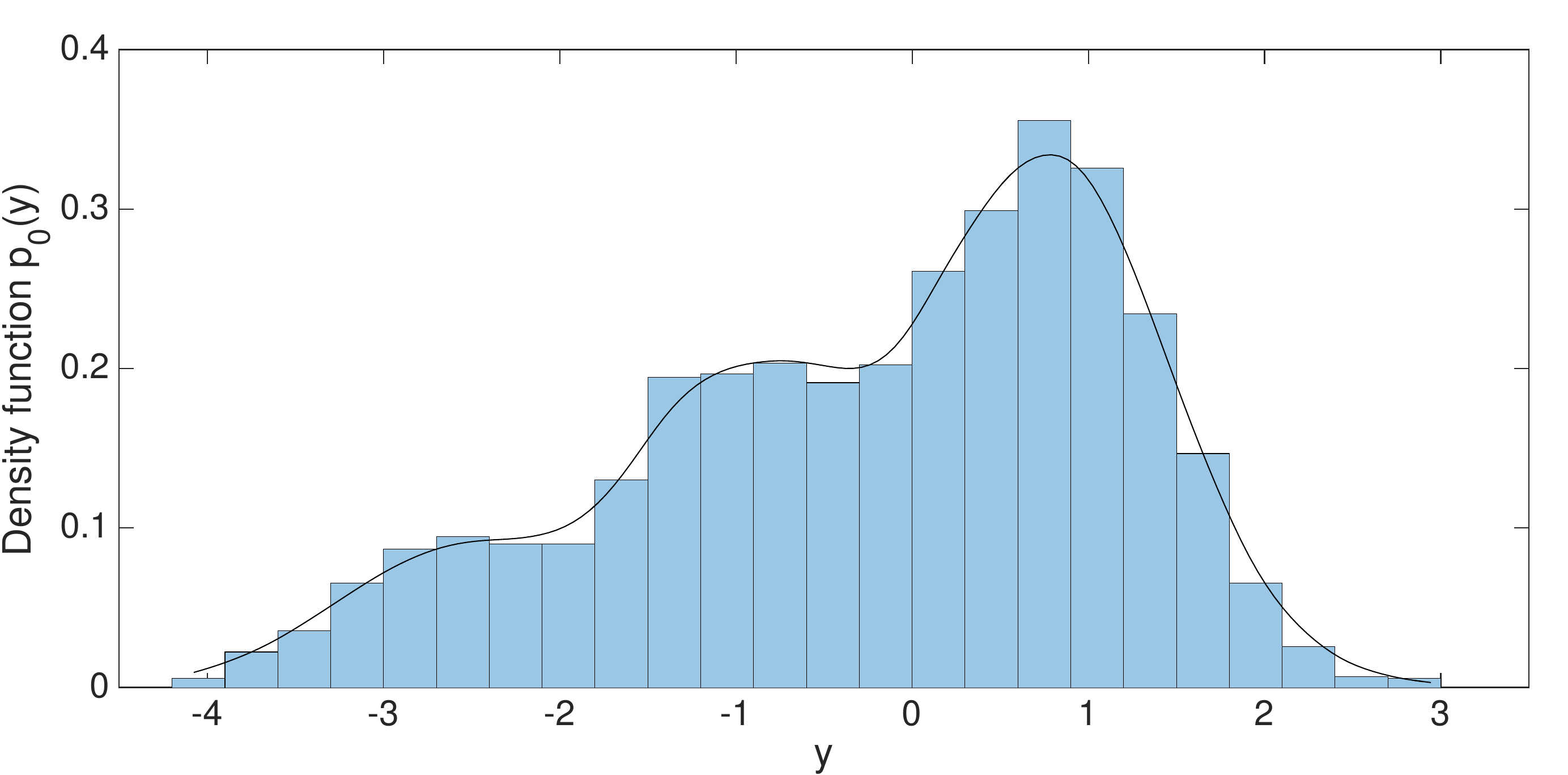}
\caption{\small{Non-parametric kernel density function estimate $\hat{p}_0(y)$ and the frequency histogram of $Y_{0:n}$.}
}\label{fig5}
\end{figure}

For the Triweight kernel function we have $\|K\|^2_2=350/429$. 
Thus, the $(1-\alpha)$-pointwise confidence interval at $y$ is estimated by 
$$\hat{r}_{i,n}(y) \pm \mathit{z}_{(1-\frac{\alpha}{2})}\frac{\hat{\sigma}_i\|K\|_2}{\sqrt{n h_n \hat{f}_{i,n}(y)}},$$ where $\alpha=0.05$ and $\mathit{z}_{(1-\frac{\alpha}{2})}$ is the $(1-\frac{\alpha}{2})$-quantile of the standard normal distribution.
In Figure \ref{fig4}, we show the Nadaraya-Watson kernel estimators $\hat{r}_{i,n}$ and the $95\%$ pointwise confidence bands determined from the asymptotic normality given in Theorem \ref{tasadeconvergencia}.
\begin{figure}[h]
\centering
\subfigure{\includegraphics[width=0.7\textwidth, height=4cm]{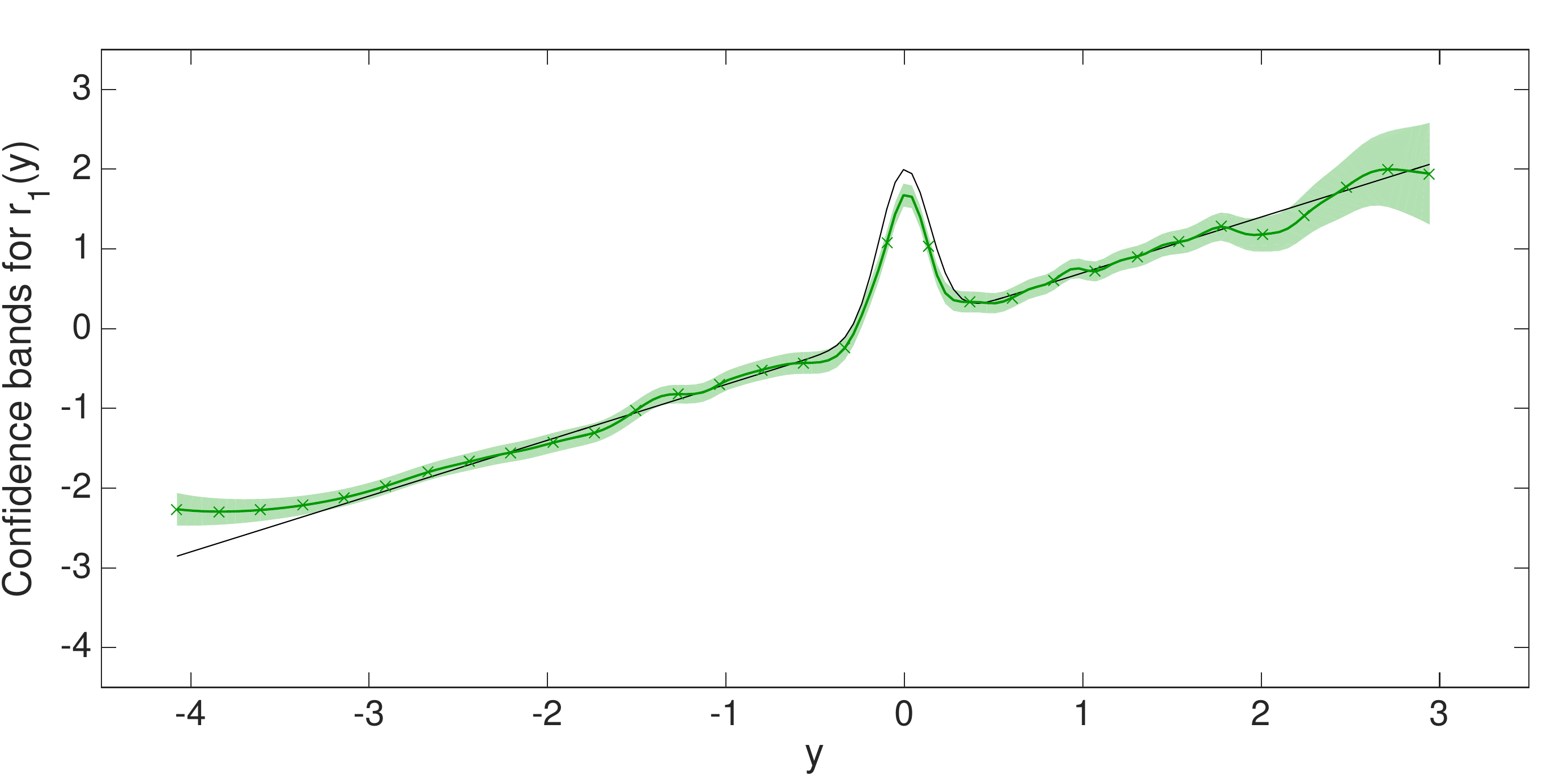}}
\subfigure{\includegraphics[width=0.7\textwidth, height=4cm]{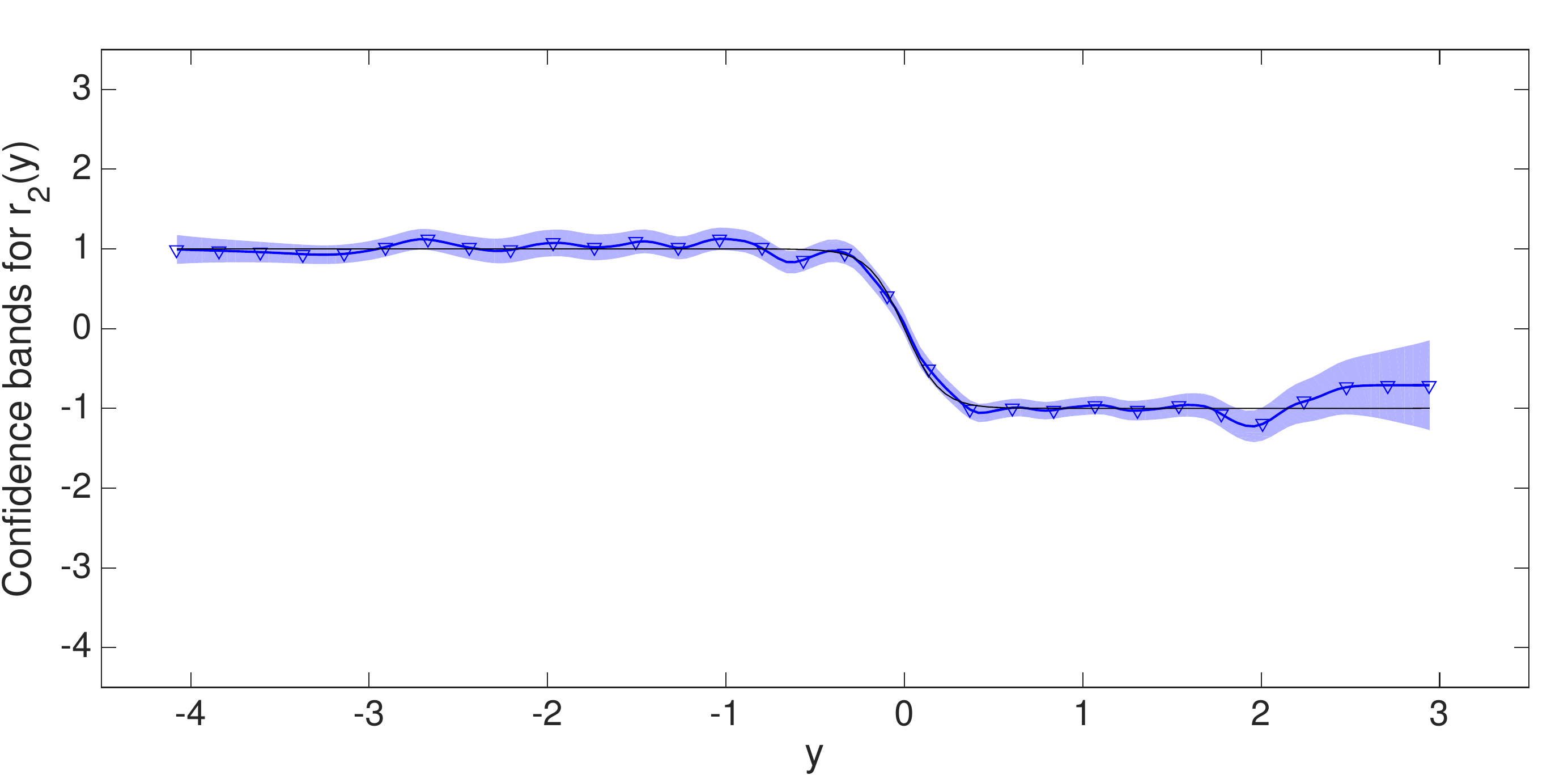}}
\subfigure{\includegraphics[width=0.7\textwidth, height=4cm]{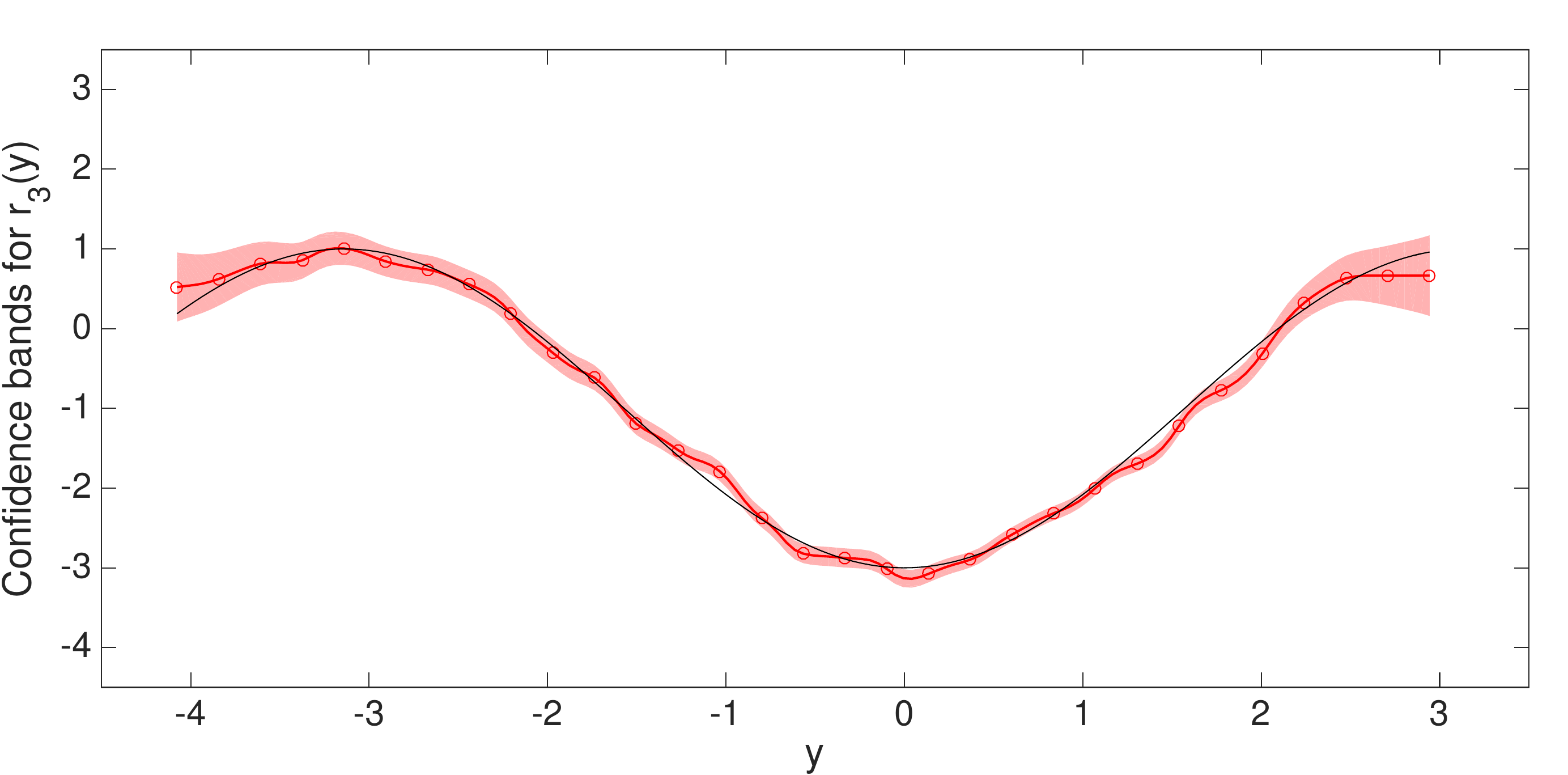}}
\caption{\small{Non-parametric kernel regression function estimates in the fully observed data case. The Nadaraya-Watson kernel estimators are shown with: line with cross for $\hat{r}_{1,n}$, line with triangles for $\hat{r}_{2,n}$, line with circles for $\hat{r}_{3,n}$. The real functions $r_i$ are shown with solid lines, and the fill areas correspond to the $95\%$ pointwise confidence bands.}
}\label{fig4}
\end{figure}

Finally, we show in Figure \ref{fig6} the non-parametric kernel estimation of the conditional density probability functions $p(Y_k = y | X_k = i)$.
\begin{figure}[h]
\centering
\includegraphics[width=0.7\textwidth]{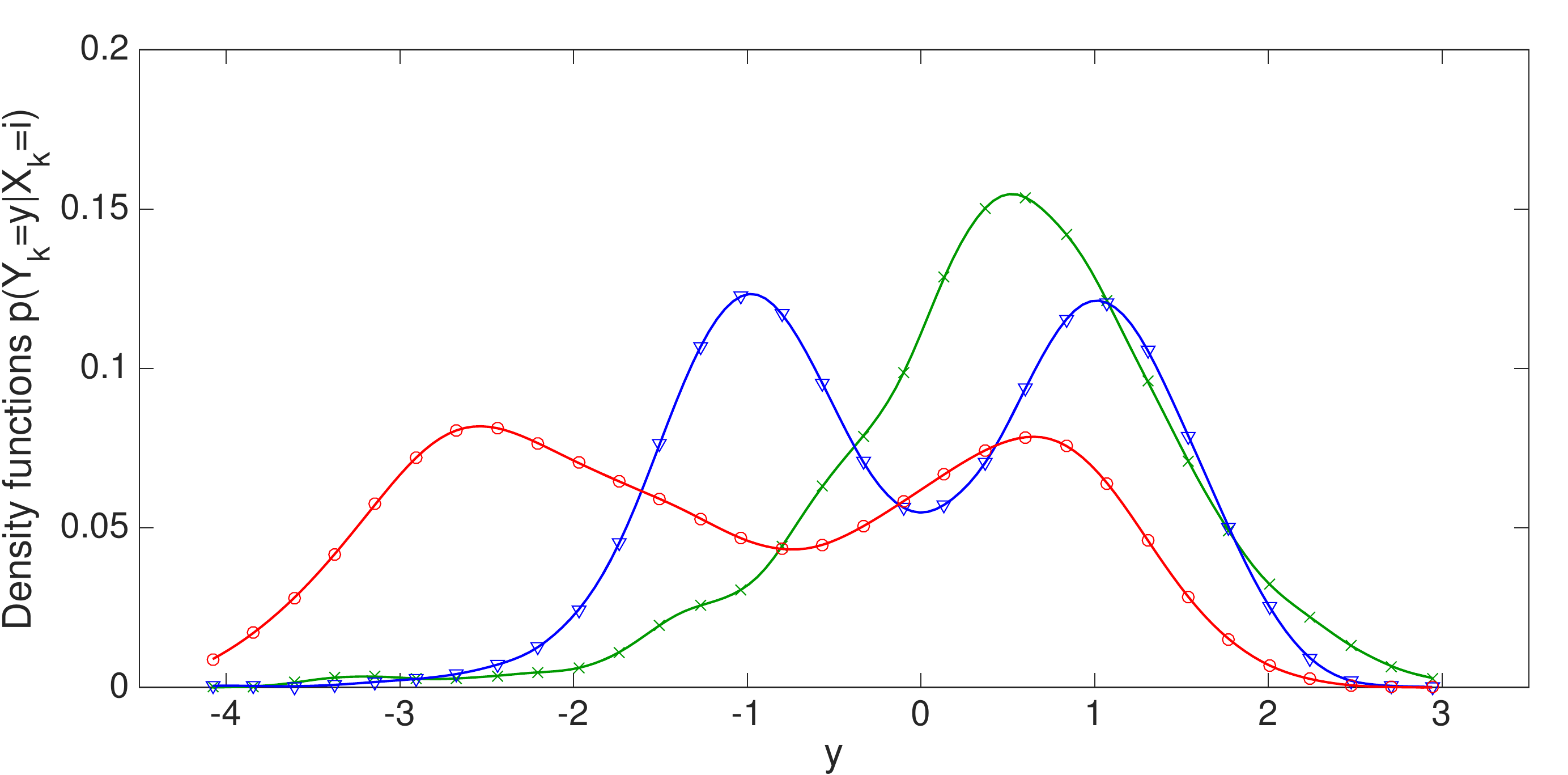}
\caption{\small{Non-parametric kernel estimation of  $p(Y_k = y | X_k = i )$}.
}\label{fig6}
\end{figure}

%%%%%%%%%%%%%%%%%%%%%%%%%%%%%%%%%%%%%%%%%%%%%%%%%%%%%% Section 4.2 

\subsection{Partially observed data case}

In this section, we implemented our Robbins-Monro algorithm for the data $Y_{0:n}$ considering that $X$ is hidden. 

We  take the ``Triweight'' kernel function $K$, the bandwidth parameter  $h_{n,i}=k_0\hat{\sigma_i^t}(\log(n)/n)^{1/5}$, where $k_0=0.55$ and $\hat{\sigma}_i^{(t)}$ is the standard deviation estimate of $Y_k$ given $X^{t}_k=i$ for the iteration $t$. The smoothing step $\gamma_t = t^{-0.6}$, and the initial estimates for the Markov chain $X_{1:n}^0$ in the Step 0 are assumed uniform random variables. Due to the complexity of the regression functions in this example, the estimation of an MS-AR model is not a good initial estimate. 
For $T=1000$ iterations of our Robbins-Monro algorithm, we show the following results.

The estimated matrix transition is
\begin{equation*}
\hat{A}=\left(\begin{array}{ccc}
 0.9145  & 0.0494 & 0.0361\\
0.0528  & 0.8729  & 0.0743\\
0.0305  &  0.0912 & 0.8783
\end{array}\right).
\end{equation*}

Figure \ref{fig7} shows the square error for the estimate $A^t$. We can note that the convergence is quickly reached for the probability transition matrix $A^t$.
\begin{figure}[h]
\centering
\includegraphics[width=0.7\textwidth]{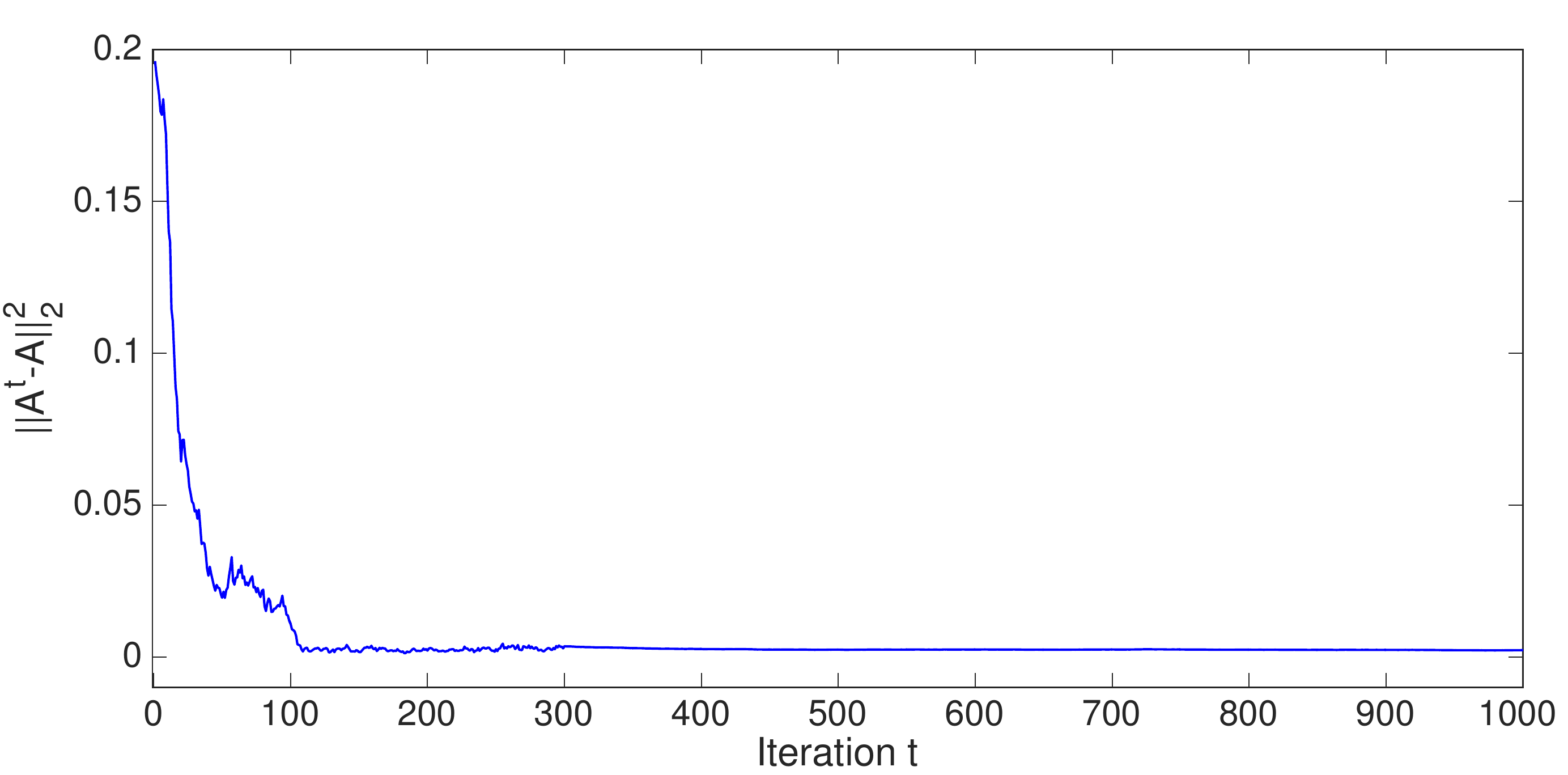}
\caption{\small{The square estimation error $\|A^t-A\|_2^2$, for iteration $t=1:T$.}
}\label{fig7}
\end{figure}

We recall that on the Step R of our algorithm, we restore the unobserved data by drawing a sample $X_{1:n}^{t}$ from the conditional distribution $p(X_{1:n}|Y_{0:n},\psi^{t-1})$.
 Figure \ref{fig8} shows the marginal probability measures $p(X_{k}=i|Y_{0:n},\psi^{t-1})$ of the the conditional distribution $p(X_{1:n}|Y_{0:n},\psi^{t-1})$, for iteration $t=1000$.
\begin{figure}[h]
\centering
\subfigure{\includegraphics[width=0.7\textwidth, height=4cm]{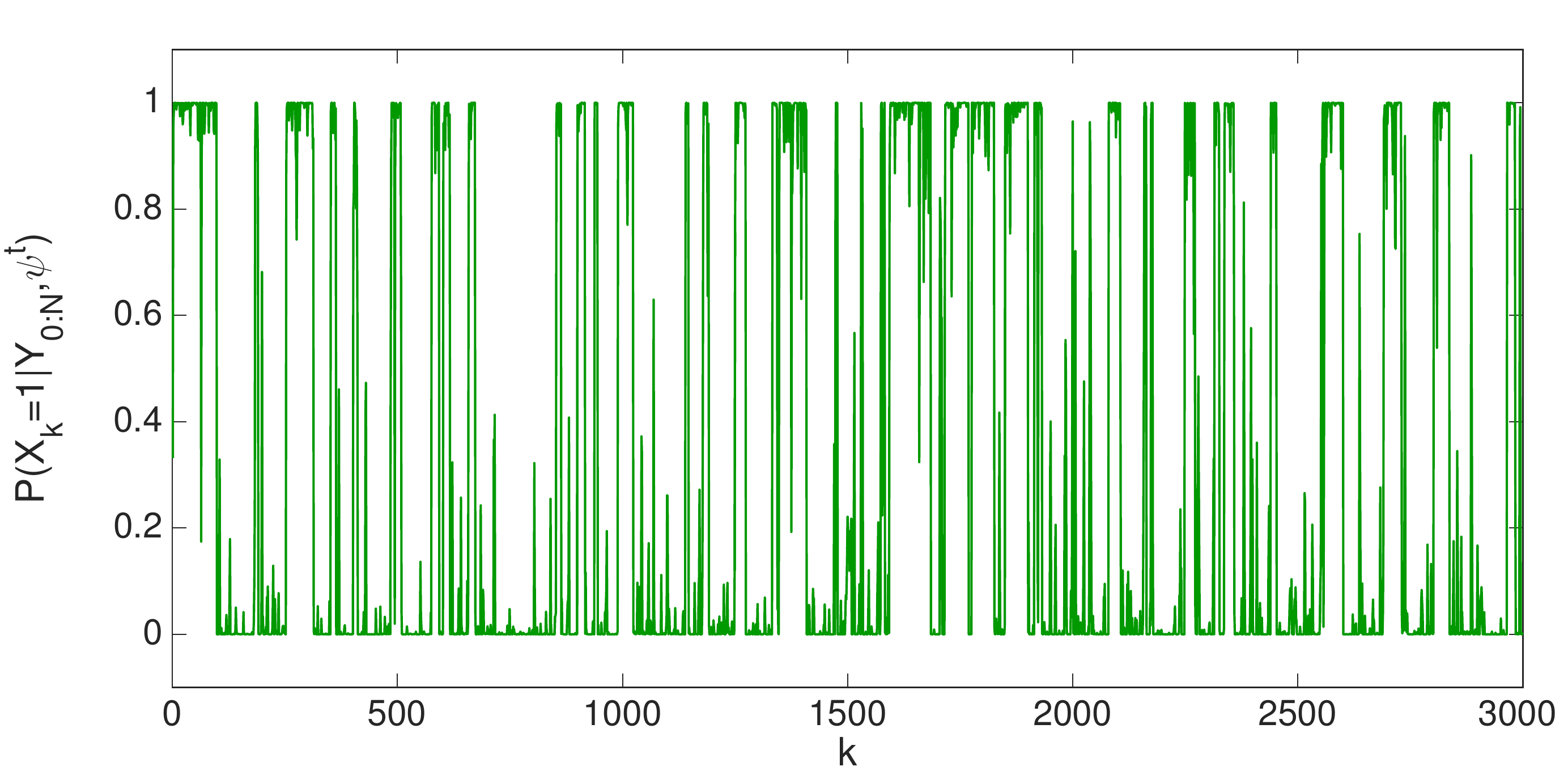}}
\subfigure{\includegraphics[width=0.7\textwidth, height=4cm]{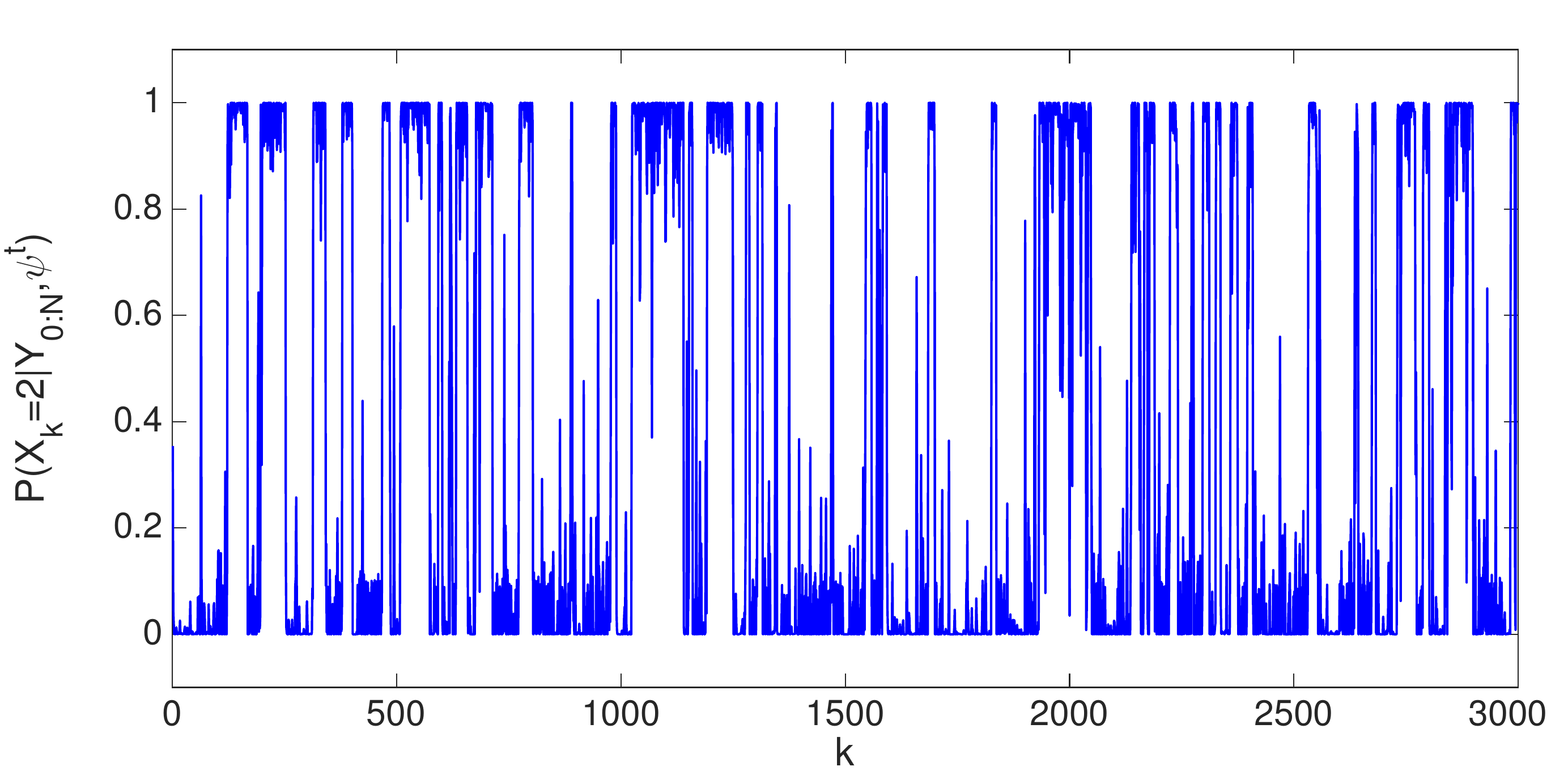}}
\subfigure{\includegraphics[width=0.7\textwidth, height=4cm]{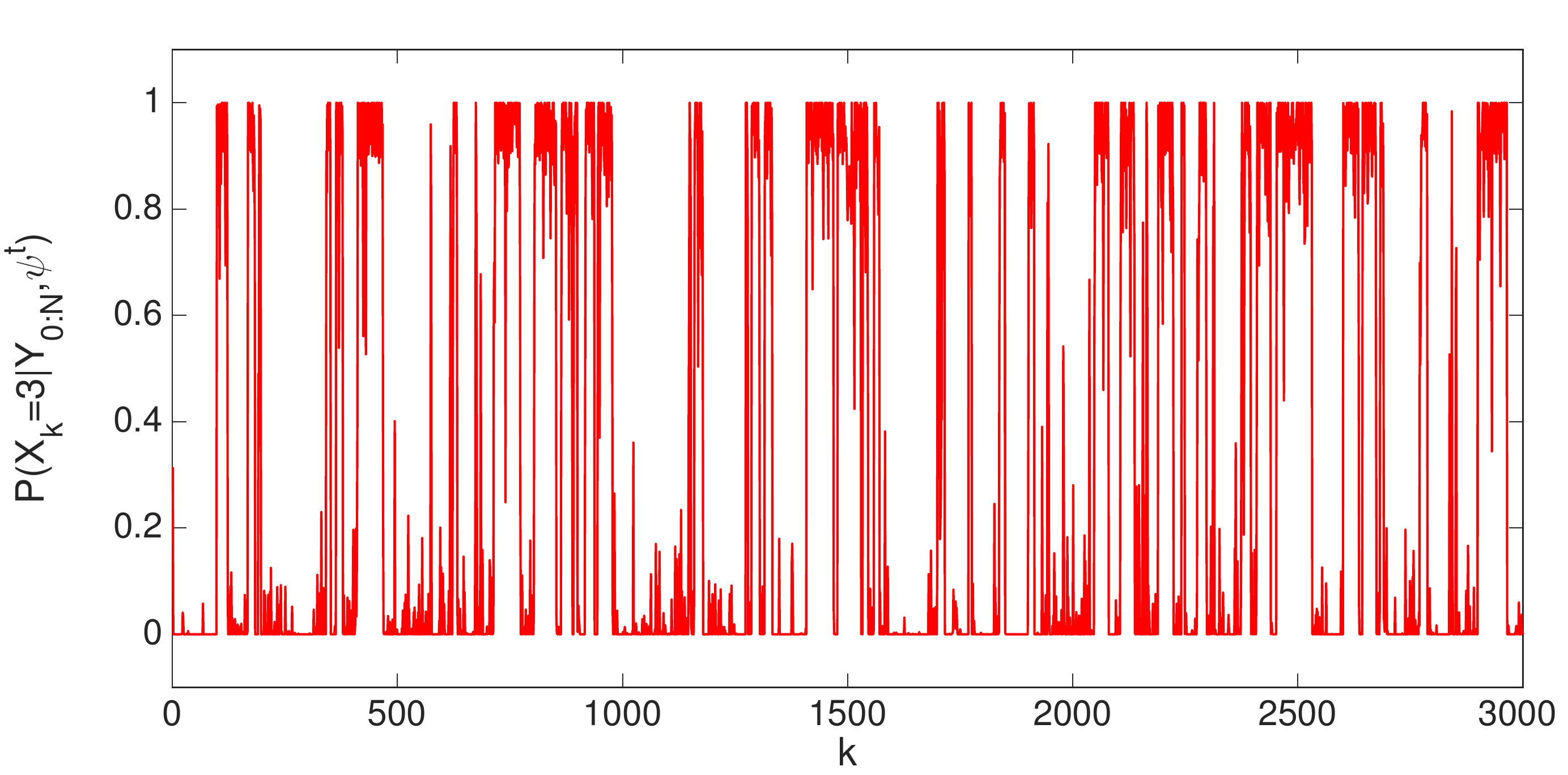}}
\caption{\small{Conditional probability  of Markov regime $X_k$ given the observed data $Y_{0:n}$.}
}\label{fig8}
\end{figure}

The probability of classify $X_k^t$ on state $i$ given that the real data $X_k$ is on state $j$ is estimated by: 
\begin{eqnarray*}
P(X_k^t=i | X_k =j ) &=& \frac{1}{n_j}\sum_{k=1}^n \car(X_k^t=i, X_k=j)\\
 &=&\left(\begin{array}{ccc}
    0.9435   \;& 0.0348  \;&  0.0413 \\
    0.0370   \;& 0.9177  \;&  0.0705 \\
    0.0195   \;& 0.0474  \;&  0.8882
\end{array}\right),
\end{eqnarray*}
where $n_j=\sum_{k=1}^n \car(X_k=j)$. 

Thus, the misclassification rate is 
\begin{equation*}
P(X_k^t\neq X_k)=  1- \frac{1}{n}\sum_{k=1}^n \sum_{i=1}^m \car(X_k^t=X_k=i)= 0.0837,
\end{equation*}
i.e $8.37\%$ of the data $X$ have been misclassified.

Figure \ref{fig9} shows the scatter plot of $Y_k$ versus $Y_{k-1}$ with estimated state $X^t_{1:n}$.

\begin{figure}[h]
\centering
\includegraphics[width=0.7\textwidth]{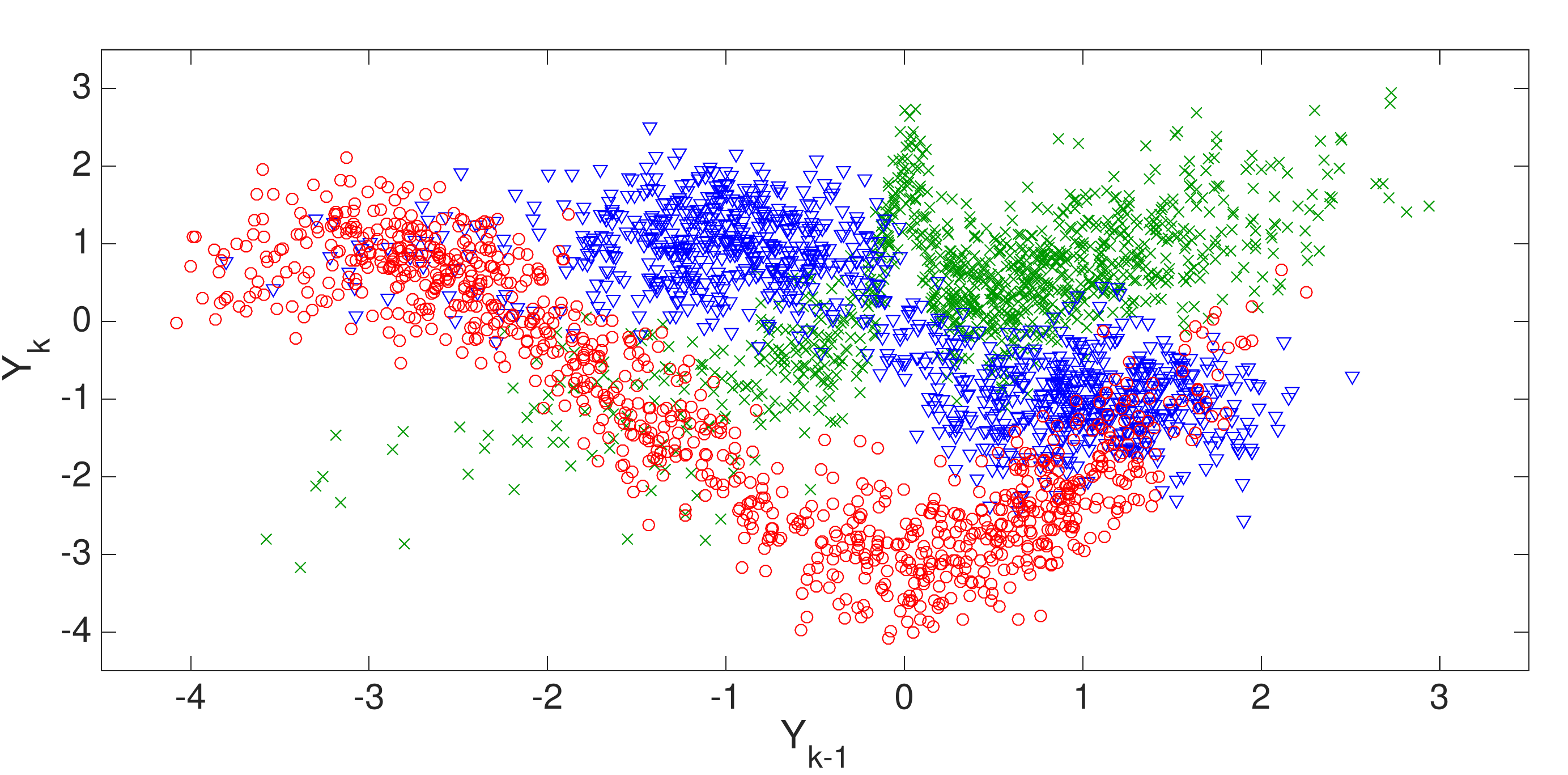}
\caption{\small{Scatter plot of $Y_{k}$ versus $Y_{k-1}$ with estimated state classification obtained by our Robbins-Monro algorithm. The points are labeled with respect to the estimated state of $X$: cross for $X_k^t=1$, triangles for $X_k^t=2$, and circles for $X_k^t=3$.}
}\label{fig9}
\end{figure}

\vspace{0.7cm}
The non-parametric regression functions obtained by the Robbins-Monro procedure are shown in Figure \ref{fig10}.

\begin{figure}[h]
\centering
\includegraphics[width=0.7\textwidth]{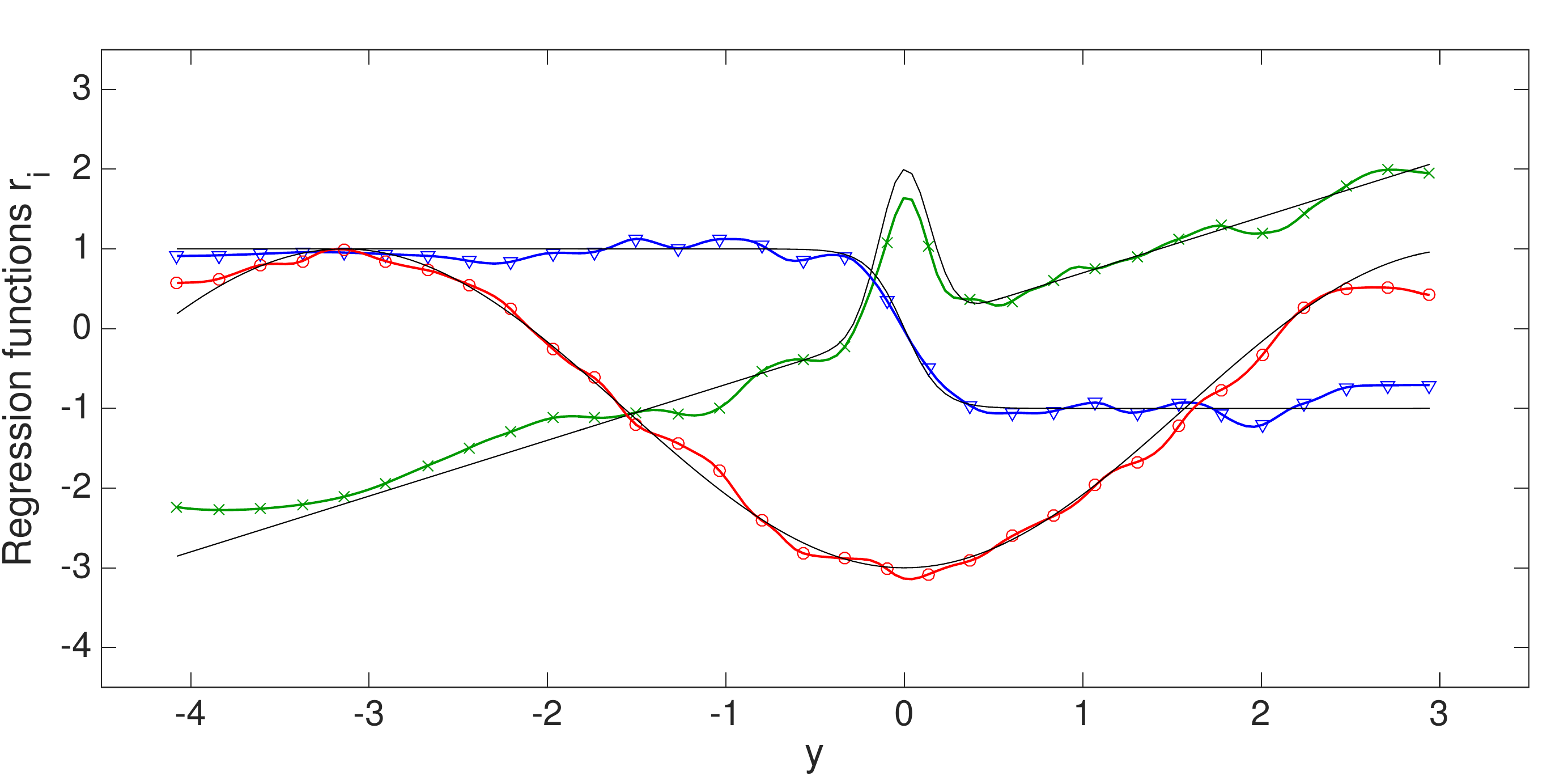}
\caption{\small{Non-parametric estimation of regression functions $r_i$, for i=1:m, obtained by our Robbins-Monro algorithm. The real functions $r_i$ are shown with solid lines and the estimates with: line with cross for $\hat{r}_{1,n}$, line with triangles for $\hat{r}_{2,n}$, line with circles for $\hat{r}_{3,n}$.}
}\label{fig10}
\end{figure}

In Figure \ref{fig11}, we show the non-parametric regression function estimates $\hat{r}^{t}_{i,n}$ obtained by the Robbins-Monro algorithm along with the Nadaraya-Watson kernel estimators $\hat{r}^{*}_{i,n}$. Moreover, we display the $95\%$ pointwise confidence bands determined from the asymptotic normality given in Theorem \ref{tasadeconvergencia}.
\begin{figure}[ht]
\centering
\subfigure{\includegraphics[width=0.7\textwidth, height=4cm]{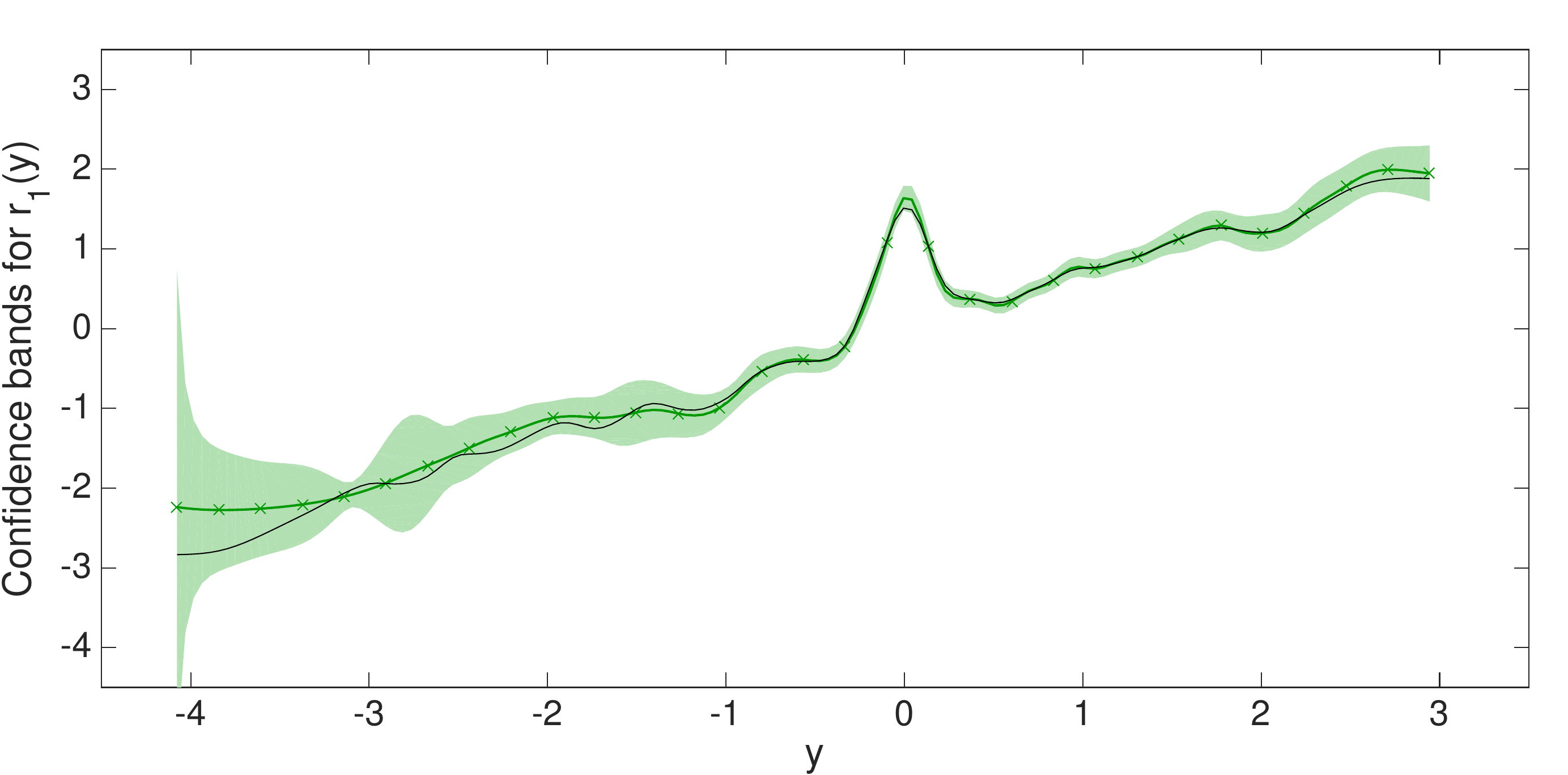}}
\subfigure{\includegraphics[width=0.7\textwidth, height=4cm]{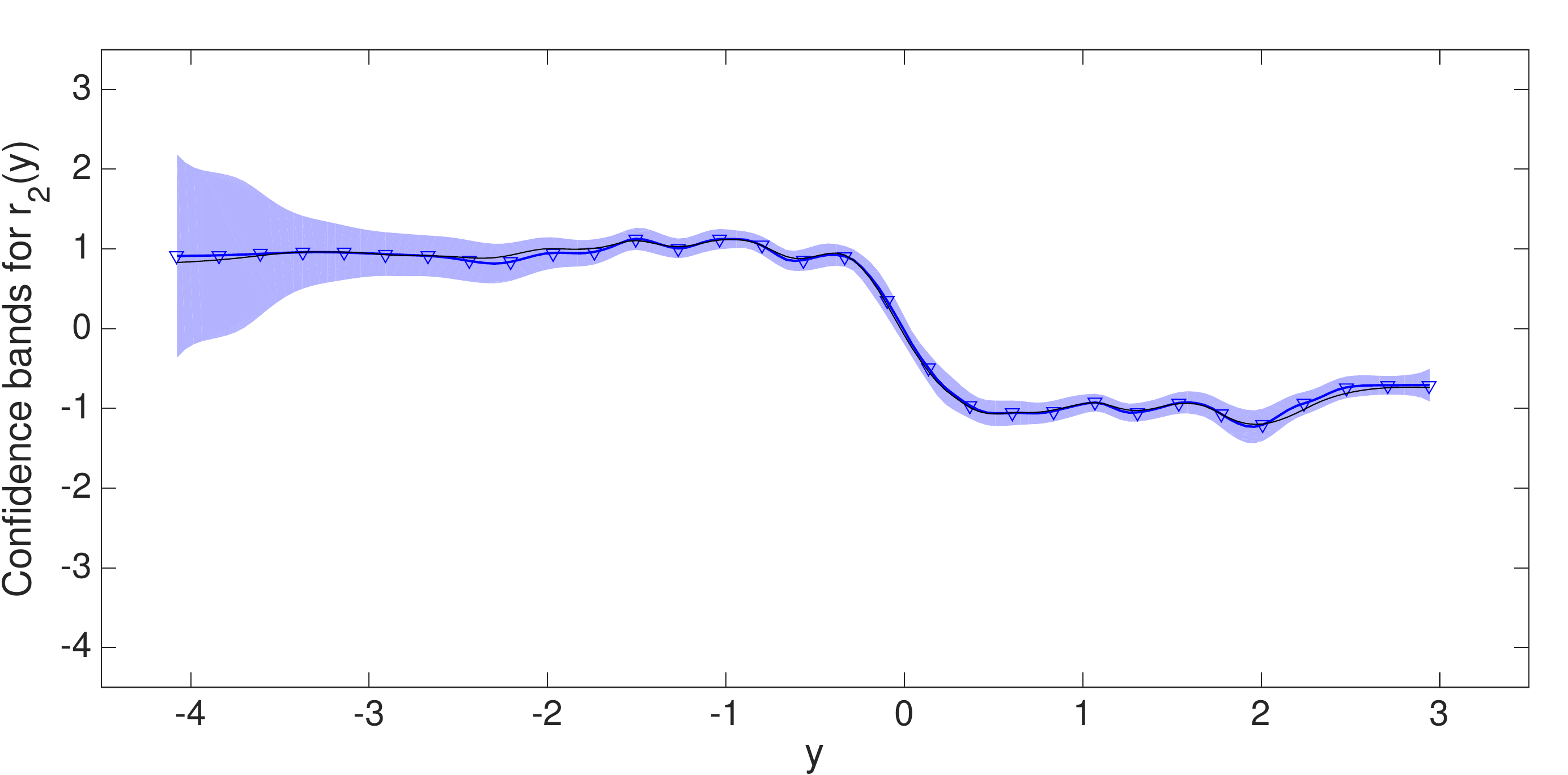}}
\subfigure{\includegraphics[width=0.7\textwidth, height=4cm]{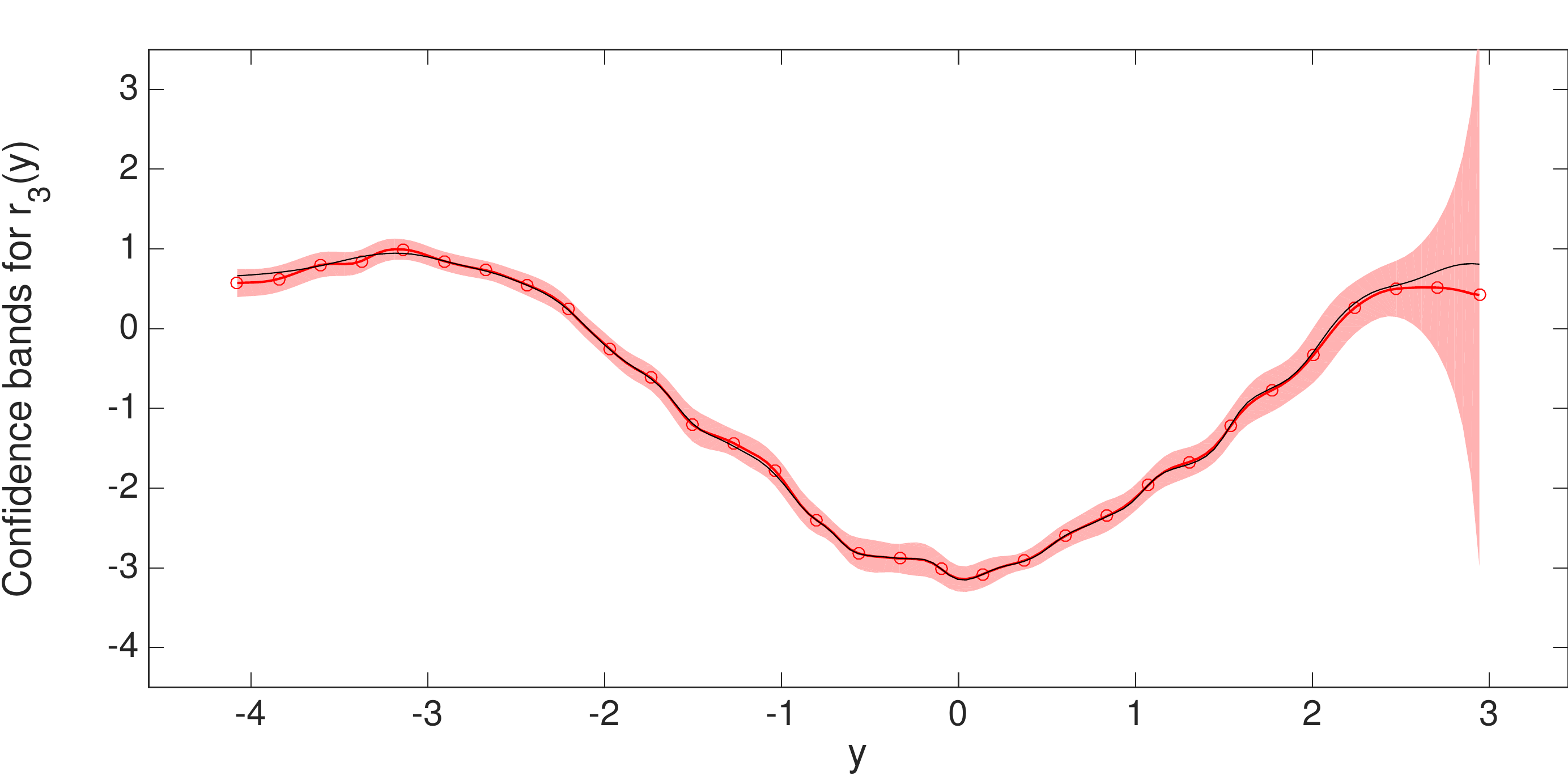}}
\caption{\small{Non-parametric regression function estimates  obtained by means of the Robbins-Monro algorithm. The non-parametric regression function estimates $\hat{r}^{t}_{i,n}$ are shown with: line with cross for $\hat{r}^t_{1,n}$, line with triangles for $\hat{r}^t_{2,n}$, line with circles for $\hat{r}^t_{3,n}$. The Nadaraya-Watson kernel estimators $\hat{r}^{*}_{i,n}$} are shown with solid lines, and the fill areas correspond to the $95\%$ pointwise confidence bands.
}\label{fig11}
\end{figure}

The residual variance function given by Theorem \ref{tasadeconvergencia} is estimated in the following way: First, replacing $\psi^*=(A^*, \theta^*)$ by $\psi^t=(A^t, \theta^t)$, and $p(X_k|Y_{0:n},\psi^*)$ by $p(X_k|Y_{0:n},\psi^t)$. Second, we compute the Nadaraya-Watson kernel estimators $\theta^*$ as in Remark \ref{ThetaAster}. Third, we evaluate the second derivatives of $u(y,Y_{0:n},\theta)$ at $\theta^*$ to obtain the matrix $\nabla_\theta^2 u(y,Y_{0:n},\theta^*)$. We compute the covariance matrix $\Gamma$ using  the results given in Lemma \ref{cond_martingale} and Remark \ref{covBernoulli}. Finally, the $(1-\alpha)$-pointwise confidence interval at $y$ is estimated from the asymptotic normality given in Theorem \ref{tasadeconvergencia}.

Figure \ref{fig12} shows the residual variance estimates $\hat{\sigma}_i^2(t)$. For the last iteration, t=1000, we have obtained $\hat{\sigma}_1^2(t)= 0.2707$, $\hat{\sigma}_2^2(t)=0.2500$, and $\hat{\sigma}_3^2=0.2483$. In this figure, we can see that the convergence is quickly approaching, considering that the real residual variance is $\sigma^2=0.25$.

\begin{figure}[h]
\centering
\includegraphics[width=0.7\textwidth]{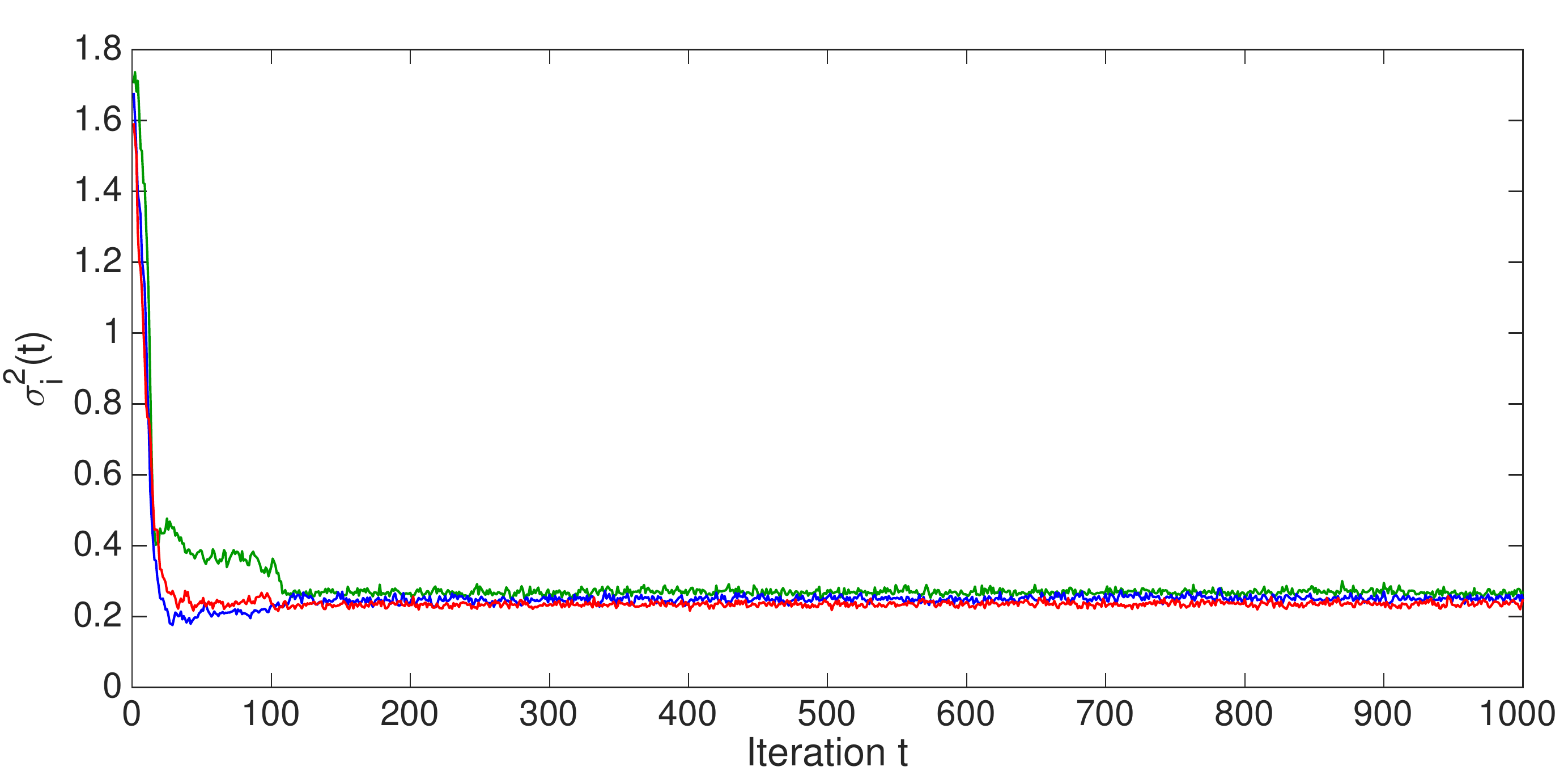}
\caption{\small{Residual variance estimates $\sigma_i^2(t)$, for iteration $t = 1:T$ and state $i = 1:m$}
}\label{fig12}
\end{figure}

Finally, we show in Figure \ref{fig13} the non-parametric kernel estimation of conditional density probability functions $\hat{f}_{i,n}(y)=p(Y_k = y | X_k^t = i)$. Note that $\nabla_\theta^2 u(y,Y_{0:n},\theta^*)$ is a diagonal matrix with $i$-th diagonal component given by $2f_i(y)$, thus for the values of $y$ where $\hat{f}_{i,n}(y)$ is nearly to zero we have larger pointwise confidence intervals.   
\begin{figure}[h]
\centering
\includegraphics[width=0.7\textwidth]{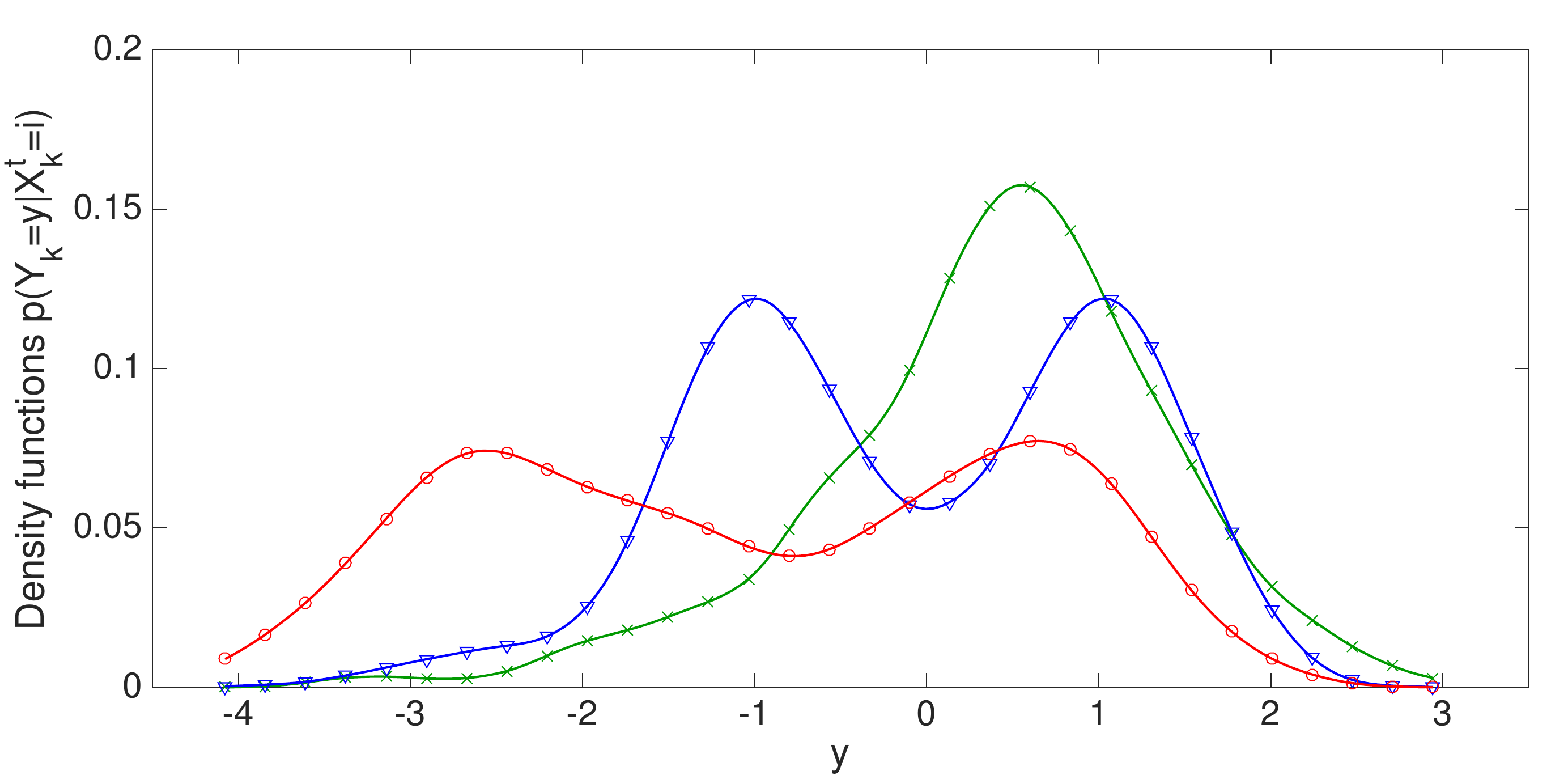}
\caption{\small{Non-parametric kernel estimation of  $p(Y_k = y | X_k^t = i )$.}
}\label{fig13}
\end{figure}

Notice that the algorithm has a good  performance. Moreover, convergence  is quickly reached for  the probability transition matrix $A^t$. It is seen  that the algorithm has difficulties in identifying the state of the Markov chain at intersection points of the regression functions. A loss of numerical identifiability occurs at these points due to the discretization step and the sample size. The misclassification rate is less than $10\%$ of the data, however the regression functions are well estimated.
A misclassification of data arises if the variance is large with respect to the range of regression functions. 

The variance of the regression function estimators increases at the edges of its supports, this is because there are few data at the edges and the kernel estimation is more sensitive to the choice of the bandwidth $h_n$. For this reason it is advisable to combine our algorithm with local adaptive selection methods of the bandwidth, $h_n(y)$. This is a open problem that will be solved in a future work. Finally, when $m$ is large, the algorithm is more sensitive to both the choice of the starting point and the selection of the $h_n$, see \cite[Section 4]{Rico-Lisandro-Luis}.  This is probably due to the fact that as the number of  parameters  increase so does  the complexity of the model. This is, clearly,  a consequence of what has been dubbed as  the curse of  dimensionality.  

\subsection{Discussion, limitations and perspectives}
\label{sec:discussion}

In this section, we address several pivotal practical aspects regarding the implementation and robustness of the proposed estimation procedure, specifically concerning the bandwidth selection, the dimensionality of the state space, and the estimation of asymptotic variance.

\paragraph{Bandwidth selection.}
As is the case for all kernel-based estimates, the choice of the bandwidth $h_n$ is crucial. In our study, we adopted a data-driven approach by scaling the bandwidth with the standard deviation estimate of the observation for each state. This choice ensures that condition S1 is satisfied while adapting to the scale of the data. While cross-validation (CV) methods are popular in density estimation, their computational cost in the context of the proposed iterative Robbins-Monro algorithm—which requires running the restoration step at each iteration—is prohibitive.
Furthermore, as observed in Figures 5 and 11, the variance of the regression estimates increases at the boundaries of the domain. This boundary effect is a well-known phenomenon in kernel regression with fixed bandwidths. Although local adaptive bandwidths could mitigate this issue and improve performance in sparse regions, the theoretical derivation of asymptotic normality under adaptive bandwidths for MS-NAR processes introduces significant technical challenges that fall beyond the scope of this paper. The results presented here confirm that our choice of $h_n$ yields valid confidence intervals in the regions with sufficient data density, validating the theoretical CLT. An alternative is to employ a data-driven procedure using kernel estimation with bandwidth selected via the Goldenshluger-Lepski approach (see \cite{Bertin2025}).

\paragraph{Influence of the state dimension $m$.}
The computational complexity of the proposed algorithm is largely dominated by the restoration step, which relies on the Forward-Backward algorithm. The cost of this step is of order $O(n \cdot m^2)$. Consequently, while the theoretical results hold for any finite $m$, the practical application to high-dimensional state spaces ($m \gg 1$) requires careful consideration. As $m$ increases, the risk of label switching and the time required for the Markov chain to explore the state space (mixing time) increase, potentially slowing down the convergence rate of the Robbins-Monro procedure. In our simulation with $m=3$, the algorithm converges efficiently. However, for significantly larger $m$, one might consider combining this approach with particle filters or variational approximations to alleviate the computational burden. To our knowledge, the theoretical study of the number of states in MS-NAR models has not been addressed. One alternative is to generalize the method proposed in \cite{Chaumaray2024} for non-parametric hidden Markov models, using the rank of an integral operator relying on the distribution of a pair of consecutive observations. In our case, this would involve using Proposition 2.3 in \cite{Rico-Lisandro-Luis}.

\paragraph{Estimation of the asymptotic covariance.}
It is important to note that the asymptotic covariance matrix $\Gamma$ (and consequently $\Sigma^*_n(y)$ in Theorem 3.2) depends on unknown quantities. In our numerical study, we replaced these theoretical values with their consistent empirical counterparts obtained from the final iteration of the algorithm (as detailed in Remark 3.4). By Slutsky's theorem, replacing the true asymptotic variance with a consistent estimator does not alter the asymptotic distribution of the standardized quantity. Thus, the confidence bands shown in Figures 5, 11, and 12 are asymptotically valid approximations. The simulation results support this theoretical claim, as the true functions largely remain within the estimated confidence regions.

\paragraph{Model complexity and robustness.}
Finally, regarding the simulation setup, we deliberately chose a model with $m=3$ regimes that shares structural similarities with previous studies but introduces specific complexities, such as intersecting regression functions (see Figure 3). These intersection points present a challenging scenario for identifiability and classification, providing a robust test bed for verifying the Asymptotic Normality results. By maintaining a comparable structure to previous works, we isolate the contribution of the Central Limit Theorem derived in this paper, specifically demonstrating the accuracy of the confidence intervals which were not available in prior consistency studies.

%%%%%%%%%%%%%%%%%%%%%%%%%%%%%%%%%%%%%%%%%%%%%%%%%%%%%%%%%%%%%%%%
%%%%%%%%%%%%%%%%%%%%%%%%%%%%%%%%%%%%%%%%%%%%%%%%%%%%%% Appendix 

\section{The proofs}\label{appA}

%%%%%%%%%%%%%% Proof of Lemma 3.2
{\bf Proof of Lemma \ref{asympt_cov2}.}
First, from H\"older inequality
\begin{equation}\label{ineq_lemma5}
\mathbb{E}\left(Y_1^2\car_{\{|Y_1|>M_n\}}K^2_h\left(y-Y_0\right)\right) 
\leq
\mathbb{E}(|Y_1|^{s}) \|Y_1^{2-s}\car_{\{|Y_1|>M_n\}}K^2 \|_{\infty} \\
\leq \mathbb{E}(|Y_1|^{s})\|K\|_{\infty}^2M_n^{2-s}.
\end{equation}
Thus, from conditions B1 and M2, we deduce that 
\begin{equation*}
\mathrm{Var}(R_{0,n})\leq  b^{2}\mathbb{E}(|Y_1|^{s})\|K\|^2_{\infty} M_n^{2-s}.
\end{equation*}
Here, $\mathbb{E}(|Y_1|^{s})< \infty$ $y$-a.s, from conditions E3 and M2.

On the other hand, from conditions E3, M2 and B1, by the Cauchy-Schwarz inequality and \eqref{ineq_lemma5},
we bound
\begin{equation*}
\mathrm{Cov}(R_{0,n},R_{k,n})
\leq \sqrt{\mathrm{Var}(R_{0,n})}\sqrt{\mathrm{Var}(R_{k,n})}
\leq b^{2}\mathbb{E}(|Y_1|^{s})\|K\|^2_{\infty} M_n^{2-s}.
\end{equation*}

Finally, in the similar way that in the proof of Lemma~\ref{asympt_cov}, from conditions 
E1, E6, D1, B2, R3 and inequality \eqref{exisdensidad_ii},  we have that
\begin{eqnarray*}
\mathrm{Cov}(R_{0,n},R_{k,n})
&\leq&
\!\!\!b^{2}
\mathbb{E}\left(Y_1K_h\left(y-Y_0\right)\car_{i}(X_1)Y_{k+1}K_h\left(y-Y_k\right)\car_{i}(X_{k+1})\right)\\
&=&\!\!\!b^2A^{(k)}_{i,i}\mu_i\int\!\!\!\!\int r_{i,k}(t,s)K_h\left(y-t\right)K_h\left(y-s\right)p(Y_0=t,Y_k=s)dtds\\
&=&\!\!\!h^2b^2A^{(k)}_{i,i}\!\mu_i\!\!\!\int\!\!\!\!\int\!\!\! r_{i,k}(y\!-\!ht,y\!-\!hs)K(t)K(s)p(Y_0\!=\!y\!-\!ht,Y_k\!=\!y\!-\!hs)dtds\\
&\preceq&\!\!\! h^2b^2r_{i,k}(y,y)A^{(k)}_{i,i}\mu_i\|\Phi\|_{\infty} + o(h^3).
\end{eqnarray*}
\cqd

%%%%%%%%%%%%%%%% Proof of Theorem 3.1 
{\bf Proof of Theorem \ref{normalidadTO}}.
We proceed as in \cite[Lemmas 4.3 and 4.4]{Ango-Rico-Dupoiron}. First, 
we show the following vectorial CLT for each $y$,
\begin{equation}
\label{TCLvectorial} 
\sqrt{nh_n}(\hat{f}_{i,n}(y)-f_i(y),\hat{g}_{i,n}(y)-g_i(y))\to\mathcal{N}(0,\Sigma_i(y)),
\end{equation}
where
\begin{equation*}
\Sigma_i(y)= \left(\begin{array}{cc}
               f_i(y) & g_i(y)\\
               g_i(y) & {g_2}_i(y)
              \end{array}\right)\|K\|_2^2.
\end{equation*}

We use the Cram\'er-Wold device: the vectorial CLT is equivalent to proof that for each $(a,b)\in\mathbb{R}^2$
\begin{equation}
\label{TCLCramer}
\sqrt{nh_n}(a(\hat{f}_{i,n}(y)-f_i(y))+b(\hat{g}_{i,n}(y)-g_i(y))\to\mathcal{N}(0,V_i(y)),
\end{equation}
where $n, h_n$ satisfy condition S1, and
$V_i(y)=\left(a^2f_i(y)+2abg_i(y)+b^2{g_2}_i(y)\right)\|K\|_2^2$.

Define, as in Lemma~\ref{asympt_cov}, the term $T_{k,n}$
%$$
%T_{k,n}=aK_{h_n}(y-Y_k)\car_i(X_{k+1})+bY_{k+1}\car_{\{|Y_{k+1}|\leq M_n\}}K_{h_n}(y-Y_k)\car_i(X_{k+1}),
%$$
where $\{M_n\}_{n\geq0}$ is a positive sequence converging to infinity, when $n\to \infty$.
Thus, the kernel estimator $\hat{l}_n(y) = a\hat{f}_{i,n}(y)+b\hat{g}_{i,n}(y)$ is splited in a truncated part
\begin{equation*}
\tilde{l}_n(y)=\frac{1}{nh_n}\sum_{k=0}^{n-1} T_{k,n},
\end{equation*}
and the remaining part of the truncation is
\begin{equation*}
\hat{l}_n(y)-\tilde{l}_n(y)=\frac{1}{nh_n} \sum_{k=0}^{n-1}R_{k,n}, 
\end{equation*}
with $R_{k,n}$ defined as in Lemma~\ref{asympt_cov2}.
%, i.e.
%$$
%R_{k,n}=b Y_{k+1}\car_{\{|Y_{k+1}|> M_n\}}K_{h_n}\left(y-Y_k\right)\car_i(X_{k+1}).
%$$

In order to establish the CLT in \eqref{TCLCramer}, the following steps will be verified, for each $y$, $h_n\to 0$ and $nh_n\to\infty$:
\begin{enumerate}
\item The asymptotic variance converges, 
$\tilde{V}_{i,n}(y)= nh_n\, \mathrm{Var}(\tilde{l}_n(y)) \to V_i(y)$.
\item The asymptotic normality of the truncated kernel estimator $\tilde{S}_n(y)=\sqrt{nh_n}\,(\tilde{l}_n(y) - \mathbb{E}(\tilde{l}_n(y)))$; i.e.  
$\tilde{S}_n(y)\to\mathcal{N}(0,V_i(y))$.
\item The estimator $\hat{S}_n(y)=\sqrt{nh_n}\,(\hat{l}_n(y) - \mathbb{E}(\hat{l}_n(y)))$ satisfies that
$\mathbb{E}\left( \hat{S}_n(y)-\tilde{S}_n(y)\right)^2\to0$.
\end{enumerate}

\noindent\textbf{Step 1}: Note that
$
\mathrm{Var}(\tilde{l}_n(y))=\frac{1}{nh_n^2}\mathrm{Var}(T_{0,k})
$
and by i) in Lemma \ref{asympt_cov}
\begin{equation*}
nh_n\mathrm{Var}(\tilde{l}_n(y))\approx (a^2f_i(y)\!+\!2abg_i(y)\!+\!b^2{g_2}_i(y))\|K\|_2^2\! + \! o(h_n),
\end{equation*}
then  for $nh_n\to \infty$ and $h_n\to 0$, 
$\tilde{V}_{i,n}(y)\to V_i(y)=(a^2f_i(x)+2abg_i(x)+b^2{g_2}_i(x))\|K\|_2^2$.

\noindent\textbf{Step 2}: In order to prove the CLT, we define the centered random variable $ D_k=T_ {k,n} -\mathbb{E}(T_ {k, n})$ and 
denote by $Q$ the associate quantile function. The sequence $\{D_k\}_{k\geq0}$  satisfies the following conditions:
\begin{itemize}
\item[i)] $\{D_k\}_{k\geq0}$ is a strictly stationary sequence and strongly $\alpha$-mixing.
 \item[ii)] In virtue of the assumptions  E3 and M2,  $\mathbb{E}(|D_k|^s)<\infty$, for some $s>2$.
 \item[iii)] For $\alpha^{-1}(u)=\inf\{k\in\mathbb{N}:\ \alpha_k\leq u \}$, we have
\begin{equation*}\label{DM}
\int_{0}^1\alpha^{-1}(u)Q^2(u)du<\infty. 
\end{equation*}
This condition is implied by
\begin{equation*}\label{condicionTCL}
\sum_{i\geq0}(i+1)^{\frac{2}{s-2}}\alpha_i<\infty,
\end{equation*}
and in our case this is valid, since from geometric $\alpha$-mixing property (see section \ref{strongmixing}) we have that there exist $c>0$ and $0<\zeta<1$ such that the mixing coefficients $\alpha_i\leq c \zeta^ i$, then
\begin{equation*}
\sum_{i\geq0}(i+1)^{\frac{2}{s-2}}\alpha_i\leq c\sum_{i\geq0}(i+1)^{\frac{2}{s-2}}\zeta^i<\infty.
\end{equation*}
\end{itemize}

Thus, applying Theorem 4.2. in \cite{Rio1}, we obtain the weak convergence of $\tilde{S}_n(y)$ to a $\mathcal{N}(0,V_i(y))$.

\noindent\textbf{Step 3}: Using Tran's truncation technique, for $1< u_n< n$, by Lemma \ref{asympt_cov2} we have
\begin{eqnarray*}
\nonumber
\mathbb{E}\left(\hat{S}_n-\tilde{S}_n\right)^2 &=& \frac{1}{nh_n}\mathrm{Var}\left(\sum_{k=0}^{n-1}R_{k,n}\right)\\ &=&\frac{1}{h_n}\mathrm{Var}(R_{0,n})+ \frac{2}{nh_n}\sum_{k=1}^{u_n}(n-k)\mathrm{Cov}(R_{0,n},R_{k,n})+ \frac{2}{nh_n}\sum_{k=u_n+1}^{n-1}(n-k)\mathrm{Cov}(R_{0,n},R_{k,n})\\
&\preceq & b^{2}\mathbb{E}(|Y_1|^{s})\|K \|_{\infty}^2M_n^{2-s}h_n^{-1} +
2u_n h_n\left[b^2r_{i,k}(y,y)\|\Phi\|_{\infty}+o(h_n)\right]+ 2b^{2}\mathbb{E}(|Y_1|^{s})\|K \|_{\infty}^2 nh_n^{-1} M_n^{2-s}.
\end{eqnarray*}
Taking $u_n=1/h_n \log n$, $h_n=n^{-d}$ and $M_n=M_0n^{\gamma}$, with $0<d<1$ and $\gamma >0$ such that  $(1+d)+(2-s)\gamma < 0$, then 
\begin{equation*}
\mathbb{E}\left(\hat{S}_n-\tilde{S}_n\right)^2 = O(1/\log(n)).
\end{equation*}

This implies that $\hat{S}_n(y)\to\mathcal{N}(0,V_i(y))$.
Moreover, from Lemma 2.4 in \cite{Rico-Lisandro-Luis} we have 
\begin{equation*}
\sup_{y\in\mathrm{C}}|\mathbb{E}\hat{g}_{i,n}(y)-g_i(y)|=O(h_n^2),\quad \mbox{ and } \quad \sup_{y\in\mathrm{C}}|\mathbb{E}\hat{f}_{i,n}(y)-f_i(y)|=O(h_n^2).
\end{equation*}
Therefore, for $d>1/5$, the CLT in \eqref{TCLCramer} is proved.

Finally, we use the fact that $a^2f_i(y)+2abg_i(y)+b^2{g_2}_i(y)$ can be written 
as a quadratic form
\begin{equation*}
(a^2f_i(y)+2abg_i(y)+b^2{g_2}_i(y))\!=\!(a,b)\left(\!\!\begin{array}{cc}
                                          f_i(y) & g_i(y)\\
                                          g_i(y) & {g_2}_i(y)
                                         \end{array}\!\!\right)\left(\!\!\begin{array}{c}
                                                            a\\
                                                            b
                                                           \end{array}\!\!\right).
\end{equation*}
Thus, the Cram\'er-Wold device implies that 
\begin{equation*}
\sqrt{nh_n}(\hat{f}_{i,n}(y)-f_i(y),\hat{g}_{i,n}(y)-g(y))\to\mathcal{N}(0,\Sigma_i(y)) 
\end{equation*}

On the other hand, Coulomb's decomposition gives
\begin{equation*}
\hat{r}_{i,n}-r_i=\frac{\hat{g}_{i,n}f_i-\hat{f}_{i,n}g_i}{f_i^2}+\left(\frac{1}{\hat{f}_{i,n}}-\frac{1}{f_i}\right)(\hat{g}_{i,n}-g_i) -\left(\frac{r_i}{\hat{f}_{i,n}}-\frac{g_i}{f_i^2}\right)(\hat{f}_{i,n}-f_i).
\end{equation*}
Then, setting  $a=-g_i/f_i^2$, $b=1/f_i$, since 
\begin{eqnarray*}
\hat{g}_{i,n}f_i-\hat{f}_{i,n}g_i= (\hat{g}_{i,n}-g_i)f_i-(\hat{f}_{i,n}-f_i)g_i,
\end{eqnarray*}
we deduce from the Cram\'er-Wold device \eqref{TCLCramer} that 
\begin{eqnarray*}
\sqrt{nh_n}\, \frac{\hat{g}_{i,n}(y)f_i(y)-\hat{f}_{i,n}(y)g_i(y)}{f_i^2(y)}\to \mathcal{N}\left(0,\frac{\sigma_i^2(y)\|K\|^2_2}{f_i(y)}\right).
\end{eqnarray*}

The vectorial CLT  \eqref{TCLvectorial} implies
\begin{equation*}
\hat{f}_{i,n}(y)=f_i(y)+O_{\mathbb{P}}\left(\frac{1}{\sqrt{nh_n}}\right), \;\; \mbox{ and } \;\; \hat{g}_{i,n}(y)=g_i(y)+O_{\mathbb{P}}\left(\frac{1}{\sqrt{nh_n}}\right).
\end{equation*}

According to Lemma 2.4 in \cite{Rico-Lisandro-Luis} we have
\begin{equation}\label{fpositive}
|\hat{f}_{i,n}(y)|\geq |f_i(y)| - |\hat{f}_{i,n}(y)-\mathbb{E}\hat{f}_{i,n}(y)|
-|\mathbb{E}\hat{f}_{i,n}(y)-f_i(y)|
\geq \frac{1}{2}|f_i(y)| >0.
\end{equation}
Moreover, following \cite[Lemmas 2.3 and 2.4]{Rico-Lisandro-Luis} we have $\hat{f}_{i,n}(y)-f_i(y) \to 0, \, a.s.$ Then, from inequality \eqref{fpositive} we have
\begin{equation*}
\frac{1}{\hat{f}_{i,n}(y)}-\frac{1}{f_i(y)} \to 0, \quad a.s.
\end{equation*}
Therefore, 
\begin{equation*}
\sqrt{nh_n}\left(\frac{1}{\hat{f}_{i,n}(y)}-\frac{1}{f_i(y)}\right)(\hat{g}_{i,n}(y)-g_i(y))\to0\quad \mbox{in probability},
\end{equation*}
and 
\begin{equation*}
\sqrt{nh_n}\left(\frac{r_i(y)}{\hat{f}_{i,n}(y)}-\frac{g_i(y)}{f_i^2(y)}\right)(\hat{f}_{i,n}(y)-f_i(y))\to0\quad \mbox{in probability},
\end{equation*}
since $g_i=r_if_i$. Thus, the proof is complete. 
\cqd

%%%%%%%%%%%%%%%% Proof of Lemma 3.3 
{\bf Proof of Lemma \ref{cond_martingale}}. In order to obtain the condition $i)$ is sufficient to prove that $\varsigma_t$ has 
a conditional  moment of order $s>2$, in fact by H\"older inequality
\begin{equation}\label{inq_interp}
\mathbb{E}(\|\varsigma_{t}\|^2 \car(\|\varsigma_{t}\| > c)|\mathcal{F}_{t-1}) \leq
\frac{\mathbb{E}(\|\varsigma_{t}\|^s|\mathcal{F}_{t-1})}{c^{s-2}}.
\end{equation}

Following the proof of \cite[Theorem 3.1]{Rico-Lisandro-Luis} we obtaing a.s. the following inequality for the conditional second moment of $\|\varsigma_{t}\|$ 
\begin{equation*}
\mathbb{E}(\|\varsigma_{t}\|^2|\mathcal{F}_{t-1})
 \leq \| \Psi(\theta^{t-1})\|^2,
\end{equation*}
where $\Psi(\theta)=(\Psi_1(\theta),\ldots,\Psi_m(\theta))$ and
\begin{equation*}
\Psi_i(\theta)= \frac{1}{nh}\sum_{k=0}^{n-1}(Y_{k+1}-\theta_i)K_h\left(y-Y_k\right).
\end{equation*}
Analogously, we can obtain a similar inequality for the moment of order 4. Thus,  using an argument of interpolation between $L^2$ and $L^4$ spaces we obtain, for $2\leq s\leq4$,
\begin{equation}\label{inq_unif}
\mathbb{E}(\|\varsigma_{t}\|^s|\mathcal{F}_{t-1})\leq J_s \| \Psi(\theta^{t-1})\|^s,
\end{equation}
where $J_s$ is a positive constant. 

Since the closure of $\{\theta^t\}$ is a compact subset of $\Theta$ then $\|\Psi\|_{\infty}<\infty$ and
\begin{equation*}
\limsup_{t \to\infty} \mathbb{E}(\|\varsigma_{t}\|^s|\mathcal{F}_{t-1})\leq J_s \|\Psi\|_{\infty}^s.
\end{equation*}
Thus, from \eqref{inq_interp} and \eqref{inq_unif} condition $i)$ is verified.

On the other hand, from definition of $\varsigma_t$, the $(i,j)-$th element of the matrix $\Gamma^t=\mathbb{E}[\varsigma_t\varsigma_t^T|\mathcal{F}_{t-1}]$ is
\begin{equation*} 
\Gamma^t_{i,j}\!=\!\frac{4}{n^2h^2}\!\!\!\sum_{k,k'=0}^{n-1}\!\!(Y_{k+1}\!-\!\theta_i^{t-1})\!(Y_{k'+1}
\!-\!\theta_j^{t-1})K_h(y\!-\!Y_k)K_h(y\!-\!Y_{k'})\chi^t_{i,j}(k,k'),
\end{equation*}
where $
\chi^t_{i,j}(k,k')=\mathrm{Cov}(\!\car_i(X_{k+1}^{t}\!),\!\car_j(X_{k'+1}^{t}\!)|\mathcal{F}_{t-1})$.

Finally, from Remark \ref{covBernoulli} and the a.s. convergence of $\theta^{t}$ to $\theta^{*}$ given in  \cite[Theorem 3.1]{Rico-Lisandro-Luis}  we have the convergence in probability of $\Gamma_{i,j}^t$ to $\Gamma_{i,j}$ when $t\to\infty$, with
\begin{equation*} 
\Gamma_{i,j}=\frac{4}{n^2h^2}\!\sum_{k,k'=0}^{n-1}\!(Y_{k+1}-\theta_i^{*})\!(Y_{k'+1}
-\theta_j^{*})K_h(y-Y_k)K_h(y-Y_{k'})\chi_{i,j}(k,k').
\end{equation*} 
\cqd

%%%%%%%%%%%%%%%% Proof of Theorem 3.2
{\bf Proof of Theorem \ref{tasadeconvergencia}}.We denote $H=\nabla_\theta^2 u(y,Y_{0:n},\theta^*)$.
Consider the sequence $\{M_t\}_{t\geq0}$ defined by 
\begin{equation*}\label{rmlineal}
M_t=\left(I-\gamma_t H\right)M_{t-1}+\gamma_t\varsigma_t,
\end{equation*}
with $M_0=0$. This suggests that the error can be decompose in the form $\Delta_t=M_t+R_t$, where 
\begin{equation*}
R_t=\prod_{k=1}^{t}\left(I-\gamma_k H \right)\Delta_0.
\end{equation*}

Thus, $\overline{\Delta}_t= \overline{M}_{t}+\overline{R}_t$ where $\overline{M}_{t}=t^{-1}\sum_{k=1}^tM_{k}$ and $\overline{R}_t=t^{-1}\sum_{k=1}^tR_k$.
By applying Lemmas 1 and 2 of \cite{Polyakuditsky}, we can rewrite
\begin{equation*}
{\sqrt{t}}\;\overline{\Delta}_{t}=\frac{1}{\sqrt{t}\gamma_0}\nu_t {\Delta}_{0}+
\frac{1}{\sqrt{t}}\sum_{k=1}^{t} H^{-1}\varsigma_k+
\frac{1}{\sqrt{t}}\sum_{k=1}^{t}\omega_k^t\varsigma_k,
\end{equation*}
where 
\begin{equation*}
\omega_k^t=\nu_k^t-H^{-1}, \quad \nu_k^t=\gamma_k \sum_{i=k}^{t}\prod_{j=k+1}^i(I-\gamma_j H), \quad \nu_t=\nu_0^t,
\end{equation*} 
and setting $\prod_{j=k+1}^k(I-\gamma_j H)=I$. 

The matrix $\nu_t$ and $\omega_k^t$ are such that 
$\|\nu_t\|\leq c$, $\|\omega_k^t\|\leq c$ for some  $c<\infty$, and  
$\lim_{t\to\infty}\frac{1}{t}\sum_{k=1}^{t}\|\omega_k^t\|=0
$. Furthermore, as $\|\nu_t\|\leq c$, it also holds 
\begin{equation*}
\lim_{t\to\infty}\frac{1}{\sqrt{t}\gamma_0}\nu_t {\Delta}_{0}=0.
\end{equation*}

On the other hand, from condition \eqref{M2Marting} we has  
\begin{equation*}
\lim_{t\to\infty}\mathbb{E}\left(\frac{1}{\sqrt{t}}\sum_{k=1}^{t}\omega_k^t \varsigma_k\right)^2
\leq c \lim_{t\to\infty}\frac{1}{t}\sum_{k=1}^{t}\|\omega_k^t\|=0.
\end{equation*}
Thus, the weak convergence to the normal distribution is dominated by the term
\begin{equation*}
\frac{1}{\sqrt{t}}\sum_{k=1}^{t} H^{-1}\varsigma_k.
\end{equation*}

Finally, from Lemma \ref{cond_martingale} and Cesaro's Theorem we can infer the following conditions for the Martingales Central Limit Theorem:
\begin{itemize}
 \item[i)]
$\limsup\limits_{t\to\infty}\frac{1}{t}
\sum\limits_{k=1}^{t}\mathbb{E}\left(\|H^{-1}\varsigma_k\|^2
\car\left(|H^{-1}\varsigma_k|> c\right)|\mathcal{F}_{t-1}\right)=0$,
in probability.
\item[ii)]
$\lim\limits_{t\to\infty}\frac{1}{t} \sum\limits_{k=1}^{t}
H^{-1}\mathbb{E}\left(\varsigma_k\varsigma_k^{T}|\mathcal{F}_{k-1}
\right)H^{-1}= H^{-1}\Gamma H^{-1}=\Sigma^*_n(y)$, in probability.
\end{itemize}
Thus, the weak convergence is obtained. \cqd

\bibliographystyle{plain}
\bibliography{arhmm}

\begin{thebibliography}{10}

\bibitem{Alexandrovich}
G.~Alexandrovich, H.~Holzmann, and A.~Leister.
\newblock Nonparametric identification and maximum likelihood estimation for
  hidden markov models.
\newblock {\em Biometrika}, 103(2):423--434, 03 2016.

\bibitem{Allman}
E.S. Allman, C.~Matias, and J.A. Rhodes.
\newblock {Identifiability of parameters in latent structure models with many
  observed variables}.
\newblock {\em Ann. Statist.}, 37:3099--3132, 2009.

\bibitem{Ango-Rico-Dupoiron}
P.~Ango-Nze, S.~Dupoiron, and R.~R\'\i os.
\newblock Subsampling under weak dependence conditions.
\newblock Working paper, Universit{\'e} de Lille 3 and Universidad Central de
  Venezuela, 2003.
\newblock Available at ResearchGate.

\bibitem{Balsells}
C.~Balsells-Rodas, Y.~Wang, and Y.~Li.
\newblock On the identifiability of switching dynamical systems.
\newblock In {\em In Forty-first International Conference on Machine Learning},
  pages 52639--2672, July 2024.

\bibitem{Bertin2025}
K.~Bertin, L.~Fermin, and M.~Padrino.
\newblock Adaptive estimation in regression models for weakly dependent data
  and explanatory variable with known density.
\newblock {\em Statistics}, pages 1--11, 2025.

\bibitem{cappe-moulines-ryden}
O.~Capp\'e, E.~Moulines, and T.~Ryd\'en.
\newblock {\em {\it Inference in Hidden Markov Models }}.
\newblock Springer-Verlag, 2005.

\bibitem{Chau}
T.~Chau, P.~Ailliot, and V.~Monbet.
\newblock An algorithm for non-parametric estimation in state--space models.
\newblock {\em Computational Statistics \& Data Analysis}, 153, 2021.

\bibitem{Chaumaray2024}
M.~Du~Roy de~Chaumaray, S.~El Koleim, M-P. Etienne, and M.~Matthieu.
\newblock Estimation of the order of non-parametric hidden {Markov} models
  using the singular values of an integral operator.
\newblock {\em Journal of Machine Learning Research}, 25(415):1--37, 2024.

\bibitem{douc}
R.~Douc, E.~Moulines, and T.~Ryd{\'e}n.
\newblock {Asymptotic properties of the maximum likelihood estimator in
  autoregressive models with Markov regime}.
\newblock {\em Ann. Statist.}, 32:2254--2304, 2004.

\bibitem{dmixing}
P.~Doukhan.
\newblock {\em {\it Mixing: Propierties and Examples.}}, volume~85.
\newblock Lecture Notes in Statist., 1994.

\bibitem{duflo}
M.~Duflo.
\newblock {\em {\it Algorithmes Stochastiques}}.
\newblock Springer-Verlag, Berlin, 1996.

\bibitem{gassiat5}
L.~Lehéricy E.~Gassiat, S. Le~Corff.
\newblock {Identifiability and Consistent Estimation of Nonparametric
  Translation Hidden Markov Models with General State Space}.
\newblock {\em The Journal of Machine Learning Research}, 21(1):4589 -- 4628,
  2020.

\bibitem{Marcano-Lisandro-Luis}
L.~Ferm\'in, J.~Marcano, and L.A. Rodr\'iguez.
\newblock {A proof of consistency of the MLE for nonlinear Markov-switching AR
  processes}.
\newblock {\em Statistics and Probability Letters}, 183:1--7, 2021.

\bibitem{Rico-Lisandro-Luis}
L.~Ferm\'in, R.~R\'ios, and L.A. Rodr\'iguez.
\newblock {A Robbins Monro algorithm for nonparametric estimation of NAR
  process with Markov-Switching: consistency}.
\newblock {\em Journal of Time Series Analysis}, 38(6):809--837, 2017.

\bibitem{gassiat4}
E.~Gassiat and J.~Rousseau.
\newblock {Nonparametric finite translation hidden Markov models and
  extensions}.
\newblock {\em Bernoulli}, 22(1):193–212, 2016.

\bibitem{Goldfeld}
S.~M. Goldfeld and R.~Quandt.
\newblock {A Markov Model for Switching Regressions}.
\newblock {\em Journal of Econometrics}, 1:3--16, 1973.

\bibitem{Hall-Zhou}
P.~Hall and X.-H. Zhou.
\newblock {Nonparametric estimation of component distributions in a
  multivariate mixture}.
\newblock {\em Ann. Stat.}, 31(1):201–224, 2003.

\bibitem{Hamilton}
J.D. Hamilton.
\newblock {A new approach to the economic analysis of non stationary time
  series and the business cycle}.
\newblock {\em Econometrica}, pages 357--384, 1989.

\bibitem{Harel-Puri}
M.~Harel and M.L. Puri.
\newblock {Universally consistent conditional U-statistics for absolutely
  processes and its applications for hidden Markov models.}
\newblock {\em Annals of the Institute of Statistical Mathematics}, 56
  (4):819--832, 2004.

\bibitem{Kruskal}
J.~B. Kruskal.
\newblock {Three-way arrays: rank and uniqueness of trilinear decompositions,
  with application to arithmetic complexity and statistics}.
\newblock {\em Linear Algebra and its Applications}, 18(2):95--138, 1977.

\bibitem{Nademi}
A.~Nademi and Y.~Nademi.
\newblock Forecasting crude oil prices by a semiparametric {Markov} switching
  model: {OPEC}, {WTI}, and {Brent} cases.
\newblock {\em Energy Economics}, 74:757--766, August 2018.

\bibitem{Polyakuditsky}
B.~T. Polyak and A.~B. Juditsky.
\newblock {Acceleration of stochastic approximation by averaging}.
\newblock {\em SIAM J. Control Optim.}, 30:838--855, 1992.

\bibitem{Rio1}
E.~Rio.
\newblock {\em {\it Th\'eorie asymptotique des processus faiblement
  d\'ependents}}, volume~31.
\newblock Springer-SMAI: Paris., 2000.

\bibitem{luis1}
R.~R\'{\i}os and L.~A. Rodr\'{\i}guez.
\newblock Estimaci\'on semiparam\'etrica en procesos autorregresivos con
  r\'egimen de markov.
\newblock {\em Divulgaciones Matem\'aticas}, 16(1):155--171, 2008.

\bibitem{YaoRe}
J.~Yao.
\newblock {On Recursive Estimation in Incomplete Data Models}.
\newblock {\em Statistics}, 34:27--51, 2000.

\bibitem{Yao}
J.~Yao and J.~G. Attali.
\newblock {On stability of nonlinear AR process with Markov switching}.
\newblock {\em Adv. Applied Probab}, 32:394--407, 1999.

\end{thebibliography}

\end{document}